\documentclass{jpp}

\usepackage{amsmath, amsfonts,amssymb}
\usepackage{cancel}
\usepackage{xcolor}

\usepackage{graphicx}

\usepackage{epstopdf, epsfig}

\usepackage{bm}
\usepackage{hyperref}

\usepackage{natbib}
\usepackage{appendix}
\usepackage{enumitem} 

\newtheorem{definition}{Definition}

\newtheorem{theorem}{Theorem}
\newtheorem{lemma}[theorem]{Lemma}

\newcommand{\w}{\wedge }
\newcommand{\tp}{\tilde{p}}
\newcommand{\tA}{\tilde{A}}
\newcommand{\partiald}[2]{\frac{\partial#1}{\partial#2}}

\newcommand{\iotab}{{\iota\hspace{-0.6em}\raisebox{0.5pt}{\,--}}}

\newcommand{\mcal}[1]{\mathcal{#1}}

\newcommand{\bT}{\mathbb{T}}
\newcommand{\bR}{\mathbb{R}}
\newcommand{\cB}{\mcal{B}}
\newcommand{\cF}{\mcal{F}}
\newcommand{\cG}{\mcal{G}}
\newcommand{\cI}{\mcal{I}}
\newcommand{\cQ}{\mcal{Q}}
\newcommand{\cA}{\mcal{A}}

\newcommand{\cV}{\mcal{V}}

\newcommand{\dmat}{\mathbb{D}}

\newcommand{\depvar}{\bm{w}}

\newcommand{\Eq}[1]{(\ref{eq:#1})}
\newcommand{\Th}[1]{Th.~\ref{thm:#1}}
\newcommand{\Lem}[1]{Lem.~\ref{lem:#1}}

\newcommand{\Def}[1]{Def.~\ref{def:#1}}
\newcommand{\Sec}[1]{\S \ref{sec:#1}}

\newcommand{\App}[1]{App.~\ref{app:#1}}

\newcommand{\Step}[1]{Step~\ref{step:#1}}

\shorttitle{All field strengths are possible locally in Boozer coordinates}
\shortauthor{J.W. Burby, N. Duignan, T.J. Klotz, and J.D. Meiss}

\title{All field strengths are possible locally in Boozer coordinates}

\author{J.W. Burby\aff{1}, N. Duignan \aff{2}, T.J. Klotz \aff{3}\corresp{\email{taylor.klotz@colorado.edu}} and J.D. Meiss\aff{3}}

\affiliation{\aff{1}Department of Physics and Institute for Fusion Studies, The University of Texas at Austin, Austin, TX 78712, USA,
\aff{2}School of Mathematics and Statistics, University of Sydney, Sydney, Australia
\aff{3}Department of Applied Mathematics, University of Colorado, Boulder, CO, 80309-0526, USA}

\begin{document}

\maketitle

\begin{abstract}
The magnetic field strength is a central design object in the theory of magnetic confinement of plasma. Desirable confinement properties such as quasi-symmetry and omnigenity are characterized by special forms of the field strength when it is given in the particular coordinate system called ``Boozer coordinates." We ask a basic realizability question: which functions of Boozer coordinates can arise as the magnitude of a magnetic field that is written in Boozer coordinates? Using exterior differential systems and the Cartan-K\"ahler theorem, we show that every positive analytic function is locally realizable. Thus there is no local analytic obstruction to prescribing the field strength in Boozer coordinates; though further restrictions might arise from global geometric or additional physical constraints.
\end{abstract}

\keywords{Plasma Confinement, Fusion Plasma}
 
\section{Introduction: Definition of realizability}\label{sec:Intro}

\subsection{Motivation}

The magnetic-field strength $|\bm{B}|$ is a central design object in modern stellarator theory and optimization. The confinement of guiding-center trajectories \cite{Littlejohn_1983,Littlejohn_1981,Northrop_1963} is strongly governed by the variation of $|\bm{B}|$ on magnetic surfaces, and several of the most important classes of magnetic fields with favorable confinement properties are naturally characterized by enforcing special functional forms of $|\bm{B}|$ \cite{Helander14,Imbert25}.
These properties can be most easily stated by specifying $|\bm{B}|$ as a function, $\cB(\psi,\theta,\zeta)$, in terms of  \textit{Boozer coordinates} \cite{Boozer_pf_1981} $(\psi,\theta,\zeta)$ (defined more concretely below in \Def{Boozer}).

In particular, quasi-symmetry requires $\cB$ to depend on a single linear combination of the Boozer angles $(\theta,\zeta)$ \cite{boozerTransportIsomorphicEquilibria1983}. Similarly, quasi-isodynamic \cite{goriQuasiisodynamicStellarators1997}, isodrastic \cite{Burby23b}, omnigenous \cite{caryOmnigenityQuasihelicityHelical1997}, and isoprominent \cite{Burby23} configurations impose geometric restrictions on the contours and critical sets of $\cB$. Consequently, a substantial part of stellarator design can be viewed as the problem of engineering a desirable magnetic-field strength in Boozer coordinates.  This viewpoint underlies both near-axis constructions \cite{senguptaStellaratorEquilibriumAxisexpansion2024,jorgeNearaxisExpansionStellarator2020,landremanMappingSpaceQuasisymmetric2022,garrenExistenceQuasihelicallySymmetric1991} that relate a prescribed field strength to the geometry of magnetic surfaces \cite{landremanDirectConstructionOptimized2018} and numerical stellarator optimization.  For example, highly quasi-symmetric configurations have been obtained by minimizing symmetry-breaking Fourier components of $|\bm{B}|$
in Boozer coordinates \cite{landremanMagneticFieldsPrecise2022}, while more recent optimization schemes explicitly target functions $\cB$ with quasi-isodynamic or omnigenous properties, and explicitly minimize the difference between a target field strength and that of the computed equilibrium \cite{goodmanConstructingPreciselyQuasiisodynamic2023,dudtMagneticFieldsGeneral2024}.

These approaches pose a basic inverse problem that precedes questions of optimization, force balance, stability, or coil design: 
\begin{center}
    \emph{Which functions $\cB(\psi,\theta,\zeta)$ can actually occur as the magnitude of a magnetic field in Boozer coordinates?}
\end{center} 
The relationship between a prescribed $\cB$ in Boozer coordinates and the corresponding magnetic field and its geometry in $\bR^3$ is highly nonlinear, even for near-axis expansions. Moreover, numerical studies often note that the realizability of a prescribed field-strength is a fundamental issue \cite{landremanDirectConstructionOptimized2018,dudtMagneticFieldsGeneral2024}.

In this paper, we isolate the local, geometric part of the question concerned with realizing $\cB$ in Boozer coordinates, leaving open the further questions of global realizability and satisfying force-balance conditions.  Our main result, \Th{FTSD}, shows that---perhaps surprisingly---there is no local analytic obstruction: every positive analytic function $\cB(\psi,\theta,\zeta)$ is locally realizable as the magnitude of a magnetic field admitting Boozer coordinates. Moreover, we quantify how large the space of such realizing functions $\cB$ is: it depends on two functions of one variable and three functions of two variables.

\subsection{Precise formulation of the realizability problem}

To formulate the realizability problem precisely, we first recall the geometric setting in which Boozer coordinates arise. The treatment of magnetically confined plasmas is facilitated by the use of \emph{flux coordinates}, adapted to a foliation by invariant toroidal magnetic surfaces \cite{Kruskal58,Helander14}. Let $\psi$ label these surfaces and let $(\theta,\zeta)\in\bT^2$ denote poloidal and toroidal angles on each surface.  A magnetic field tangent to the flux surfaces can then be written in the form
\begin{equation}\label{eq:FluxCoord}
    \bm{B} = \nabla \chi \times \nabla (\theta - \iotab\zeta) \;.
\end{equation}
Here $2\pi \chi (\psi)$ is the toroidal magnetic flux and $\iotab(\psi)$ is the rotational transform. Of course, in this case $B \cdot \nabla{\psi} = 0$: the field is tangent to flux surfaces. Special cases of these coordinates include Hamada coordinates, where the $(\psi,\theta,\zeta)$ coordinate system is chosen so that its Jacobian,
\begin{equation}\label{eq:Jacobian}
    \rho = \nabla \psi \cdot \nabla \theta \times \nabla \zeta \;,
\end{equation} 
is a flux function, and symmetry coordinates where the general toroidal angle $\zeta$ is taken to be the toroidal symmetry coordinate. Flux coordinates form a foundation for studying the confinement of particles as well as the development of instabilities and flows in a confined plasma \cite{Hazeltine03, Imbert25}.

If a magnetic field satisfies the magnetohydrostatic (MHS) equation
\begin{equation}\label{eq:MHS}
    \bm{J} \times \bm{B} = \nabla p \;, 
\end{equation} 
for \textit{some} divergence free vector field $\bm{J}$ and a pressure $p(\psi)$, with $\nabla p \neq 0$ almost everywhere, we say it is \textit{MHS integrable}. Indeed, \Eq{MHS} implies that $[\bm{B},\bm{J}]=0$, $\bm{B} \cdot \nabla p = 0$ and $\bm{J} \cdot \nabla p = 0$, so $p$ is an integral of both field line flows and $\bm{B},\bm{J}$ commute. When $\bm{J} = \nabla \times \bm{B}$ is the current, Hamada showed MHS integrability implies that, in the neighborhood of any flux surface, there exist coordinates in which the contravariant components
\[
    B^{\theta}(\psi) =B \cdot \nabla \theta, \quad B^\zeta(\psi) = B \cdot \nabla \zeta \;,
\]
are flux functions \cite{Hamada_1962}. For Hamada coordinates the flows of both $\bm{B}$ and $\bm{J}$ are straight on each flux surface. Moreover, the ratio $\iotab = B^\theta/B^\zeta$ gves the rotational transform as a flux function. Such coordinates also exist more generally when $\bm{J}$ is any divergence free vector field \cite{Burby21}.

For the case of an MHS-integrable field, \cite{Boozer_pf_1981}  introduced an alternative coordinate system by requiring that the angular covariant components of $\bm{B}$ are flux functions, i.e., that
\begin{equation}\label{eq:BoozerCovar}
    \bm{B} = B_\theta(\psi) \nabla \theta + B_\zeta(\psi) \nabla \zeta + B_\psi(\psi,\theta,\zeta) \nabla \psi \;.
\end{equation}
In this case the vector fields $\bm{B}$ and $\bm{B} \times \nabla p = |\bm{B}|^2 J_\perp$ have straight field lines on each flux surface. More importantly for the present work, Boozer coordinates provide the natural coordinate system in which confinement properties such as quasi-symmetry and omnigenity are expressed directly as conditions on $|\bm{B}|$. For example, the notion of quasi-symmetry (QS) is when $|\bm{B}|=\cB(\psi,M\,\theta-N\,\zeta)$ for integers $M,N$ \cite{Helander14,Imbert25}.

Combining \Eq{BoozerCovar} with the flux coordinates \Eq{FluxCoord} gives:

\begin{definition}[Boozer Coordinates]\label{def:Boozer}
A vector field $\bm{B}$ admits \textbf{Boozer coordinates} $\psi \in (\psi_l,\psi_u)$ and $(\theta,\zeta) \in \bT^2$ for a flux interval $(\psi_l, \psi_u)$, if there exist smooth flux functions $F,G,K,L :(\psi_l, \psi_u) \to \bR$ such that 
\begin{align}
    \bm{B} &= F\nabla \psi \times \nabla \theta - G \nabla \psi \times \nabla \zeta \;, \label{eq:BoozerC1}\\
    \bm{B} \times \nabla \psi & = K \nabla \psi \times \nabla \theta - L\nabla \psi \times \nabla \zeta \;.\label{eq:BoozerC2}
\end{align}
\end{definition}

There are several immediate consequences of this definition. Equation \Eq{BoozerC1} implies that $B^\theta = G \rho$ and $B^\zeta = F \rho$, for the Jacobian \Eq{Jacobian}. Since $\rho$ is not assumed to be a flux function, this contrasts with the case of Hamada coordinates. However \Eq{BoozerC2} implies that $B_\theta = -K$ and $B_\zeta = L$, the flux functions in the covariant form \Eq{BoozerCovar}

In \Def{Boozer} we did not assume that $\nabla \cdot \bm{B}$ vanishes; nevertheless, any vector field that satisfies \Eq{BoozerC1}-\Eq{BoozerC2} is necessarily divergence free.  Indeed from \Eq{BoozerC1}
\[
    \nabla \cdot \bm{B} = \nabla F \cdot \nabla \psi \times \nabla \theta- \nabla G \cdot \nabla \psi \times \nabla \zeta = 0
\]
since $F$ and $G$ are flux functions.  Moreover, whenever $\nabla \psi \neq 0$,  \Eq{BoozerC1} implies that $\bm{B}$ is tangent to the flux surface. So is the current $\bm{J} = \nabla \times B$ since 
\[
    J^\psi = \nabla \psi \cdot J  =  \nabla \psi \cdot (\nabla \times \bm{B}) = \nabla \cdot(\bm{B} \times \nabla \psi) \;,
\]
and \Eq{BoozerC2} gives
\[
    \nabla \cdot(\bm{B} \times \nabla \psi) = \nabla K \cdot \nabla \psi \times \nabla \theta- \nabla L \cdot \nabla \psi \times \nabla \zeta = 0 \;.
\]
Therefore $J^\psi = 0$.

Another consequence of \Def{Boozer} is
\begin{lemma}\label{lem:PositiveJacobian}
    Assume $(\psi,\theta,\phi)$ are Boozer coordinates for a nonzero field $\bm{B}$. Then 
    \begin{equation}\label{eq:positivity}
        |\bm{B}|^2 = \rho \Delta, \quad \Delta \equiv FL-GK\;.
    \end{equation}
    where $\rho$ is the Jacobian \Eq{Jacobian}. In particular, $\Delta > 0$.
\end{lemma}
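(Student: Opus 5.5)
The plan is to compute $|\bm{B}|^2 = \bm{B}\cdot\bm{B}$ by pairing \Eq{BoozerC1} with the covariant information in \Eq{BoozerC2}. We never need the metric explicitly. Everything reduces to the scalar triple product $\rho = \nabla\psi\cdot\nabla\theta\times\nabla\zeta$ and the vanishing of triple products with a repeated factor.

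The first step is to extract the covariant angular components of $\bm{B}$ from \Eq{BoozerC2}. Dotting \Eq{BoozerC2} with $\nabla\zeta$ gives
\[
(\bm{B}\times\nabla\psi)\cdot\nabla\zeta = K\rho,
\]
and the left side equals $\bm{B}\cdot(\nabla\psi\times\nabla\zeta) = -\bm{B}\cdot\nabla\zeta\times\nabla\psi$. Dotting \Eq{BoozerC2} with $\nabla\theta$ gives $(\bm{B}\times\nabla\psi)\cdot\nabla\theta = L\rho$, where I am tracking the sign via $\nabla\psi\times\nabla\zeta\cdot\nabla\theta = -\rho$. These identities are $\bm{B}\cdot(\nabla\psi\times\nabla\zeta)$ and $\bm{B}\cdot(\nabla\psi\times\nabla\theta)$ expressed through $K,L,\rho$. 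Concretely, from $(\bm{B}\times\nabla\psi)\cdot\bm{v} = \bm{B}\cdot(\nabla\psi\times\bm{v})$ I get
\[
\bm{B}\cdot(\nabla\psi\times\nabla\zeta) = K\rho, \qquad \bm{B}\cdot(\nabla\psi\times\nabla\theta) = L\rho,
\]
with signs to be checked carefully.

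The second step is to dot $\bm{B}$ with \Eq{BoozerC1}:
\[
|\bm{B}|^2 = F\,\bm{B}\cdot(\nabla\psi\times\nabla\theta) - G\,\bm{B}\cdot(\nabla\psi\times\nabla\zeta) = \rho\,(FL - GK) = \rho\Delta .
\]
This gives \Eq{positivity}. The only delicate point is the orientation and sign bookkeeping in the triple products, so that the combination comes out exactly as $FL-GK$ rather than its negative. That is the main place I expect an error could creep in, and I will verify it by the cyclic identities.

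For positivity, note that $\rho$ is nonvanishing because $(\psi,\theta,\zeta)$ is a coordinate system. Since $\bm{B}\neq 0$, $|\bm{B}|^2>0$, so $\rho\Delta>0$, and $\Delta$ is nowhere zero. Moreover $\Delta$ depends only on $\psi$. Taking coordinates to be positively oriented (right-handed, $\rho>0$), as is implicit in the convention \Eq{Jacobian}, we conclude $\Delta>0$. If the orientation convention is not assumed, the statement should instead be read as: $\rho$ and $\Delta$ have the same sign, and $\Delta>0$ after orienting the coordinates so that $\rho>0$.
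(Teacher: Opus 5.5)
Your proof is correct. The signs you flagged do come out as you wrote them: $(\bm{B}\times\nabla\psi)\cdot\nabla\theta = -L\,(\nabla\psi\times\nabla\zeta)\cdot\nabla\theta = L\rho$ and $(\bm{B}\times\nabla\psi)\cdot\nabla\zeta = K\rho$. Hence $|\bm{B}|^2 = \rho(FL-GK)$ with no ambiguity, and the hedge can be dropped.

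Your route differs mildly from the paper's. The paper works at the vector level. It crosses the two representations to get $\bm{B}\times(\nabla\psi\times\bm{B}) = \Delta\,(\nabla\psi\times\nabla\theta)\times(\nabla\psi\times\nabla\zeta) = \rho\Delta\,\nabla\psi$. It then compares this with $\bm{B}\times(\nabla\psi\times\bm{B}) = |\bm{B}|^2\nabla\psi$, which uses $\bm{B}\cdot\nabla\psi=0$, and cancels $\nabla\psi\neq 0$.

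Your scalar computation is in effect the contraction $|\bm{B}|^2 = B^\theta B_\theta + B^\zeta B_\zeta = (G\rho)(-K) + (F\rho)L$. It uses the component relations the paper records right after \Def{Boozer}. It is slightly more direct, since it needs neither $\bm{B}\cdot\nabla\psi = 0$ nor the cancellation of $\nabla\psi$; only $\rho\neq 0$ enters, in the positivity step.

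The positivity argument is the same as the paper's, which also takes $\rho>0$ from right-handedness. Your remark that without this convention one only gets that $\Delta$ and $\rho$ share a sign is a fair sharpening.
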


\begin{proof}   
    On one hand, since $\bm{B} \cdot \nabla \psi = 0$, the identity 
    $\bm{a}\times (\bm{b} \times \bm{c}) = (\bm{a}\cdot\bm{c}) \bm{b} - (\bm{a}\cdot \bm{b}) \bm{c}$ gives
    \[
        \bm{B}\times (\nabla \psi \times \bm{B}) = |\bm{B}|^2 \nabla \psi \;.
    \]
    On the other hand, the identity 
    $(\bm{a} \times \bm{b})\times (\bm{a}\times\bm{c}) = (\bm{a}\cdot (\bm{b}\times\bm{c})) \bm{a}$ and \Eq{BoozerC1}-\Eq{BoozerC2} give
    \begin{align*}
        \bm{B}\times (\nabla \psi \times \bm{B}) &=  \Delta\left(\nabla \psi \times \nabla \theta \right) \times \left(\nabla \psi \times \nabla \zeta \right) \\
         &= \rho \Delta \nabla \psi  \;.
    \end{align*}
    Since $\psi$ is a coordinate function, it must be that $\nabla \psi \neq 0$ and therefore equating these two results  gives \Eq{positivity}. Finally, for $(\psi,\theta,\phi)$ to be a well-defined, right-handed coordinate system, then the determinant of the Jacobian, $\rho$ must be positive. It follows that $\Delta = FL-GK > 0$. 
\end{proof}

We are now in a position to give a definition of what we will call the \textit{Boozer realizability problem}. The idea is that we would like to specify the magnitude $|\bm{B}|$ in Boozer coordinates and ask the question: When is there a magnetic field that admits Boozer coordinates in which $|\bm{B}|$ has the specified form?
We will let $\bm{x}$ be Euclidean coordinates on $\bR^3$.
We suppose that a magnitude function $\cB(\psi, \theta,\zeta) > 0$ is given. 
Realizing $\cB$ in Boozer coordinates is equivalent to finding functions $\psi(\bm{x}),\theta(\bm{x}),\zeta(\bm{x})$ and $F,G,K$, and $L$ such that 
\begin{equation}\label{eq:BoozerPDE}
\begin{aligned}
    |\bm{B}(\bm{x})| &= \cB(\psi(\bm{x}), \theta(\bm{x}), \zeta(\bm{x}) ) \;, \\
     \bm{B}(\bm{x}) &= F(\psi(\bm{x}))\nabla \psi(\bm{x}) \times \nabla \theta(\bm{x}) - G(\psi(\bm{x})) \nabla \psi(\bm{x}) \times \nabla \zeta(\bm{x}) \;, \\
     \bm{B} \times \nabla \psi & = K(\psi(\bm{x})) \nabla \psi(\bm{x}) \times \nabla \theta(\bm{x}) - L(\psi(\bm{x}))\nabla \psi(\bm{x}) \times \nabla \zeta(\bm{x}) \;. 
\end{aligned}
\end{equation}

Note that by \Lem{PositiveJacobian} we will then have $\Delta = FL-GK >0$.

\begin{definition}[Local Boozer realizability]\label{def:Realizability}
Let $(\psi,\theta,\zeta) \in \cA$, an open subset of $\bR^3$. The magnitude ${\cB}:\cA \rightarrow \bR^+$ is \textbf{locally realizable} in Boozer coordinates if there is an open set $Q\subset\bR^3$ and a vector field $\bm{B}:Q\rightarrow\bR^3$ such that
there exists a coordinate change $\Phi: Q \to \cA$ such that
\[
    \Phi(\bm{x}) = (\psi(\bm{x}), \theta(\bm{x}), \zeta(\bm{x}))
\]
satisfies \Eq{BoozerPDE} for some flux functions $F,G,K$, and $L$.
\end{definition}

Note that we specifically allow \textit{any} open set $\cA$ in \Def{Realizability}. These coordinates would be \textbf{global} if they exist for a toroidal annulus, i.e.,  $\cA \to (\psi_l, \psi_u) \times \bT^2$, so that $Q$ becomes a toroidal annular region in $\bR^3$. 

Thus Boozer realizability can be stated as a question about the existence of a solution to the system \Eq{BoozerPDE}, where the unknowns are $\psi,\bm{B},F,G,K,L$. Note that $\cB$ must be positive: zeros of the magnetic field may preclude existence. A priori, the required existence of a solution to \Eq{BoozerPDE} may constrain the flux functions $F,G,K$, and $L$. However, our main result \Th{FTSD} establishes that \textit{any} choice of these functions suffices for local realizability, provided $\Delta > 0$.

A similar, and comparatively simpler, version of this question can be found in \cite[Section 4.2]{Burby23b}, where it is proven that a local version of what one might call \textit{Clebsch realizability} holds. The proof invokes the Cauchy-Kowalevski theorem (that we recall in \Sec{CKAlg}). 

We will show in \Sec{BoozerEDS}---\ref{sec:Proof} below, that any nonzero magnitude $\cB$ can be realized, at least locally:

\begin{theorem}\label{thm:FTSD}
Every positive, analytic function $\cB:\cA \to\bR^+$ and set of given analytic coefficients $F,G,K,L: (\psi_l, \psi_u) \to \bR$ that satisfy $\Delta >0$ is locally realizable in Boozer coordinates. Moreover, the system of Boozer coordinates $(\psi,\theta,\zeta) = \Phi(\bm{x})$ for a desired magnitude $\cB$ may be determined by a sequence of Cauchy problems with initial data depending on two functions of one variable and three functions of two variables.
\end{theorem}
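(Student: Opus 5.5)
The plan is to invert the problem: rather than seeking $\Phi(\bm{x})=(\psi,\theta,\zeta)$, I will seek the inverse embedding $\bm{x}(\psi,\theta,\zeta)$ from $\cA$ into $\bR^3$. The first step is to rewrite \Eq{BoozerPDE} as a first-order system for the three components of $\bm{x}$. Set $v = G\,\partial_\theta + F\,\partial_\zeta$.
\begin{itemize}[leftmargin=*]
\item By \Eq{BoozerC1}, the contravariant field is $\bm{B}=\rho^{-1}\bm{x}_v$, where $\bm{x}_v = G\bm{x}_\theta + F\bm{x}_\zeta$.
\item By \Eq{BoozerC2}, the covariant components are $B_\theta=-K$ and $B_\zeta=L$, with $B_\psi$ left free.
\item By \Lem{PositiveJacobian}, the Jacobian is forced to be $\rho=\cB^2/\Delta$.
\end{itemize}
Realizability is therefore equivalent to finding an analytic local diffeomorphism $\bm{x}$ with
\begin{equation*}
  \det D\bm{x} = \frac{\cB^2}{\Delta},\qquad
  \bm{x}_v\cdot\bm{x}_\theta = -K\frac{\cB^2}{\Delta},\qquad
  \bm{x}_v\cdot\bm{x}_\zeta = L\frac{\cB^2}{\Delta}.
\end{equation*}
I will first check that these three equations imply all of \Eq{BoozerPDE}. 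In particular, $|\bm{B}|=\cB$ should follow by reversing the computation in the proof of \Lem{PositiveJacobian}, using $F\cdot L - G\cdot K=\Delta$. This leaves three equations for three unknowns, each depending on $(\psi,\theta,\zeta)$.

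This system is fully nonlinear, and it is not obviously in Cauchy--Kowalevski normal form. The natural first choice of initial surface is $\psi=$const, but the last two equations involve only the derivatives along $v$ and the tangential derivatives, and not $\bm{x}_\psi$. So I will encode the system as an exterior differential system. The EDS lives on the space with coordinates $(\psi,\theta,\zeta,\bm{x},P)$, where $P$ is a $3\times3$ matrix standing for $D\bm{x}$. It consists of:
\begin{itemize}[leftmargin=*]
\item the contact forms $d\bm{x}-P\,(d\psi,d\theta,d\zeta)^T$;
\item the three algebraic constraints above imposed on $P$;
\item the independence condition $d\psi\wedge d\theta\wedge d\zeta\neq0$.
\end{itemize}
I will then compute the tableau, meaning the linearization of the constraints in $P$, and its Cartan characters along a suitably generic flag. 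The goal is $s_1=2$, $s_2=3$ (so $s_3=0$), which matches the count of two functions of one variable and three of two variables claimed in \Th{FTSD}. Next I will compute the torsion and verify Cartan's test, namely that the dimension of the first prolongation equals $s_1+2s_2+3s_3$. If the test fails at the first level, I will prolong once and repeat. The Cartan--K\"ahler theorem then gives a local analytic integral manifold through any admissible initial point. Its proof is exactly a nested sequence of Cauchy--Kowalevski problems: one problem on a curve, one on a surface, and one on the volume. This produces the "sequence of Cauchy problems" in the statement.

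The main obstacle will be showing that the symbol has the expected rank at every point and that the system is involutive, with no hidden compatibility conditions. This is the step where $\cB>0$ and $\Delta>0$ must be used. My approach is to choose the flag adapted to the frame $(\partial_\psi,\,v,\,w)$, where $w$ is transverse to $v$ within the $(\theta,\zeta)$ plane.
\begin{itemize}[leftmargin=*]
\item Along this frame, the two dot-product equations are linear in $\bm{x}_w$ and $\bm{x}_\psi$ once $\bm{x}_v$ is prescribed.
\item The determinant equation is linear in $\bm{x}_\psi$.
\end{itemize}
The required nondegeneracy should reduce to $\bm{x}_v$, $\bm{x}_\theta$ and $\bm{x}_\zeta$ being suitably independent, which $\rho>0$ guarantees. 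I also need to show that the algebraic constraints admit a real solution $P_0$ at a base point: any frame whose Gram data match $(-K\rho,\,L\rho)$ and whose determinant equals $\rho$ will do, and such a frame exists precisely because $\Delta>0$. Finally, inverting the resulting $\bm{x}$ near the base point gives the coordinate change $\Phi$ and the open set $Q$ required in \Def{Realizability}.
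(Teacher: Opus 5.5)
Your inverse formulation is a good idea, but it contains one slip and one genuine gap.

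\textbf{The slip.} The constants are inverted. The paper's $\rho=\nabla\psi\cdot\nabla\theta\times\nabla\zeta$ equals $1/\det D\bm{x}$, because $\nabla\psi\times\nabla\theta=\rho\,\bm{x}_\zeta$ and $\nabla\psi\times\nabla\zeta=-\rho\,\bm{x}_\theta$. Hence \Eq{BoozerC1} gives $\bm{B}=\bm{x}_v/\det D\bm{x}$, and \Lem{PositiveJacobian} forces $\det D\bm{x}=\Delta/\cB^2$. With your right-hand sides, $|\bm{x}_v|^2=G\,\bm{x}_v\cdot\bm{x}_\theta+F\,\bm{x}_v\cdot\bm{x}_\zeta=\cB^2$, so $|\bm{B}|=\Delta/\cB$ and your planned check fails. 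The correct system is $\det D\bm{x}=\Delta/\cB^2$, $\bm{x}_v\cdot\bm{x}_\theta=-K\Delta/\cB^2$, $\bm{x}_v\cdot\bm{x}_\zeta=L\Delta/\cB^2$, and this one really is equivalent to \Eq{BoozerPDE}.

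\textbf{The gap.} Your target $s_1=2$, $s_2=3$ is not a possible set of Cartan characters. Along a generic flag the characters are nonincreasing, and $s_1$ is at most the number of rows of the tableau, i.e.\ the number of unknowns. In the paper's convention, ``two functions of one variable and three of two variables'' means $s_1-s_2=2$ and $s_2-s_3=3$. The paper obtains $(5,3,0)$ from a five-row tableau: on $SQ\times\cA$ the direction $\bm{b}$ is an unknown alongside $\psi,\theta,\zeta$, the tableau is eight-dimensional, and the degree of indeterminacy is $11=5+2\cdot 3$. Your system has three unknowns and three independent equations, so $s_1\le 3$ and $s_1+s_2+s_3=6$. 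In fact its characters are $(3,3,0)$: the Cauchy data are three functions of two variables and none of one variable. Your route therefore cannot yield the second sentence of \Th{FTSD} as stated. It gives a different parametrization of the freedom, which agrees with the paper only at top order.

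\textbf{What your corrected system does give.} Once corrected, you need no exterior differential system at all, and your analysis of the ``main obstacle'' is off target. The dot-product equations are equivalent to $|\bm{x}_v|^2=\Delta^2/\cB^2$ and $\bm{x}_v\cdot\bm{x}_w=0$ with $w=L\,\partial_\theta+K\,\partial_\zeta$, so they contain no $\bm{x}_\psi$ at all. A short computation shows that the determinant of the principal symbol at a covector $\xi$ is $\pm 2\Delta(\det D\bm{x})^2\,\xi(v)^2\,(\nabla\psi\cdot\nabla_\xi)$, where $\nabla_\xi=\xi_\psi\nabla\psi+\xi_\theta\nabla\theta+\xi_\zeta\nabla\zeta$.

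Flux surfaces ($\xi=d\psi$, so $\xi(v)=0$) are characteristic, as you noticed. A slightly tilted surface is not: take $\xi=d\psi+\epsilon\,d\theta$ with small $\epsilon\neq 0$ if $G\neq 0$, or use $d\zeta$ if $F\neq 0$. The proof then runs as follows:
\begin{enumerate}
\item Choose an algebraic solution $P_0$ at a base point.
\item Prescribe analytic $\bm{x}$ on such a tilted surface, with tangential derivatives matching $P_0$.
\item Solve for the normal derivative near $P_0$ by the implicit function theorem.
\item Apply \Th{C-Kowalevski}.
\end{enumerate}
The resulting solution has $\det D\bm{x}=\Delta/\cB^2>0$; this is exactly where $\Delta>0$, i.e.\ right-handedness, enters. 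You can then invert $\bm{x}$ to obtain $\Phi$.

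This proves the existence half of \Th{FTSD} far more directly than the paper's sphere-bundle coframe, prolongation and torsion absorption. However, the theorem's specific count, including the two functions of one variable, comes from the paper's formulation, in which $\bm{b}$ is an independent unknown.
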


The proof relies on real analyticity upon invoking the Cauchy-Kowalevski theorem and its generalization, the Cartan-K\"ahler algorithm for \textit{exterior differential systems} (EDS). EDS have previously been used to study the local realizability of force-free magnetic fields with prescribed proportionality factor \cite{clellandBeltramiFieldsNonconstant2020}, and Cartan-K\"ahler methods have recently been proposed for the local existence problem associated with admissible quasi-symmetries \cite{burbyCharacterizationAdmissibleQuasisymmetries2025}.
We give a brief overview of this algorithm, together with simple examples in \Sec{CKAlg}---\ref{sec:Examples}. It is important to note that the Cartan-K\"ahler theorem implies local existence, but the theory does not necessarily preclude extensions of such solutions globally. 

Our proof of \Th{FTSD} uses differential form notation (for a tutorial see \cite{MacKay20a}). To reformulate the PDE system \Eq{BoozerPDE} in terms of forms,
recall that the covariant representation for coordinates $(\xi^1,\xi^2,\xi^3)$ is $\bm{B}^\flat = B_i d\xi^i$, and the contravariant representation can be expressed as $\star \bm{B}^\flat = \iota_{\bm{B}} \Omega$ where $\Omega = \rho d\xi^1 \w d\xi^2 \w d\xi^3$ is the volume form and $\star$ is the Hodge star operator. Boozer coordinates then correspond to the existence of a coordinate system for which
\begin{equation}\label{eq:BoozerDiffForm}
\begin{aligned}
    \Phi_*(|\bm{B}|) &= \cB \;, \\
    \Phi_*(\iota_{\bm{B}}\Omega) &= F(\psi)\,d\psi\w d\theta - G(\psi)\,d\psi\w  d\zeta \;,  \\
    \Phi_*(\bm{B}^\flat \w  d\psi) &= K(\psi)\,d\psi\w  d\theta - L(\psi)\,d\psi\w  d\zeta \;.
\end{aligned}
\end{equation}
Here  $\Phi_*$ is the action of the transformation on forms, i.e., the push-forward operator.
This set of equations can be thought of as an EDS, see \Lem{EDSReal}. The proof of \Th{FTSD} in  \Sec{BoozerEDS}-\ref{sec:Proof} will use this notation.

\section{Methodological background}\label{sec:CKAlg}

In this section we give an overview of the concepts and terminology necessary to prove \Th{FTSD}. These ideas were developed by Cartan in the early 1900's when he conceived of a far-reaching geometric framework for showing local existence of analytic solutions to PDE systems. This would later be called the Cartan-K\"ahler theorem or ``algorithm". Cartan's approach is to study the more general class of objects called exterior differential systems, which when written in local coordinates are essentially PDEs with mild non-degeneracy conditions. The exposition of Cartan's work on this subject can be notoriously difficult to penetrate for non-experts, and so in this section we try to given an overview of the Cartan-K\"ahler algorithm using as little of the differential geometry and differential forms language as possible. Moreover, the methodology is demonstrated through a simple example. This same example is revisited in \App{CK-forms-ver} where a more compact version of the Cartan-K\"ahler algorithm is implemented in the language of differential forms. 

The Cartan-K\"ahler theorem is an \textit{if and only if} statement on the local existence of analytic solutions to a PDE system and it is powered inductively by the Cauchy-Kowalevski existence theorem for analytic systems.

\subsection{Cauchy-Kowalevski Theorem and Sequences of Cauchy Problems}\label{sec:CK-CauchySeq}

The famous Cauchy-Kowalevski theorem gives requirements on a system of first order PDEs ensuring existence of a local solution given by Taylor series. 

\begin{theorem}[Cauchy-Kowalevski \cite{Courant89, Olver95, EDS-Book, Ivey16}]\label{thm:C-Kowalevski}
Consider a system of $d$ first order PDEs with $d$ dependent variables $\bm{u} = (u^1,u^2,\ldots, u^d)$ and $n+1$ independent variables $(\bm{x},t) \in \bR^{n+1}$ of the form
\begin{equation}\label{eq:CK-Problem}
    \partiald{u^a}{t}=\cF^a\left(\bm{x},t,\bm{u},D\bm{u}\right)
            =\sum_{b=1}^d \sum_{i=1}^n A^a_{bi}(\bm{x},t,\bm{u})\partiald{u^b}{x^i}+\tA^a(\bm{x},t,\bm{u}), \quad a = 1, \ldots, d \;, 
\end{equation}
where the functions $A^a_{bi}$ and $\tA^a$ are real-analytic on an open set $Q \subset \bR^{n+1} \times \bR^d$.
Assume that  at $t=t_0$, the initial values, 
\begin{equation}\label{eq:CK-InitialData}
    u^a (\bm{x},t_0) = g^a(\bm{x}) \;, \quad a = 1, \ldots, d \;,
\end{equation}
are real-analytic on an open set $V\subset \bR^n$ and that for each $\bm{x}\in V$ we have $(\bm{x},t_0,\bm{g}(\bm{x}))\in Q$. Then for any point $\bm{x}_0 \in V$, there is  a solution $u^a(\bm{x},t)$ of \Eq{CK-Problem} that is real-analytic on an open set of $\bR^{n+1}$ containing $(\bm{x}_0,t_0)$.
\end{theorem} 

The system \Eq{CK-Problem} can be thought of as an evolution equation in the time-like variable $t$ with initial data \Eq{CK-InitialData}. A  PDE that can be written in this form is said to be in Cauchy-Kowalevski form, or is simply called a \textit{Cauchy problem}. 

There is no guarantee that any given PDE system can be written in the form \Eq{CK-Problem}, even allowing for changes of dependent and independent variables. Nevertheless, even when Cauchy-Kowalevski does not directly apply (e.g., for overdetermined PDE systems), it might still be possible to interpret a first order PDE system as giving an ``evolution" along a sequence of variables, none of which may have a physical time interpretation. In this case it may still be possible to apply a generalization of \Th{C-Kowalevski}, or conclude that no analytic solutions exist. 

As an example, consider the nonlinear PDE system of $d=4$ functions, $\bm{u}=(u^1,u^2,u^3,u^4)$, of $n=3$ variables, $\bm{x}=(x^1,x^2,x^3)$:
\begin{equation}\label{eq:CK-Example}
\begin{split}
    \partiald{u^1}{x^2}&=u^2-u^1+u^3u^4 \;,\\
    \partiald{u^2}{x^2}&=u^1-u^2 \;,
\end{split}
\quad
\begin{split}
    \partiald{u^1}{x^3}&=1 \;,\\
    \partiald{u^2}{x^3}&=1 \;,\\
    \partiald{u^3}{x^3}&=u^3 \;,\\
    \partiald{u^4}{x^3}&=-u^4 \;.
\end{split}
\end{equation}
The system \Eq{CK-Example} is an example of the more general first order system
\begin{equation}\label{eq:CK-Sequence}
    \partiald{\bm{u}^a}{x^i} = \cF^{a}_i( \bm{x},\bm{u}), \quad 
            a = 1,\ldots, d, \; i = 1, \ldots, n \;.
\end{equation}
However, notice that for \Eq{CK-Example}, there are no evolution equations in the $x^1$ direction. Consequently, it is not clear from the outset what `initial' data (if any) can be specified to get a well-defined PDE problem.  In general, one might be able to reformulate the PDE system \Eq{CK-Sequence} as a sequence of Cauchy problems as in \Th{C-Kowalevski}, but even so such systems are often overdetermined. 

Any solution $\bm{u}(\bm{x})$ of \Eq{CK-Sequence} must satisfy a \textit{compatibility condition}; namely, when the solution is inserted into \Eq{CK-Sequence}, the mixed, second partial derivatives of the $u^a$ must commute; therefore,
\begin{equation}\label{eq:Compatible}
    \frac{d}{d x^i} \cF^{a}_j(\bm{x}, \bm{u}(\bm{x})) -
       \frac{d}{d x^j}\cF^{a}_i(\bm{x},\bm{u}(\bm{x}))=0 \;.
\end{equation}
These necessary compatibility conditions sometimes show up in an expression referred to as \textit{torsion}, as we discuss in \Sec{Cartan-Kahler}.

It can be shown by elementary methods (see \App{Example}) that there is suitable initial data such that the example \Eq{CK-Example} admits local analytic solutions. The solutions can be obtained by solving two Cauchy problems, the first for the $x^2$ evolution and then the second for the $x^3$ evolution, with the initial data for the PDE being the relevant Cauchy data. Compatibility can then be directly checked from these solutions, in particular one can check \Eq{Compatible} for $\cF^a_2$ and $\cF^a_3$ given by \Eq{CK-Example}. We will reformulate \Eq{CK-Example} as a system of differential forms in \Sec{Examples}, as a simple example of the Cartan-K\"ahler algorithm.

\subsection{Cartan-K\"ahler Algorithm}\label{sec:Cartan-Kahler}
Generally, it is not at all obvious when one can arrange a given PDE into a system of the form \Eq{CK-Sequence}, and it is even less obvious when the compatibility conditions \Eq{Compatible} will be satisfied. Luckily, there is a geometric ``algorithm" that can help determine whether this is possible. The key objects of study in the algorithm are the \textbf{tableau} and the \textbf{torsion} of a linear \textbf{Pfaffian system}.

Torsion detects explicit algebraic and first order compatibility conditions. The tableau is a subspace of rectangular matrices encoding the highest derivative terms in the PDE. Torsion must vanish for any solution to exist; if it does not then additional conditions to make it vanish must be appended to the original PDE system and the algorithm must restart. Assuming torsion vanishes, the pattern of the  independent entries and ``degrees of freedom" of the tableau essentially determine the existence of analytic solutions via a sequence of Cauchy problems. 

The Cartan-K\"ahler theorem assumes an analytic PDE system with (a) no torsion, and (b) a tableau that satisfies a linear algebraic condition called Cartan's test. It proves that one can extend any $j$-dimensional solution transversely to obtain a unique $(j+1)$-dimensional solution by an application of the Cauchy-Kowalevski theorem; here the $j$-dimensional solution plays the role of initial conditions. The Cartan-K\"ahler theorem is usually assumed to be applied inductively in $j$ until one has a solution of maximal dimension. The necessary conditions (a) and (b) then become sufficient conditions in the Cartan-K\"ahler theorem through the mechanism of the Cauchy-Kowalevski theorem.

We give here a compact overview of this algorithm; for more details see, for example, \cite{Olver95, Ivey16,EDS-Book}.

Given a PDE system, there is a four step process to determine existence of local analytic solutions:
\begin{enumerate}
\item\label{step:EDS} \textbf{Encode PDE as a Pfaffian System}, see \Sec{Pfaffian_Defined}, and the differential forms method described in \App{CK-forms-ver}.

\item \textbf{Find the associated Tableau}, see \Sec{Pfaffian_Tableau}.

\item \textbf{Check if Torsion vanishes}, see \Sec{Pfaffian_Torsion}. 

\item\label{step:Cartan} \textbf{Check if Tableau is involutive}, see \Sec{Linear_PDEs}. 

{\textbf{Warning:} On rare occasions, one gets a false negative for failure of equality for Cartan's test. This can be a subtle matter but is usually handled by changing the basis forms slightly. A more reliable method, replacing step (4), is given in  \cite{Olver95}.}
\end{enumerate}

The Cartan-Kuranishi theorem \cite{EDS-Book} guarantees that by adding a finite number of additional differential equations we can either express the original PDE plus the additional derivative conditions in Cauchy-Kowalevski form, or the PDE admits no non-trivial analytic solutions. This is a landmark result in exterior differential systems theory. 

In \Sec{Examples} we will use the example system \Eq{CK-Example} to demonstrate the steps in the Cartan-K\"ahler algorithm as listed above. 

\subsection{Linear PDEs: Tableau and Cartan's Test}\label{sec:Linear_PDEs}

Our analysis of realizability requires an understanding of both tableau and Cartan's test for involutivity of tableau. In particular, our analysis of the realizability theorem, \Th{FTSD},  which is proved in \Sec{Proof}, uses Cartan's test. Here we motivate and describe both tableau and Cartan's test using elementary linear algebra. Suppose that $\bm{u}: \bR^n \to \bR^{d}$ are ${d}$ functions of $\bm{x} \in \bR^n$ that solve
 a \textbf{simple} system of partial differential equations: that is,  ${c}$ scalar PDEs that are linear, first-order, and homogeneous:

\begin{equation}\label{eq:SimplePDE}
    \sum_{i=1}^n L^i \partial_i \bm{u} = 0 \;,
\end{equation}
for $n$ \textit{constant} rectangular matrices $L^i \in \bR^{{c} \times {d}}$. A \textbf{tableau}, $\Pi$, is the vector space of homogeneous \textit{linear} solutions of such a PDE system. In symbols,
\begin{equation}\label{eq:SimplePDETableau}
    \Pi = \left\{\bm{u}:\bR^n\rightarrow\bR^{d}: \exists A \in \bR^{{d}\times n} \text{ s.t. } \bm{u}(\bm{x}) = A\,\bm{x}, \text{ and } \sum_{i=1}^n L^i \partial_i \bm{u} = 0\right\} \;.
\end{equation}
Since every $\bm{u}$ in the tableau may be identified with its associated $A$-matrix, every tableau may be identified with a linear subspace of ${d}\times n$ matrices. 

Conversely, it is straightforward to show that \textit{any} vector subspace of ${d}\times n$ matrices can be realised as the tableau of some simple PDE system of the form \Eq{SimplePDE} with $n$ independent variables and ${d}$ dependent variables.  This implies we could define tableau more simply as merely linear subspaces of $\bR^{{d} \times n}$, though this would de-emphasize the connection between tableau and simple PDE systems. It is nevertheless convenient to switch freely between viewing tableau as homogeneous linear solutions of simple systems and as subspaces of rectangular matrices. In practice, tableau often appear first as subspaces of matrices; their interpretation in terms of simple systems is then merely a computational convenience.

Given a tableau \Eq{SimplePDETableau}, its \textbf{first prolongation}, $\Pi^{(1)}$,  is the vector space of all homogeneous \emph{quadratic} solutions of the simple system. Let $\cQ$ denote the space of symmetric bilinear maps $q:\bR^n\times\bR^n\rightarrow\bR^{d}$. Since homogeneous polynomials of degree-two are in one-to-one correspondence with $\cQ$, the first prolongation of $\Pi$ may be written
\[
\Pi^{(1)} = \left\{\bm{u}:\bR^n\rightarrow\bR^{d} : \exists q\in \cQ \text{ s.t. }\bm{u}(\bm{x}) = \tfrac{1}{2}q(\bm{x},\bm{x}), \text{ and }\sum_{i=1}^n L^i \partial_i \bm{u} = 0\right\}.
\]

The following counting argument leads to an upper bound on $\dim \,\Pi^{(1)}$. Let $e_i$, $i=1,\dots, n$, denote the standard basis for $\bR^n$. Given $\bm{u}\in \Pi^{(1)}$, notice that that for each $i = 1, \ldots, n$, the derivative $\partial_i \bm{u}(\bm{x}) = q(e_i,\bm{x}) = A_i \bm{x} $ is a homogeneous linear solution of \Eq{SimplePDE}. Therefore $A_1 \in \Pi$, and so has at most $\dim(\Pi)$ free elements. There are fewer \emph{additional} free elements in $\partial_2\bm{u}(\bm{x}) = A_2\,\bm{x}$ because equality of mixed partials implies that $A_2e_1 = \partial_1\partial_2 \bm{u} = \partial_2\partial_1\bm{u} = A_1e_2$. Thus the first column of $A_2$ is completely determined by the second column of $A_1$. This is an example of the compatibility condition \Eq{Compatible}. The collection of homogeneous linear solutions with specified $A\,e_1$ is an affine subspace of $ \Pi$ and so differences between such solutions lie in a linear subspace $\Pi_1\subset \Pi$ consisting of all solutions of \Eq{SimplePDE} with $\partial_1\bm{u} = 0$. The number of free derivatives among $\partial_1\bm{u}$ and $\partial_2 \bm{u}$ is therefore at most $\dim(\Pi) + \dim(\Pi_1)$. More generally, define
\[
    \Pi_i = \{ \bm{u}\in \Pi \;:\; \partial_1\bm{u} = \partial_2 \bm{u} = \dots = \partial_i\bm{u} = 0 \} \subset \Pi_{i-1} \;,\quad \Pi_0 = \Pi \;.
\]
Continuing the preceding argument inductively implies $\dim\,\Pi^{(1)}\leq \dim(\Pi) + \dim(\Pi_1) + \dots + \dim(\Pi_n)$. {Note that $\dim \Pi_n = 0$}. The differences between these dimensions, 
\[
     s_i = \dim(\Pi_{i-1}) - \dim(\Pi_{i}) \;,
\]
are called  the \textbf{Cartan characters}. Using these gives the upper bound
\begin{align}
    \dim\,\Pi^{(1)}\leq s_1 +2s_2 + 3s_3 + \dots +n s_n.\label{eq:cartan_test_inequality}
\end{align}
A tableau is \textbf{involutive} if this upper bound gives the true dimension of $\Pi^{(1)}$ when computed using a generic basis for both $\bR^n$ and $\bR^{d}$.

Checking for involutivity by computing the Cartan characters is known as \textbf{Cartan's test}. 

\subsection{Linear Pfaffian systems\label{sec:Pfaffian_Defined}}
In this section we define the concept of a linear Pfaffian system, and discuss the importance of it satisfying an independence condition. At first glance, focusing on linear Pfaffian systems appears to be focusing on a rather restrictive class of PDEs. However, as elucidated on in \App{PDE_to_Pfaffian}, most PDEs can be written as a linear Pfaffian system. 

To start, let $(\bm{x},\bm{y}) \in \bR^n\times\bR^\ell$. 
We think of $\bm{x}\in\bR^n$ as independent variables and $\bm{y}\in\bR^\ell$ as dependent variables for a system of quasilinear first-order partial differential equations. Such a system can be encoded as a  \textbf{Pfaffian system}---a collection of $r$ linearly-independent $1$-forms, $\alpha^a$, on $\mathbb{R}^n\times \mathbb{R}^\ell$. 

Each 1-form can be written in terms of its $\bm{x}$ and $\bm{y}$ components as 
\begin{equation}\label{eq:alphaxy}
    \alpha^a = (\alpha^a_{\bm{x}})^T d\bm{x} + (\alpha^a_{\bm{y}})^T d\bm{y} \;, 
\end{equation}
for vector functions $\alpha^a_{\bm{x}}(\bm{x},\bm{y})\in\bR^n$ and $\alpha^a_{\bm{y}}(\bm{x},\bm{y})\in\bR^\ell$. To define a PDE using such a system of $1$-forms, treat $\bm{y} \to \hat{\bm{y}}(\bm{x})$ as dependent variables  for functions $\hat{\bm{y}}: \bR^n \to \bR^\ell$. Then  the vanishing of the forms can be interpreted as a PDE by the replacement $d\bm{y} \to D\hat{\bm{y}}(\bm{x})\,d\bm{x}$ where $D\hat{\bm{y}}(\bm{x})$ denotes the $\ell\times n$ Jacobian matrix whose rows are the gradients of the components of $\hat{\bm{y}}$. The resulting PDE system is
\begin{align}\label{eq:general_pfaffian_pde}
    \alpha^a_{\bm{x}}(\bm{x},\hat{\bm{y}}(\bm{x}))^T + \alpha^a_{\bm{y}}(\bm{x},\hat{\bm{y}}(\bm{x}))^T D \hat{\bm{y}}(\bm{x}) = 0,\quad a=1,\dots, r.\
\end{align}
Note that this is a first-order system of $r$, vector PDEs with $n$ independent variables. These PDEs are quasilinear since the coefficients do not depend upon derivatives.  

\begin{definition}[Independence Condition]\label{defn:independence}
The Pfaffian system \Eq{alphaxy} satisfies an \textbf{independence condition} if the vectors $\alpha_{\bm{y}}^a(\bm{x},\bm{y})\in\bR^\ell$, $a=1,\dots, r$, are linearly independent for each $(\bm{x},\bm{y})\in \mathbb{R}^n\times \mathbb{R}^\ell$.
\end{definition}

Any analytic solution of \Eq{general_pfaffian_pde} near a point $\bm{x}_0 \in \bR^n$ has a Taylor series expansion of the form
\begin{align}\label{eq:Taylor_Formula}
    \hat{\bm{y}}(\bm{x}) = \hat{\bm{y}}(\bm{x}_0) + p\,(\bm{x} - \bm{x}_0) +\tfrac12 q(\bm{x}-\bm{x}_0,\bm{x}-\bm{x}_0) +  O(\bm{x} - \bm{x}_0)^3 \;,
\end{align}
where $p = D\hat{\bm{y}}(\bm{x}_0)$ denotes the $\ell \times n$ Jacobian, and the bilinear form $q = D^2\hat{\bm{y}}(\bm{x}_0)$ denotes the second derivative tensor of $\hat{\bm{y}}$ at $\bm{x}_0$. Taylor expanding the PDE system \Eq{general_pfaffian_pde} about $\bm{x}$ leads to a sequence of conditions on the coefficients in \Eq{Taylor_Formula}. At leading order we find an affine equation for $p$:
\begin{align} \label{eq:first-order-pfaff}
    (\alpha^a_{\bm{x}})^T + (\alpha^a_{\bm{y}})^Tp = 0,\quad a=1,\dots, r \;.
\end{align}
At first-order we find an equation that involves both $p$ and $q$. Equality of mixed partials $q(\delta\bm{x}_1 ,\delta\bm{x}_2) = q(\delta\bm{x}_2,\delta\bm{x}_1)$ allows us to eliminate $q$, and implies that 
\begin{equation}\label{eq:pfaff_compatable}
    \left[\dmat^a_{\bm{x}\bm{x}} + \dmat^a_{\bm{x}\bm{y}}p 
    +p^T\,\dmat^a_{\bm{y}\bm{x}} + p^T(\dmat^a_{\bm{y}\bm{y}})\,p \right] - 
    \left[ \ldots \right]^T = 0 \;, \quad a \in 1,\ldots r \; ;
\end{equation}
that is, the $n \times n$ matrix in $[\;]$ above is symmetric for each $a$. Here we have introduced the useful shorthand notation
\begin{equation}\label{eq:Dxx_Notation}
    \dmat^a_{\bm{x}\bm{x}} = D_{\bm{x}}\alpha^a_{\bm{x}}(\bm{x},\bm{y}),\quad \dmat^a_{\bm{x}\bm{y}} = D_{\bm{y}}\alpha^a_{\bm{x}}(\bm{x},\bm{y}),\quad \dmat^a_{\bm{y}\bm{x}} = D_{\bm{x}}\alpha^a_{\bm{y}}(\bm{x},\bm{y}),\quad \dmat^a_{\bm{y}\bm{y}} = D_{\bm{y}}\alpha^a_{\bm{y}}(\bm{x},\bm{y}).
\end{equation}

A solution $p$ of \Eq{first-order-pfaff} and \Eq{pfaff_compatable} at a point in $M$ is an \textbf{integral element}. We denote the collection of all integral elements at $(\bm{x},\bm{y})$ as $\cV_{(\bm{x},\bm{y})}\subset \bR^{\ell\times n}$. Integral elements may also be understood geometrically as $n$-dimensional subspaces of $\bR^n\times\bR^\ell$ that project onto $\bR^n$ and on which the $1$-forms $\alpha^a$ and their exterior derivatives $d\alpha^a$ all vanish. Vanishing of the $1$-forms leads to \Eq{first-order-pfaff}, while vanishing of the exterior derivatives leads to \Eq{pfaff_compatable}, see \App{CK-forms-ver}.

Generally $\cV_{(\bm{x},\bm{y})}$ is a nonlinear space due to the term in \Eq{pfaff_compatable} quadratic in $p$. However, it is common \Eq{first-order-pfaff} and \Eq{pfaff_compatable} degenerate to an affine system for $p$, in which case $\cV_{(\bm{x},\bm{y})}$ is an affine subspace of $\bR^{\ell\times n}$.\footnote
{Generally, when one sets up an exterior differential system for a PDE it is often done so using the language of jet spaces. All exterior differential systems expressed in this way turn out to be linear Pfaffian systems. See  \cite[Example 6.1.4]{Ivey16} for a proof. However, this is not always true for an arbitrary Pfaffian system with an independence condition.} 
When this happens the Pfaffian system is said to be \textbf{linear}. The aformentioned degeneracy can be detected by first expressing the general solution of \Eq{first-order-pfaff} as $p = p_{*} + p^\circ$, where $p_*$ denotes a particular solution and $p^\circ$ denotes the undetermined homogeneous solution. When this expression is back substituted into \Eq{pfaff_compatable} what remains is a quadratic equation for $p^\circ$, in which the quadratic terms vanish for a linear Pfaffian system. A sufficient condition for linearity is that the $\ell \times \ell$ matrix $\dmat^a_{\bm{y}\bm{y}}$
be symmetric for each $a=1,\dots, r$. However, this condition is not necessary because, for example, we may have $\dmat^a_{\bm{y}\bm{y}} = \alpha^a_{\bm{y}}(\bm{x},\bm{y})\bm{w}^T$ for some $\bm{w}\in \bR^\ell$. A necessary and sufficient condition for linearity can be expressed in terms of the set of \textbf{transverse vectors} at a point $(\bm{x},\bm{y}) \in \bR^n \times \bR^\ell$, defined by
\begin{equation}\label{eq:transverseVectors}
    J^\circ_{(\bm{x},\bm{y})} = \{ (\bm{0},\delta\bm{y}) : (\alpha^a_{\bm{y}})^T\delta\bm{y} = 0, \; a = 1,\ldots, r\} \;.
\end{equation}
When the $\alpha^a_{\bm{y}}$ are independent, $\text{dim}\,J^\circ_{(\bm{x},\bm{y})}=\ell-r$. A Pfaffian system with independence condition is linear if and only if for each $(\bm{x},\bm{y})$ and each pair of transverse vectors $(\bm{0},\delta\bm{y}_1^\circ),(\bm{0},\delta\bm{y}_2^\circ)\in J^\circ_{(\bm{x},\bm{y})}$ we have
\begin{align}\label{eq:linearity_condition}
    (\delta\bm{y}_2^\circ)^T(\dmat^a_{\bm{y}\bm{y}} - (\dmat^a_{\bm{y}\bm{y}})^T)(\delta\bm{y}_1^\circ) = 0,\quad a = 1,\dots, r.
\end{align}
This condition corresponds to vanishing of a certain $(\ell - r)\times (\ell - r)$ block of $\dmat^a_{\bm{y}\bm{y}} - (\dmat^a_{\bm{y}\bm{y}})^T$ for each $a$.

\subsection{Tableau of a linear Pfaffian system\label{sec:Pfaffian_Tableau}}

Every linear Pfaffian system with independence condition gives rise to a family of tableau as defined in \Sec{Linear_PDEs}. These tableau usefully encode many algebraic properties of the PDE system associated with the Pfaffian system. In this context it is conceptually helpful to think of tableau abstractly as subspaces of rectangular matrices; the interpretation as solutions of simple PDE systems can always be recovered when convenient for computations.

To do this, we first introduce the space of \textbf{horizontal vectors} at a point $(\bm{x},\bm{y}) \in \mathbb{R}^n\times \mathbb{R}^\ell$:
\[
    H_{(\bm{x},\bm{y})} = \big\{ (\delta\bm{x},\delta\bm{y}) \in \bR^n \times \bR^\ell:
        \delta\bm{y} = \sum_{a=1}^r c_a \alpha^a_{\bm{y}}, \text{ and }
        (\alpha^a_{\bm{x}})^T\delta\bm{x} + (\alpha^a_{\bm{y}})^T\delta\bm{y} = 0, \; a = 1,\ldots,r \big\} \;,
\]
By linear independence of the $\alpha^a_{\bm{y}}$ we have $\text{dim}\,H_{(\bm{x},\bm{y})} = n$. A basis for $H_{(\bm{x},\bm{y})}$ can be constructed using the Gram matrix $G(\bm{x},\bm{y})$, the $r\times r$  matrix with elements $G_{ab}(\bm{x},\bm{y}) = (\alpha^a_{\bm{y}})^T\alpha^b_{\bm{y}}$. Linear independence of the $\alpha_{\bm{y}}^a$ implies $G(\bm{x},\bm{y})$ is invertible. 
Using the inverse, define $\ell\times n$ matrix
\begin{align}\label{eq:gamma_matrix}
    \Gamma \equiv \sum_{a,b=1}^r\alpha_{\bm{y}}^a G^{-1}_{ab} (\alpha^b_{\bm{x}})^T.
\end{align}
It is not hard to see that a basis for $H_{(\bm{x},\bm{y})}$ is then given by
\begin{equation}\label{eq:H-basis}
    \varepsilon_i = (e_i,-\Gamma(\bm{x},\bm{y})e_i) \;, \quad i=1\dots, n \;.
\end{equation}
\
The tableau $\Pi \subset \bR^{r\times n}$ at each point $(\bm{x},\bm{y})$ is defined as the image of a certain linear map $\hat{\pi}:J^\circ_{(\bm{x},\bm{y})}\rightarrow \Pi$ that sends each transverse vector $\delta m^\circ = (0,\delta\bm{y}^\circ)$ to an $r\times n$ matrix $\hat{\pi}(\delta m^\circ)$ with entries $\pi^a_i(\delta m^\circ)$.
Note that each $\pi^a_i$ is a $1$-form on the vector space $J^\circ_{(\bm{x},\bm{y})}$, so that that $\hat{\pi}$ is an $r\times n$ matrix of $1$-forms. The entries $\pi^a_i(\delta m^\circ)$ may be defined simply in terms of the exterior derivatives of $\alpha^a$ as
\begin{align*}
    \pi^a_i(\delta m^\circ) = d\alpha^a(\delta m^\circ,\varepsilon_i),\quad a = 1,\dots, r,\quad i=1,\dots, n \;,
\end{align*}
with $\varepsilon_i$ in \Eq{H-basis}.
Differentiating the one-form \Eq{alphaxy} to obtain $d\alpha^a$, and using the notation \Eq{Dxx_Notation} then gives
\begin{equation}\label{eq:tableau_entries_general}
\begin{split}
    \pi^a_i(\delta m^\circ) &=(\delta \bm{y}^\circ)^T\Sigma^ae_i, \\
    \Sigma^a &\equiv (\dmat^a_{\bm{y}\bm{y}} - (\dmat^a_{\bm{y}\bm{y}})^T)\Gamma - (\dmat^a_{\bm{y}\bm{x}} - (\dmat^a_{\bm{x}\bm{y}})^T), \\
\end{split}
\end{equation}
Note that the image under $\hat{\pi}$ of any basis for the $\ell -r$ dimensional space $J^\circ$, spans $\Pi$, but may be ``over-complete" if $\hat{\pi}$ has a null space.

\subsection{Torsion of a linear Pfaffian system\label{sec:Pfaffian_Torsion}}

If the space of integral elements $\cV_{(\bm{x},\bm{y})}$ for \Eq{first-order-pfaff}--\Eq{pfaff_compatable} is non-empty for all $(\bm{x},\bm{y}) \in \mathbb{R}^n\times \mathbb{R}^\ell$ the \textbf{torsion} is said to vanish. If this is not the case,
then there is no Taylor series solution of \Eq{general_pfaffian_pde} at $(\bm{x},\bm{y})$ and we say the Pfaffian system \underline{has torsion} at $(\bm{x},\bm{y})$. In this case, the original PDE system must be amended to include the vanishing torsion condition. 

The vanishing of torsion can be formulated explicitly for a linear Pfaffian system as follows. By definition, linearity requires that \Eq{first-order-pfaff} and \Eq{pfaff_compatable} are affine. To clarify the affine structure, first notice that \Eq{first-order-pfaff} has the general solution \[p = -\Gamma + p^\circ,\] where $\Gamma$ is defined in \Eq{gamma_matrix} and $p^\circ$ is an $\ell\times n$ matrix whose columns $p_i$, $i=1,\dots, n$, are orthogonal to the $\alpha^a_{\bm{y}}$. Note that for each column of $p$ we have $(\bm{0},p_i)\in J^\circ_{(\bm{x},\bm{y})}$.
Substituting this expression for $p$ in \Eq{pfaff_compatable} and using \Eq{linearity_condition} then leads to
\begin{align}\label{eq:pfaffian_affine_eqn}
    (\Sigma^a)^Tp^\circ - (p^\circ)^T\Sigma^a= C^a,\quad a=1,\dots, r \;,
\end{align}
where $\Sigma^a$ is defined in \Eq{tableau_entries_general} and the skew-symmetric $n\times n$ matrix $C^a$ is given by
\begin{align}\label{eq:Ca_def}
    C^a =& -(\dmat^a_{\bm{x}\bm{x}} - (\dmat^a_{\bm{x}\bm{x}})^T) - \Gamma^T(\dmat^a_{\bm{y}\bm{y}} - (\dmat^a_{\bm{y}\bm{y}})^T)\Gamma\nonumber\\
    &+\Gamma^T(\dmat^a_{\bm{y}\bm{x}} - (\dmat^a_{\bm{x}\bm{y}})^T) - (\dmat^a_{\bm{y}\bm{x}} - (\dmat^a_{\bm{x}\bm{y}})^T)^T\Gamma \;.
\end{align}
It is straightforward to show that the entries of $C^a$ are related to the exterior derivative $d\alpha^a$ according to $C^a_{ij} = d\alpha^a(\varepsilon_i,\varepsilon_j)$, where $\varepsilon_i$ denotes the $i^{\text{th}}$ basis vector for the horizontal subspace $H_{(\bm{x},\bm{y})}$. Equation \Eq{pfaffian_affine_eqn}, which is equivalent to the combination of \Eq{first-order-pfaff} and \Eq{pfaff_compatable}, is now clearly affine. 

Let $(\bR^{\ell\times n})^\circ\subset \bR^{\ell\times n}$ denote the subspace of $\ell\times n$ matrices with columns orthogonal to each of the $\alpha^a_{\bm{y}}(\bm{x},\bm{y})$. Note that $p^\circ\in (\bR^{\ell\times n})^\circ$. Let $W$ denote the space of $r$-component column vectors whose components are skew-symmetric $n\times n$ matrices. Note that $C = (C^1,\dots, C^r)^T$ is an element of $W$. It is suggestive to write \Eq{pfaffian_affine_eqn} in terms of the linear map $\hat{\Sigma}:(\bR^{\ell\times n})^\circ\rightarrow W$ 
as
\begin{equation}\label{eq:hat_sigma_def}
    \hat{\Sigma}p^\circ = C \;. 
\end{equation} 
It is then clear that torsion vanishes at $(\bm{x},\bm{y})$ if and only if $C$ lies in the image of $\hat{\Sigma}$. 
Vanishing torsion is equivalent to existence of at least one integral element for each $(\bm{x},\bm{y})$, which is a necessary condition for existence of any solution of the PDE system \Eq{general_pfaffian_pde}.

\subsection{Brief recap}
To wrap-up this section, we end on a more formal statement of the Cartan-K\"ahler theorem/algorithm for linear Pfaffian systems. This is not the usual formulation of the theorem and we refer the interested reader to the standard literature, specifically Theorem 2.2 of  \cite{EDS-Book} and chapter sections 8.3 and 8.4 of  \cite{Ivey16}. 
\begin{theorem}\label{thm:CK}
Given an analytic PDE system of $\ell$ functions in $n$ independent variables written as a linear Pfaffian system on some open set $U\subset\mathbb{R}^n\times \mathbb{R}^\ell$ such that: 
\begin{enumerate}[label={(\Alph*)}]
    \item\label{ZeroTorsion} torsion as calculated in \Sec{Pfaffian_Torsion} vanishes on $U$ and 
    \item\label{CartanTest} Cartan's Test from \Sec{Linear_PDEs} holds for the associated tableau constructed in \Sec{Pfaffian_Tableau} on $U$,
\end{enumerate}
then there exists an analytic solution $(\bm{x},\bm{y}(\bm{x}))$ to the PDE on some open set $V\subset U$ determined by solutions to a sequence of Cauchy problems with analytic initial data on $U$. Moreover, the calculation for item \ref{CartanTest} describes the amount of freedom available in the choice of analytic Cauchy data. 
\end{theorem}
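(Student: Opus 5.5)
The plan is to prove \Th{CK} by induction on dimension, following the classical route in which each inductive step is one application of \Th{C-Kowalevski}. First fix a point $(\bm{x}_0,\bm{y}_0)\in U$. Vanishing torsion \ref{ZeroTorsion} gives at least one integral element $p$ there, i.e.\ a solution of \Eq{pfaffian_affine_eqn}. After a generic linear change of the independent variables (and of the dependent ones transverse to the $\alpha^a_{\bm{y}}$), the filtration $\Pi\supset\Pi_1\supset\dots\supset\Pi_n$ will realize the Cartan characters $s_1,\dots,s_n$ with equality in \Eq{cartan_test_inequality}. In that adapted basis I would split the transverse unknowns $p^\circ$, column by column, into ``principal'' and ``parametric'' entries. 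Concretely, in the column for $\partial_j$ exactly $s_j$ combinations of the components of $\hat{\bm{y}}$ are free. The remaining ones are determined, via the tableau relations $\pi^a_i$ of \Eq{tableau_entries_general}, by derivatives along $x^1,\dots,x^j$ and by lower-order terms.

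The induction then runs as follows. Choose analytic Cauchy data on the flag $\{x^2=\dots=x^n=0\}\subset\{x^3=\dots=x^n=0\}\subset\cdots$. At stage $j$, suppose we have an integral manifold $N_{j-1}$ of dimension $j-1$, on which all $\alpha^a$ and $d\alpha^a$ restrict to zero. Extend it to $N_j$ by solving an evolution system in $x^j$. The principal components of $\partial_j\hat{\bm{y}}$ are expressed, through \Eq{general_pfaffian_pde} restricted to the $j$-slice, in terms of $\partial_1,\dots,\partial_{j-1}$ derivatives. The $s_j$ parametric components are prescribed as arbitrary analytic functions of $(x^1,\dots,x^j)$, and the remaining components are frozen. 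The genericity of the flag, together with the count $s_j$, should put this system in the form \Eq{CK-Problem} with $t=x^j$, so \Th{C-Kowalevski} yields an analytic $N_j$ near the base point. After $n$ steps the free data consist of $s_j$ functions of $j$ variables for each $j$; this is the ``moreover'' clause.

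The main obstacle is showing that $N_j$ is genuinely an integral manifold. The Cauchy problem enforces only part of the equations, namely the $\alpha^a(\partial_j)=0$ components. The others, $\alpha^a(\partial_i)$ and $d\alpha^a(\partial_i,\partial_k)$ for $i,k<j$, must be shown to vanish off the initial slice. My approach is to differentiate these residual quantities in $x^j$ and use $d^2=0$ together with the structure equations of the linear Pfaffian system. Zero torsion removes the inhomogeneous terms. The equality case of Cartan's test (involutivity) supplies the remaining step: it guarantees that the quadratic compatibility conditions \Eq{Compatible} are linear consequences of the prolonged system, since they are exactly what is counted by $\dim\Pi^{(1)}=\sum i s_i$. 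Consequently the residuals satisfy a linear homogeneous analytic Cauchy system in $x^j$ with zero initial data on $N_{j-1}$.

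Uniqueness in \Th{C-Kowalevski} then forces the residuals to vanish identically. I expect the delicate point to be verifying that this residual system really is of Cauchy--Kowalevski type; this is precisely where the generic (regular) flag is needed. Once this is established, the induction closes, and $N_n$ is the graph of the desired analytic solution on some open set $V\subset U$.
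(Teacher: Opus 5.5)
The paper gives no proof of this theorem. It quotes it from \cite[Thm.~2.2]{EDS-Book} and \cite[\S\S8.3--8.4]{Ivey16}, and the proof there has the architecture you describe: extend integral manifolds along a regular flag by Cauchy--Kowalevski, then kill the residuals by uniqueness. As written, however, your stage-$j$ Cauchy system is wrong.

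The equations \Eq{general_pfaffian_pde} contain no mixed derivatives. For each $i$ they read $(\alpha^a_{\bm y})^T\partial_i\hat{\bm y}=-(\alpha^a_{\bm x})_i$. So they fix only the $r$ components of $\partial_j\hat{\bm y}$ along the $\alpha^a_{\bm y}$, with zeroth-order right-hand sides, and say nothing about the $\ell-r$ transverse components. If those are all prescribed or frozen, then on $N_j$ one has $\partial_j[\alpha^a(\partial_i)]=d\alpha^a(\partial_j,\partial_i)$. Its tableau part $\Pi^a_i(\partial_j)-\Pi^a_j(\partial_i)$ (notation of \Eq{AlphaStructure}) is then uncontrolled, so $N_j$ is in general not integral, whatever the torsion and tableau are.

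The equations that express transverse components of $\partial_j\hat{\bm y}$ through $\partial_1,\dots,\partial_{j-1}$-derivatives are the polar equations $d\alpha^a(\partial_i,\partial_j)=0$, $i<j$. By \Eq{linearity_condition} and the definition of the characters, in a generic flag these have rank exactly $s_1+\dots+s_{j-1}$ on the transverse part. So you must impose a maximal independent subset of them and prescribe the remaining $\ell-r-(s_1+\dots+s_{j-1})$ transverse components. Your ``$s_j$ free, rest frozen'' split is then a legitimate choice of restraining data. Your residual list must also include the $r(j-1)-(s_1+\dots+s_{j-1})$ polar equations you did not impose.

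Closing the residual system is the actual content of the theorem, and the phrase ``compatibility conditions counted by $\dim\Pi^{(1)}$'' does not supply it. The needed argument runs as follows.
\begin{itemize}
\item Equality in \Eq{cartan_test_inequality}, with vanishing torsion, makes a generic flag regular.
\item Hence, near $E_{j-1}$, the polar equations have locally constant rank, and the integral $(j-1)$-elements are cut out transversally by a maximal independent subset of the slice equations $\alpha^a(\partial_i)=0$, $d\alpha^a(\partial_i,\partial_k)=0$ ($i,k<j$).
\item Consequently each non-imposed polar equation equals a combination of the imposed ones plus an analytic combination of slice residuals.
\item Then $\partial_j[\alpha^a(\partial_i)]$ is a combination of residuals, and $d(d\alpha^a)=0$ gives $\partial_j[d\alpha^a(\partial_i,\partial_k)]=\partial_i[d\alpha^a(\partial_j,\partial_k)]-\partial_k[d\alpha^a(\partial_j,\partial_i)]$.
\item So the residuals satisfy a linear homogeneous first-order system in Cauchy--Kowalevski form, with coefficients known along $N_j$ and zero data on $N_{j-1}$.
\end{itemize}
This step uses the uniqueness half of Cauchy--Kowalevski, which \Th{C-Kowalevski} as stated in the paper does not include. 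You also need regularity at every point of $N_{j-1}$, not only at the base point; this holds after shrinking, since regularity is an open condition.

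Finally, your count of $s_j$ functions of $j$ variables (plus the initial constants, which you omit) is the standard EDS bookkeeping. The paper reads the freedom as $s_j-s_{j+1}$ functions of $j$ variables of the original unknowns, as in the ``two functions of one variable and three of two variables'' for $(s_1,s_2,s_3)=(5,3,0)$ in \Th{FTSD}. These are different bookkeepings of the same Cauchy data; only the last nonzero character is invariant.
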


In particular, item \ref{ZeroTorsion} is an integrability condition and is therefore necessary for solutions to exist at all, whereas item \ref{CartanTest} is a sufficiency condition for the existence of analytic solutions provided \ref{ZeroTorsion} holds. 

We remark that non-analytic versions of the Cartan-K\"ahler theorem do exist provided one assumes additional hypotheses. The most famous of these are those cases of involutive hyperbolic systems, first explored by  \cite{YangHyperbolicCK}. 
\section{Examples}\label{sec:Examples}

\subsection{First example: Simple PDEs as Pfaffian systems}\label{sec:SimplePfaffian}
As a first example, we show that the tableau associated in
\Sec{Linear_PDEs} with the simple PDE system in \Eq{SimplePDE} agrees with the tableau obtained by expressing the same system as a linear Pfaffian system and applying the construction of \Sec{Pfaffian_Tableau}. We note that Chapter 5 of  \cite{Ivey16} explores the general theory of equations of the type \Eq{SimplePDE} in both greater detail and additional generality of the underlying theory. 

If $e_i, i=1,\dots, n$ is the standard basis of $\bR^n$, then the tableau of \Eq{SimplePDE} is effectively the subspace of $d\times n$ matrices
\[ \Pi = \left\{A\in \bR^{d\times n} \,\left|\, \sum_{i=1}^n L^i A e_i = 0 \right.\right\}. \]
Set $m = \dim \Pi$ and let $A_1,\dots, A_m \in \Pi$ be a basis of the tableau.

Introduce coordinates $(\bm{x},\bm{u},\bm{w})$ on $\bR^n\times\bR^{d} \times \bR^m $ and set $\bm{y} = (\bm{u},\bm{w}) \in \bR^{\ell}$ with $\ell = d + m$. The simple PDE system is transformed to the Pfaffian system given by the $r = d$ set of 1-forms 
\begin{equation}\label{eq:simple_system_pfaffian_forms}
    \alpha^a = du^a - \sum_{i=1}^n \sum_{\epsilon=1}^m  A_{\epsilon i}^a w^\epsilon \,dx^i, 
      \qquad a=1,\dots,d.
\end{equation}

To see that this system does indeed cast the simple PDE \Eq{SimplePDE} as a Pfaffian system, observe that, for each $a = 1,\dots, d$, we can identify the components in \Eq{alphaxy} as
\[ 
    \left(\alpha_{\bm{x}}^a\right)_i = - \sum_{\epsilon = 1}^m  A_{\epsilon i}^a w^\epsilon, \quad i=1,\dots,n,\qquad 
    \left(\alpha^a_{\bm{y}}\right)_k = \delta_k^a, \quad k=1,\dots, \ell \;,
\]
for the Kronecker delta, $\delta_k^a$. 
Then, the PDE that corresponds to the Pfaffian system is, per \Eq{general_pfaffian_pde},
\[ 
    D \bm{u}(\bm{x}) = \sum_{\epsilon = 1}^m  A_{\epsilon} w^\epsilon(\bm{x}) \;.
\]
This shows that $D\bm{u}(\bm{x}) \in \Pi$, thus $\bm{u}(\bm{x})$ is a solution to the simple PDE system.

Let the $\ell\times n$ matrix $p$ given in \Eq{first-order-pfaff} be $ p = \begin{pmatrix} p_{\bm{u}} \\ p_{\bm{w}}\end{pmatrix}$ with $p_{\bm{u}}\in \bR^{d\times n}$ and $p_{\bm{w}} \in \bR^{m \times n}$. The leading order equation, \Eq{first-order-pfaff}, is simply
\[
   p_{\bm{u}} = \sum_{\epsilon = 1}^m w^\epsilon A_{\epsilon} \;,
\] 
while $p_{\bm{w}}$ is initially undetermined. The compatibility equation, \Eq{pfaff_compatable}, becomes 
\[ 
    \sum_{\epsilon=1}^m \left( A_{\epsilon i}^a (p_{\bm{w}})^\epsilon_{k} - A^a_{\epsilon k} (p_{\bm{w}})^\epsilon_{i} \right) = 0,\qquad k\neq i \;.
\]
Both the leading-order equations and the compatibility equations are affine linear in $p$. Hence \Eq{simple_system_pfaffian_forms} is a linear Pfaffian
system.
 
The transverse space is 
\[
    J^\circ_{(\bm{x},\bm{y})} = \left\{(0,0,\delta\bm{w}) \,|\, \delta\bm{w}\in \bR^m \right\} \;.
\]
The vectors $\alpha^1_{\bm{y}},\dots, \alpha^d_{\bm{y}}$ are linearly independent, thus, their Gram matrix is $G = I_d$. Then, the matrix $\Gamma$ defined in \Eq{gamma_matrix} is 
\[ 
    \Gamma = -\begin{pmatrix} \sum_{i=1}^\epsilon A_\epsilon w^\epsilon & 0_{m\times n} \end{pmatrix} \;.
\]
If $e_i$ is the standard basis of $\bR^n$, then the horizontal space $H_{(\bm{x},\bm{y})}$ is spanned by the vectors
\[
   \varepsilon_i= (e_i, \delta \bm{u}_i, 0),\qquad (\delta\bm{u}_i)_a = \sum_{\epsilon=1}^m A_{\epsilon i}^a w^\epsilon \;.
\]
Then the tableau (in the Pfaffian-systems-theory sense) is the image of the linear map $\hat{\pi}: J^\circ_{(\bm{x},\bm{y})} \to \bR^{d\times n}$ that sends each transverse vector $\delta m^\circ$ to a $d\times n$ matrix as defined in \Eq{tableau_entries_general}. 
If $\delta m^\circ = (0,0,\delta\bm{w}) $, we hence obtain the image of $\hat{\pi}$ as 
\[
    \pi_i^a(\delta m^\circ) = -A^a_{\epsilon i} \delta w^\epsilon \;.
\]
it follows that the image of $\hat\pi$ is precisely $\Pi$. This shows that the tableau associated with a simple PDE system according to \Sec{Linear_PDEs} agrees with the Pfaffian-systems-theory tableau described in \Sec{Pfaffian_Tableau}.


\subsection{Second example: A seemingly overdetermined system}\label{sec:CK-Example}

We now represent the simple example in \Sec{CK-CauchySeq} as a sequence of Cauchy problems. This example will also be used to demonstrate all the steps of the Cartan-K\"ahler algorithm with minimal use of differential forms. Note that the order of steps in \Sec{Cartan-Kahler} is not exactly as below. Step 2 of \Sec{Cartan-Kahler} has been combined with step 4, so that the ordering is now: 1) Encode PDE as Pfaffian System, 2) Check for vanishing torsion, and 3) find the tableau and check involutivity. Generally, when using differential forms, the steps are as outlined in \Sec{Cartan-Kahler}. See \App{CK-forms-ver} for this same example presented using the more compact approach of differential forms.

\subsubsection{Encode PDE as Pfaffian system}\label{sec:CK-Example-Pfaff}
We can rewrite the example PDE \Eq{CK-Example} as a Pfaffian system with $n=3$, $\ell = 10$, and $r=4$ of the form
\begin{equation}\label{eq:EDS-CK-Example}
\begin{aligned}
    \alpha^1&=du^1-p^1_1\,dx^1-(u^2-u^1+u^3u^4)\,dx^2-dx^3,\\
    \alpha^2&=du^2-p^2_1\,dx^1-(u^1-u^2)\,dx^2-dx^3,\\
    \alpha^3&=du^3-p^3_1\,dx^1-p^3_2\,dx^2-u^3\,dx^3,\\
    \alpha^4&=du^4-p^4_1\,dx^1-p^4_2\,dx^2+u^4\,dx^3. 
\end{aligned}
\end{equation}
The $\alpha^a$ should be interpreted as $1$-forms for  $(\bm{x},\bm{y}) \in \bR^3\times\bR^{10}$, where
\[
    \bm{x} = (x^1,x^2,x^3)^T,\quad 
    \bm{y} = (u^1,u^2,u^3,u^4,p^1_1,p^2_1,p^3_1,p^3_2,p^4_1,p^4_2)^T.
\]
Note that $p^1_{2}$, $p^1_{3}$, $p^2_{2}$, $p^2_{3}$, $p^3_3$ and $p^4_3$ have been replaced by the six expressions for $\partial_{x^i}u^a$ from the PDE system \Eq{CK-Example}. The coefficient vectors $\alpha^a_{\bm{x}}$ and $\alpha^a_{\bm{y}}$ are given by
\begin{align*}
    \alpha^1_{\bm{x}} &= -(p^1_1,u^2-u^1 + u^3\,u^4,1)^T\\
    \alpha^2_{\bm{x}} & = -(p^2_1,u^1-u^2,1)^T\\
    \alpha^3_{\bm{x}} & = -(p^3_1,p^3_2,u^3)^T\\
    \alpha^4_{\bm{x}} & = -(p^4_1,p^4_2,-u^4)^T,
\end{align*}
and
\[
    \alpha^1_{\bm{y}}  = e_1,\quad \alpha^2_{\bm{y}} =e_2,\quad \alpha^3_{\bm{y}} = e_3,\quad \alpha^4_{\bm{y}} = e_4,
\]
where $e_\mu$, $\mu=1,\dots, 10$, denotes the standard basis for $\bR^{10}$. 
Note that this Pfaffian system satisfies an independence condition because the $\alpha^a_{\bm{y}}$ are linearly independent. The matrices defined by \Eq{Dxx_Notation} are given by
\begin{gather*}
    \dmat^a_{\bm{x}\bm{x}}  = 0,\quad \dmat^a_{\bm{y}\bm{x}} = 0,\quad \dmat^a_{\bm{y}\bm{y}} =0,\quad   a = 1,\dots, 4\\
    \dmat^1_{\bm{x}\bm{y}} = -e_1\,e_5^T - e_2\,(e_2-e_1 + u^3\,e_4 + u^4\,e_3)^T,\quad \dmat^2_{\bm{x}\bm{y}} = -e_1\,e_6^T - e_2\,(e_1-e_2)^T\\
    \dmat^3_{\bm{x}\bm{y}} = -e_1\,e_7^T -e_2\,e_8^T - e_3\,e_3^T,\quad \dmat^4_{\bm{x}\bm{y}} = -e_1\,e_9^T - e_2\,e_{10}^T + e_3\,e_4^T.
\end{gather*}

Thus this Pfaffian system is linear by \Eq{linearity_condition} because $\dmat^a_{\bm{y}\bm{y}} = 0$.

\subsubsection{Check if torsion vanishes}\label{sec:CK-Example-Torsion}
To check if torsion vanishes we must check if the vector of skew-symmetric matrices $C^a$ defined in \Eq{Ca_def} lie in the image of $\hat{\Sigma}$ defined in \Eq{hat_sigma_def}.

We first note that $G = I_{4 \times 4}$  so that from \Eq{gamma_matrix}, the $10 \times 3$ matrix $\Gamma$ is explicitly 
\begin{align*}
    \Gamma =& -\begin{pmatrix}
         p^1_1 & u^2-u^1 + u^3 u^4 &1 \\
         p^2_1 &  u^1-u^2 &  1\\
         p^3_1 &  p^3_2 &  u^3\\
         p^4_1 &  p^4_2 &- u^4 \\
         0 & 0 & 0 \\
         & \ldots  & 
        \end{pmatrix} \;,
\end{align*}
omitting the final five rows of zeros.
By \Eq{Ca_def} and the above formulas for the $\dmat$-matrices, we have $C^a = -\Gamma^T(\dmat^a_{\bm{x}\bm{y}})^T + \dmat^a_{\bm{x}\bm{y}}\Gamma$, or explicitly
\begin{align*}
    C^1 &= \begin{pmatrix} 0 & p^1_1 - p^2_1 - p^4_1 u^3 - p^3_1 u^4 & 0\\  -p^1_1 + p^2_1 + p^4_1 u^3 + p^3_1 u^4 &  0 & 0\\ 0 & 0 & 0\end{pmatrix}\\
    C^2 & = \begin{pmatrix} 0 & p^2_1 - p^1_1 & 0\\ -(p^2_1 - p^1_1) & 0\\ 0 & 0 & 0\end{pmatrix},\quad C^3 = \begin{pmatrix} 0 & 0 & -p^{3}_1\\ 0 & 0 & -p^3_2\\ p^3_1 & p^3_2 & 0\end{pmatrix},\quad C^4 = \begin{pmatrix} 0 & 0 & p^4_1\\ 0 & 0 & p^4_2 \\ -p^4_1 & -p^4_2 & 0\end{pmatrix}.
\end{align*}
The matrices $\Sigma^a$ defined in \Eq{tableau_entries_general} become simply $\Sigma^a = (\dmat^a_{\bm{x}\bm{y}})^T$, giving
\begin{align*}
    \Sigma^1 = \begin{pmatrix} 
        0 &1 &0\\0 &-1 &0\\0 &-u^4 &0\\
        0 &-u^3 &0\\-1 &0 &0\\0 &0 &0\\
        0 &0 &0\\0 &0 &0\\0 &0 &0\\0 &0 &0\\ 
    \end{pmatrix},\,
    \Sigma^2 = \begin{pmatrix} 
        0 &-1 &0\\0 &1 &0\\0 &0 &0\\
        0 & 0 &0\\0 &0 &0\\-1 &0 &0\\
        0 &0 &0\\0 &0 &0\\0 &0 &0\\0 &0 &0\\ 
    \end{pmatrix},\,
    \Sigma^3 = \begin{pmatrix}
        0 &0 &0\\0 &0 &0\\0 &0 &-1\\
        0 &0 &0\\0 &0 &0\\0 &0 &0\\
        -1 &0 &0\\0 &-1 &0\\0 &0 &0\\0 &0 &0\\ 
    \end{pmatrix},\,
    \Sigma^4 = \begin{pmatrix}
        0 &0 &0\\0 &0 &0\\0 &0 &0\\
        0 &0 &1\\0 &0 &0\\0 &0 &0\\
        0 &0 &0\\0 &0 &0\\-1 &0 &0\\0 &-1 &0\\ 
    \end{pmatrix} .
\end{align*}
To find the image of $\hat{\Sigma}$ first notice that its domain space is $(\bR^{\ell\times n})^\circ$, which comprises  $\ell\times n$ matrices $p^\circ$ with columns that are each orthogonal to the $\alpha^a_{\bm{y}}$. By the above formulas for $\alpha^a_{\bm{y}}$, the most general $p^\circ\in (\bR^{\ell\times n})^\circ$ has the form
\begin{align*}
    p^\circ = \begin{pmatrix} 0 & 0 &0\\0 & 0 &0\\0 & 0 &0\\0 & 0 &0\\ v^5_1 & v^5_2 & v^5_3\\v^6_1 & v^6_2 & v^6_3\\v^7_1 & v^7_2 & v^7_3\\v^8_1 & v^8_2 & v^8_3\\v^9_1 & v^9_2 & v^9_3\\v^{10}_1 & v^{10}_2 & v^{10}_3\\\end{pmatrix}
\end{align*}
By \Eq{hat_sigma_def} the components of $\hat{\Sigma}p^\circ$ are therefore
\begin{equation}\label{eq:SigmaOperator}
\begin{aligned}
    (\hat{\Sigma}p^\circ)^1  &= \begin{pmatrix}0 & -v^5_2 & -v^5_3\\ v^5_2 & 0 & 0\\ v^5_3 & 0 & 0 \end{pmatrix},\quad\quad\quad 
   (\hat{\Sigma}p^\circ)^2  = \begin{pmatrix}0 & -v^6_2 & -v^6_3\\ v^6_2 & 0 & 0\\ v^6_3 & 0 & 0 \end{pmatrix} ,\\
    (\hat{\Sigma}p^\circ)^3 &= \begin{pmatrix}0 & v^8_1 - v^7_2 & -v^7_3\\ v^7_2 - v^8_1 & 0 & -v^8_3\\ v^7_3 & v^8_3 & 0\end{pmatrix}, 
    (\hat{\Sigma}p^\circ)^4 = \begin{pmatrix}0 & v^{10}_1 - v^9_2 & -v^{9}_3\\ v^9_2 - v^{10}_1 & 0 & -v^{10}_3\\ v^9_3 & v^{10}_3 & 0 \end{pmatrix} \;.
\end{aligned}
\end{equation}
These formulas reveal that $C$ does indeed lie in the image of $\hat{\Sigma}$. We find that a particular solution of $\hat{\Sigma}p^\circ = C$ is
\begin{align}\label{eq:Sigma-hat_image_ex}
    v^5_2 & = p^2_1 - p^1_1 + p^4_1\,u^3 + p^3_1\,u^4,\quad v^5_3 = 0 \;,\\
    v^6_2 & = p^1_1 - p^2_1,\quad v^6_3 = 0 \;,\\
    v^7_2 &= 0,\quad v^7_3 = p^3_1,\quad  v^8_1 = 0,\quad v^8_3 = p^3_2 \;,\\
    \quad v^9_2 &= 0,\quad v^9_3 =  - p^4_1,\quad v^{10}_1  = 0,\quad v^{10}_3 = -p^4_2 \;.
\end{align}
It follows that the torsion vanishes.

\subsubsection{Check if tableau is involutive\label{sec:check_tableau_example}}
First we compute the tableau of this linear Pfaffian system with independence condition using \Eq{tableau_entries_general}. For each $(\bm{x},\bm{y})\in \mathbb{R}^3\times\mathbb{R}^{10}$ the tableau $\Pi$ will be a subspace of $r\times n = 4\times 3$ matrices given by the image of the mapping $\hat{\pi}$ that sends transverse vectors $\delta m^\circ = (0,\delta\bm{y}^\circ)$ to $4\times 3$ matrices. The most general transverse vector has
\begin{align*}
    \delta\bm{y}^0 = \begin{pmatrix}0 & 0 & 0& 0& v^5& v^6& v^7& v^8 & v^9& v^{10} \end{pmatrix}^T \;.
\end{align*}
The $4\times 3$ matrix that results from applying $\hat{\pi}$ to $\delta m^\circ$ is therefore
\begin{align*}
    \hat{\pi}(\delta m^\circ)  = 
        \begin{pmatrix} -v^5 & 0& 0\\-v^6 & 0& 0\\
                        -v^7 & -v^8& 0\\-v^9 & -v^{10}& 0
        \end{pmatrix} \;.
\end{align*}
The most tedious part of computing $\hat{\pi}$ is finding the $\Sigma^a$ matrices and the general form of the transverse vectors, each of which was already accomplished when computing the torsion. It follows that the tableau is given by
\begin{align*}
    \Pi = \left\{\left. \begin{pmatrix} -v^5 & 0& 0\\-v^6 & 0& 0\\-v^7 & -v^8& 0\\-v^9 & -v^{10}& 0\end{pmatrix}\ \right|\ v^i\in \bR,\quad i=5,\dots, 10 \right\}.
\end{align*}
This tableau can be viewed equivalently as the space of homogeneous linear solutions of the simple PDE system for $\hat{\bm{w}}:\bR^3\rightarrow \bR^4$ given by
\begin{equation*}
\begin{split}
    \partial_{x^2} \hat{w}_1 &= 0 \;,\\
    \partial_{x^2} \hat{w}_2 &= 0 \;,\\  
\end{split}
\quad
\begin{split}
    \partial_{x^3}\hat{w}_1 &= 0 \;,\\
    \partial_{x^3}\hat{w}_2 &= 0 \;,\\
    \partial_{x^3}\hat{w}_3 &= 0 \;.
\end{split}
\end{equation*}

Notice that this simple system is very different from the original PDE system \Eq{CK-Example}; this is to be expected more generally.

Next we apply Cartan's test to assess involutivity of the tableau, as described in \Sec{Linear_PDEs}. 

The subspaces $\Pi_i$ are given by
\begin{align*}
    \Pi_0 = \Pi,\quad \Pi_1 = \left\{\left. \begin{pmatrix} 0 & 0& 0\\0 & 0& 0\\0 & -v^8& 0\\0 & -v^{10}& 0\end{pmatrix}\ \right|\ v^8,v^{10}\in\bR \right\},\quad \Pi_2 = \left\{ \begin{pmatrix} 0 & 0& 0\\0 & 0& 0\\0 & 0& 0\\0 & 0& 0\end{pmatrix} \right\},\quad \Pi_3 = \Pi_2 \;.
\end{align*}
The Cartan characters are therefore
\begin{align*}
    s_1 = 4,\quad s_2 = 2,\quad s_3 = 0 \;.
\end{align*}
In order to now check the inequality \Eq{cartan_test_inequality} we need the dimension of the first prolongation of the tableau $\Pi^{(1)}$. While it is straightforward to compute the prolongation directly, there is a convenient shortcut that reuses some of the work from computing the torsion. The dimension of the first prolongation of the tableau for a linear Pfaffian system with independence condition is always given by the dimension of the null space for $\hat{\Sigma}$. Straightforward inspection of \Eq{SigmaOperator} shows $\text{dim}\,\text{ker}\,\hat{\Sigma} = 8$. It follows that $\text{dim}\,\Pi^{(1)} = 8$. Finally we conclude that Cartan's test passes because $s_1 + 2s_2 + 3 s_3 = 8 = \text{dim}\,\Pi^{(1)}$. The tableau is therefore involutive for all $(\bm{x},\bm{y})\in \mathbb{R}^3\times\mathbb{R}^{10}$.

In summary, since torsion vanishes and the tableau is involutive the Cartan-K\"ahler algorithm now terminates. Since the original PDE system was real analytic, we infer the existence of local analytic solutions near any point in $\bm{x}$-space. Of course, obtaining the solutions to this system is straightforward, as seen in \App{Example}.

\section{Realizability as an exterior differential system}\label{sec:BoozerEDS}

In this section and those that follow, we prove \Th{FTSD}: our goal is to show that, given a magnitude $\cB$ on an open set $Q \subset \bR^3$ and flux functions $F,G,K,L$ with $\Delta = FL-GK > 0$, there is a (local) solution to the PDE system \Eq{BoozerPDE}. In order to apply the Cartan-K\"ahler algorithm described in \Sec{Cartan-Kahler} to this problem, we will reformulate the PDE system \Eq{BoozerPDE} as a linear Pfaffian system with independence condition (cf \Sec{Pfaffian_Defined}). The process of reformulating the PDE as a linear Pfaffian system is done in two parts. First, in this section, we establish the realizability problem as an exterior differential system (EDS) (see \App{CK-forms-ver} for definitions of EDS). Then, we  `linearize' the EDS using prolongation to produce a linear Pfaffian system in \Sec{ProlongEDS}. Finally, we carry out the Cartan-K\"ahler algorithm in \Sec{Proof}.

More concretely, in this section, we establish \Lem{EDSReal} which recasts the original PDE system \Eq{BoozerPDE} as an EDS by using the straightforward differential forms version of realizability given in \Eq{BoozerDiffForm}. However, this formulation of the EDS is not easily linearized into a Pfaffian system (as we will want to do in \Sec{ProlongEDS}). Therefore, in \Lem{SphereEDS} we reformulate the realizability problem as a second, equivalent EDS that can be efficiently linearized. This efficient EDS is a result of a geometrically-inspired choice of `coordinates' (more technically, a choice of \textit{coframe}). The details of this coframe are given primarily in \App{Convenient_Frame} and recorded in \Lem{SphereBundle}.

We begin by representing the realizability problem as an exterior differential system $\cI^0$ on the 8D manifold $M =SQ\times A$ where
\begin{equation}\label{eq:SphereBundle}
    SQ = \{(\bm{x},\bm{b}) \;|\; \bm{x} \in Q ,\; \bm{b} \in \bR^3 ,\; |\bm{b}| = 1\}
\end{equation}
is the unit sphere bundle over $Q$. In particular, for the coordinates $(\bm{x},\bm{b},\psi,\theta,\zeta) \in SQ\times A$, the integral $3$-manifolds of this EDS will be of the form $(\bm{x}, \bm{b}(\bm{x}), \psi(\bm{x}), \theta(\bm{x}), \zeta(\bm{x}))$.

\begin{lemma}[Straightforward EDS Form of Realizability]\label{lem:EDSReal}
    For a given $\cB$ and flux functions $F,G,K,L$, a function $\bm{x} \mapsto (\bm{B}(\bm{x}),\psi(\bm{x}), \theta(\bm{x}), \zeta(\bm{x}))$ is a solution to \Eq{BoozerPDE} if and only if $(\bm{x}, \bm{b}(\bm{x}),\psi(\bm{x}), \theta(\bm{x}), \zeta(\bm{x}))$ is an integral 3-manifold for the EDS $\cI^0$ on $M = SQ\times A$ generated by the 2-forms
    \begin{equation}\label{eq:EDS_I}
    \begin{aligned}
        \beta^1 &= \cB(\psi,\theta,\zeta)\,\iota_{\bm{b}}\Omega - 
        \left(F(\psi)\,d\psi\w  d\theta - G(\psi)\,d\psi\w  d\zeta \right) \;,\\
        \beta^2 & = \cB(\psi,\theta,\zeta)\,\bm{b}^\flat\w  d\psi - 
        \left(K(\psi)\,d\psi\w  d\theta - L(\psi)\,d\psi\w  d\zeta \right) \;,
    \end{aligned}
    \end{equation}
    where $\Omega$ is the Euclidean volume form on $Q$. 
\end{lemma}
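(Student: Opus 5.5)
The plan is to pull back $\beta^1,\beta^2$ along the graph map $s:\bm{x}\mapsto(\bm{x},\bm{b}(\bm{x}),\psi(\bm{x}),\theta(\bm{x}),\zeta(\bm{x}))$ and show that the vanishing of these pullbacks is exactly the system \Eq{BoozerPDE}. The identification between $\bm{B}$ and $\bm{b}$ will be $\bm{B}=\cB(\psi,\theta,\zeta)\,\bm{b}$. Given a solution of \Eq{BoozerPDE}, we have $|\bm{B}|=\cB>0$, so $\bm{b}=\bm{B}/|\bm{B}|$ is a well-defined unit vector field and $s$ lands in $M=SQ\times A$. Conversely, given a section $s$ we define $\bm{B}=\cB\,\bm{b}$, and then $|\bm{B}|=\cB$ holds automatically because $|\bm{b}|=1$. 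So the first equation of \Eq{BoozerPDE} is built into the choice of manifold $SQ$, and only the two form equations remain to be checked.

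First I would record the dictionary between vector calculus and forms in Euclidean $\bR^3$. For a vector field $\bm{v}$, $\iota_{\bm{v}}\Omega$ is the 2-form corresponding to $\bm{v}$ under the Hodge star. For gradients, $d f\w dg$ corresponds to $\nabla f\times\nabla g$. The 2-form $\bm{v}^\flat\w dg$ corresponds to $\bm{v}\times\nabla g$. Pulling back along $s$ replaces the fiber coordinates by the functions, so $s^*(\iota_{\bm{b}}\Omega)=\iota_{\bm{b}(\bm{x})}\Omega$, $s^*(\bm{b}^\flat\w d\psi)=\bm{b}(\bm{x})^\flat\w d\psi(\bm{x})$, and $s^*(d\psi\w d\theta)=d\psi(\bm{x})\w d\theta(\bm{x})$. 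The pulled-back coefficients become $\cB(\psi(\bm{x}),\theta(\bm{x}),\zeta(\bm{x}))$ and $F(\psi(\bm{x}))$, and similarly for $G,K,L$.

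With this dictionary, $s^*\beta^1=0$ is the 2-form version of $\cB\,\bm{b}=F\nabla\psi\times\nabla\theta-G\nabla\psi\times\nabla\zeta$, which is the second line of \Eq{BoozerPDE}. Likewise $s^*\beta^2=0$ is $\cB\,\bm{b}\times\nabla\psi=K\nabla\psi\times\nabla\theta-L\nabla\psi\times\nabla\zeta$, which is the third line. Since the Hodge star is an isomorphism between vector fields and 2-forms, each equivalence runs in both directions.

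Finally, an integral 3-manifold must satisfy the independence condition, i.e. be a graph over $Q$ with $d x^1\w dx^2\w dx^3\neq0$. Every such integral manifold is locally of the form $s(Q)$, and every section $s$ is such a graph, which completes the correspondence. The only step needing care is matching the signs and orientation conventions in the dictionary, in particular $\bm{b}^\flat\w d\psi\leftrightarrow\bm{b}\times\nabla\psi$ and $\iota_{\bm{b}}\Omega\leftrightarrow\bm{b}$. I expect this bookkeeping, rather than any analytic difficulty, to be the main obstacle.
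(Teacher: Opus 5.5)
Your proposal is correct and follows essentially the paper's argument. The paper uses the same dictionary ($X^\flat=\star\iota_X\Omega$, $(X\times Y)^\flat=\star(X^\flat\w Y^\flat)$) to match the vanishing of the pulled-back 2-forms with \Eq{BoozerPDE}; the only cosmetic difference is that it first works on the larger space of vectors $\bm{B}$ with the extra 0-form $f=|\bm{B}|^2-\cB^2$ and then restricts to $f^{-1}(0)\cong SQ\times A$ via $\bm{B}=\cB\,\bm{b}$, whereas you build that constraint in from the start.
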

\begin{proof}
   Recall the well-known relationship between differential forms and vector calculus established by a metric,

   namely, for a function $h$ and vectors $X,Y$ we have $dh^\sharp = \nabla h$, $X^\flat = \star\iota_X \Omega$, and $(X\times Y)^\flat = \star(X^\flat\wedge Y^\flat)$ \cite{MacKay20a}.
   Using these relations, a calculation shows that a function $\Phi:\bm{x} \mapsto (\psi(\bm{x}), \theta(\bm{x}), \zeta(\bm{x}))$ is a solution to \Eq{BoozerPDE} if and only if 
\begin{align*}
    f &= |\bm{B}|^2 - \cB^2(\psi,\theta,\zeta) \;,\\
    \beta^1 & = \iota_{\bm{B}}\Omega - \left(F(\psi)\,d\psi\w  d\theta -
         G(\psi)\,d\psi\w  d\zeta \right) \;,\\
    \beta^2 & = \bm{B}^\flat \w  d\psi - \left(K(\psi)\,d\psi\w  d\theta - 
          L(\psi)\,d\psi\w  d\zeta \right) \;,
\end{align*}
vanish when pulled back to the 3D submanifold given by the graph $(\bm{x}, \bm{B}(\bm{x}),\psi(\bm{x}), \theta(\bm{x}), \zeta(\bm{x}))$. That is, the projectable, integral 3-manifolds of the EDS $\langle f, \beta^1,\beta^2\rangle$ are in correspondence with solutions to \Eq{BoozerPDE}. 

Since the function $f$ in the above EDS must vanish, every integral manifold must lie within the submanifold $f^{-1}(\{0\})$ where $\bm{B} = \cB \bm{b}$. This is diffeomorphic to the 8D manifold $M =SQ\times A$ for the sphere bundle \Eq{SphereBundle}. We may therefore replace the EDS above with the slightly smaller EDS, $(M,\cI^0)$ given in \Eq{EDS_I}.
\end{proof}

Given the EDS \Eq{EDS_I}, we now construct a coframe on $M$ that greatly simplifies the calculations involved in the Cartan algorithm. Generally, a \textbf{coframe} on a manifold $M$ is a basis set of $1$-forms on its cotangent bundle $T^*M$. The current basis is effectively  $\{d\bm{x},(\bm{b}^\perp)^\flat,d\psi,d\theta,d\zeta\}$, where $(\bm{b}^\perp)^\flat$ represents a pair of linearly independent differential $1$-forms annihilating the space normal to $S^2$ in the sphere bundle. 

To formulate the new coframe, we first choose new
$1$-forms $\mu^1,\mu^2$ to represent the angles $(\theta, \zeta)$ by defining
\begin{equation} \label{eq:muDef}
\begin{aligned}
    \mu^1& = F(\psi)\,d\theta-G(\psi)\,d\zeta \;,\\
    \mu^2& = K(\psi)\,d\theta-L(\psi)\,d\zeta \;.
\end{aligned}
\end{equation}

The inverse relation is 
\begin{align*}
    d\theta &= \tfrac{1}{\Delta} (L\mu^1-G\mu^2) \;, \\
    d\zeta  &= \tfrac{1}{\Delta} (K\mu^1-F\mu^2) \;.
\end{align*}
Note the inverse is well-defined since $\Delta=FL-GK$ is assumed to be positive in \Th{FTSD}, recall \Eq{positivity}. Hence, $\mu^1,\mu^2$ can replace $d\theta,d\zeta$ in the coframe. 

Finally, we further simplify calculations by adapting the coframe to the direction of $\bm{b}$. Doing so requires a bit of theoretical framework; see \App{Convenient_Frame} for the rigorous derivation. The key idea is to appropriately establish $SQ$ as a subbundle of the full, (oriented) 6-dimensional orthonormal frame bundle $FQ$ over $Q$. This can be done for any set $U\subset SQ$ such that there is an analytic map---a \textit{choice of gauge}---$\cG:U \to FQ$ of the form $\cG(\bm{x}, \bm{b}) = (\cG_1(\bm{x}, \bm{b}), \cG_2(\bm{x},\bm{b}), \bm{b})$ where the output is an orthonormal basis of $T_{\bm{x}}Q$. Then, the frame bundle $FQ$ has a natural coframe given by the 3-components $\omega^1,\omega^2,\omega^3$ of the \textit{tautological 1-form $\omega$} (also known as the soldering or fundamental 1-form) together with the three connection 1-forms $\omega^3_2, \omega^3_1, \omega^1_2$. Finally, the map $\cG$ can be used to pull-back the coframing of $FQ$ to a coframing of $U$ which inherits the nice properties from $FQ$ (see \Lem{fExists}). In particular, the pullbacks of the $\omega^1, \omega^2,\omega^3,\omega^3_1,\omega^3_2$ form a coframe of $U\subset SQ$. For a general choice of gauge, the pullback $\cG^*\omega^1_2$ is dependent on all five coframe 1-forms. However, for the Euclidean metric $g$, there is a choice of gauge so that 
\begin{equation}\label{eq:omega12_Dependence}
    \omega^1_2 = w_1\, \omega^3_1 + w_2\, \omega^3_2
\end{equation}
for functions $w_1,w_2 :U \to \bR$, and where we have suppressed the pullback by $\cG$ notation for readability. 

For the Euclidean metric, the tautological and connection 1-forms have a straightforward description in terms of the 1-forms $d\bm{x}$ and $d\bm{b}$. Indeed, consider a choice of $\operatorname{SO}(3)$ matrices analytic in $(\bm{x},\bm{b})$ 
\[ \cG(\bm{x},\bm{b}) = \begin{pmatrix}
    \cG_1(\bm{x},\bm{b}) & \cG_2(\bm{x},\bm{b}) & \bm{b}
\end{pmatrix}. \]
Then the tautological 1-form is given explicitly as
\[ \omega = \cG^{T} d\bm{x},\quad\text{ that is,}\quad \begin{pmatrix}
    \omega^1 \\ \omega^2 \\ \omega^3
\end{pmatrix} = \cG^T \begin{pmatrix}
    dx^1 \\ dx^2 \\ dx^3
\end{pmatrix}.\]
In particular, $\omega^3 = b_1 dx^1 + b_2 dx^2 + b_3 dx^3 = \bm{b}^\flat$. The connection 1-forms $\omega^i_j$ are given in matrix form as 
\[ \begin{pmatrix}
    0 & \omega^1_2 & - \omega^3_1 \\
    -\omega^1_2 & 0 & -\omega^3_2 \\
    \omega^3_1 & \omega^3_2 & 0
\end{pmatrix} = \cG^T d\cG. \]
An explicit example is illuminating. Take as coordinates $(x^1,x^2,x^3,\theta,\phi) \in SQ$ with 
\[ \bm{b} = ( \sin\theta\cos\phi, \sin\theta\sin\phi,\cos\theta ).\]
Then a particular choice of gauge is the $\operatorname{SO}(3)$ matrix
\[ \cG = \begin{pmatrix}
    \cos\theta \cos\phi & -\sin\phi &  \sin\theta \cos\phi \\
    \cos\theta \sin\phi & \cos\phi & \sin\theta \sin\phi \\
    -\sin\theta & 0 & \cos\theta
\end{pmatrix}. \]
With this choice of gauge, the tautological 1-form is
\[ \begin{pmatrix}
    \omega^1 \\ \omega^2 \\ \omega^3
\end{pmatrix} = \cG^T d\bm{x} = \begin{pmatrix}
    \cos\theta \cos\phi\, dx^1 + \cos\theta \sin\phi\, dx^2 - \sin\theta\, dx^3 \\
    -\sin\phi\, dx^1 + \cos\phi\, dx^2 \\
    \sin\theta \cos\phi\, dx^1 + \sin\theta \sin\phi\, dx^2 + \cos\theta\, dx^3
\end{pmatrix} .\]
In turn, the connection 1-forms are given as
\[ \omega^3_1 = -d\theta,\qquad \omega^3_2 = -\sin\theta\, d\theta,\qquad \omega^1_2 = -\cos\theta\,d\phi. \]

In particular, $\omega^1_2 = \cot\theta\, \omega^3_2$, so that $w_1 = 0$ and $w_2 = \cot\theta$.

Henceforth, we let $U$ be an open set of $SQ$ with a gauge $\cG:U\to FQ$ that ensures \Eq{omega12_Dependence} holds. Let $M_U = U \times A$. The following lemma captures the desired coframing of $M_U$ and associated structure equations. 

\begin{lemma}[Sphere Bundle Coframe]\label{lem:SphereBundle}
    There is an open neighborhood $U$ of any point $(\bm{x},\bm{b})$ and choice of gauge $\cG:U \to FQ$  so that, on $M_U = U\times A$ the 1-forms $\{\omega^1,\omega^2,\omega^3, \omega^3_1,\omega^3_2,d\psi,\mu^1,\mu^2 \}$ form a coframe of $M$. Moreover, with $\eta^i$ defined as
    \begin{equation}\label{eq:eta-define}
        \eta \equiv (\omega^3_1,\omega^3_2, d\psi, \mu^1,\mu^2) \;,
    \end{equation}
    the coframe has structure equations
    
    \begin{equation}\label{eq:omegaStructure}
    \begin{aligned}
      d\omega^1 &= -(w_1\, \eta^1 + w_2\, \eta^2)\w \omega^2 + \eta^1 \w \omega^3 \;,\\
      d\omega^2 &= (w_1\, \eta^1 + w_2\, \eta^2) \w \omega^1 + \eta^2 \w \omega^3\;, \\
      d\omega^3 &= -\eta^1 \w \omega^1 - \eta^2 \w \omega^2 \;,
    \end{aligned}
    \end{equation}

    \begin{equation}\label{eq:fullStructure}
    \begin{aligned}
        d\eta^1&= - w_1\, \eta^1 \w \eta^2 \;,\\
        d\eta^2&= - w_2\, \eta^1 \w \eta^2 \;,\\
        d\eta^3 &= 0 \;, \\
        d\eta^4& = \tfrac{1}{\Delta}\left[ (F'L-G'K)\, \eta^3\w \eta^4 + (G'F-F'G)\, \eta^3 \w \eta^5 \right]\;,\\
        d\eta^5& = \tfrac{1}{\Delta}\left[{(K'L-L'K})\, \eta^3\w \eta^4 + (L'F-K'G)\, \eta^3 \w \eta^5 \right]\;,
    \end{aligned}
    \end{equation}
\end{lemma}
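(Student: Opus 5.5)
The plan is to verify the coframe property first and then compute each structure equation directly, using the explicit Euclidean formulas $\omega = \cG^T d\bm{x}$ and $\cG^T d\cG$ for the connection forms.

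\textbf{Coframe property.} On $U\subset SQ$, with $\cG(\bm{x},\bm{b})\in\operatorname{SO}(3)$, the three forms $\omega^i$ are an invertible recombination of $dx^1,dx^2,dx^3$. The third column of $\cG^T d\cG$ gives $\omega^3_1 = -\cG_1\cdot d\bm{b}$ and $\omega^3_2 = -\cG_2\cdot d\bm{b}$. Because $\bm{b}\cdot d\bm{b}=0$ on $SQ$ and $\cG_1,\cG_2$ span $\bm{b}^\perp$, these two forms span the vertical cotangent directions of the sphere. On the $A$ factor, $\{d\psi,\mu^1,\mu^2\}$ is a coframe by \Eq{muDef}, since $\Delta>0$. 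Together these give eight independent 1-forms on the eight-dimensional $M_U$.

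\textbf{Equations \Eq{omegaStructure}.} Differentiate $\omega=\cG^T d\bm{x}$ to get $d\omega = d\cG^T\w d\bm{x} = (d\cG^T\cG)\w\omega = -(\cG^Td\cG)\w\omega$, using $\cG\cG^T=I$. Substituting the stated skew matrix for $\cG^Td\cG$ and replacing $\omega^1_2$ by $w_1\eta^1+w_2\eta^2$ via \Eq{omega12_Dependence} yields the three lines directly. The only point needing care here is bookkeeping the sign conventions in the matrix.

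\textbf{First two lines of \Eq{fullStructure}.} Write $\Theta=\cG^Td\cG$. Then $d\Theta = d\cG^T\w d\cG = -\Theta\w\Theta$, which is the flat Maurer--Cartan equation for Euclidean space. Reading off entries gives
\[
d\omega^3_1 = -\omega^3_2\w\omega^1_2 \quad\text{and}\quad d\omega^3_2 = \omega^3_1\w\omega^1_2,
\]
up to the sign convention. Substituting \Eq{omega12_Dependence}, only the cross terms survive, and these produce $-w_1\eta^1\w\eta^2$ and $-w_2\eta^1\w\eta^2$. The relative signs are fixed by matching against the explicit gauge example, where $\omega^3_1=-d\theta$ is closed and $w_1=0$.

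\textbf{Remaining three lines.} The equation $d\eta^3=d\,d\psi=0$ is immediate. For the last two, $d\mu^1 = F'd\psi\w d\theta - G'd\psi\w d\zeta$, and similarly for $\mu^2$. Substituting the inverse relations $d\theta = (L\mu^1-G\mu^2)/\Delta$ and $d\zeta=(K\mu^1-F\mu^2)/\Delta$ gives the stated coefficients.

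\textbf{Main obstacle.} The real difficulty is the existence of a gauge $\cG$ for which $\omega^1_2$ has no $\omega^i$ components, i.e.\ \Eq{omega12_Dependence}. This holds whenever $\cG_1,\cG_2$ depend only on $\bm{b}$ and not on $\bm{x}$. In that case $\cG^Td\cG$ involves only $d\bm{b}$, so $\omega^1_2$ is a combination of the vertical forms $\omega^3_1,\omega^3_2$ with analytic coefficients. The explicit spherical-angle gauge shows this is realizable near any point once the angle chart is rotated so the point avoids the poles. This is where flatness of the Euclidean metric is essential: the frame can be chosen $\bm{x}$-independent.
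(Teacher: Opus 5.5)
Your route is essentially the paper's. The paper takes the gauge $E(\bm{x})R(\bm{b})$ with $E$ parallel from App.~C; in Euclidean coordinates that is an $\bm{x}$-independent frame, exactly your choice. It then gets \Eq{omegaStructure} and the first two lines of \Eq{fullStructure} from the flat Cartan structure equations plus \Eq{omega12_Dependence}, and differentiates \Eq{muDef} for the rest. Your coframe argument, the three $d\omega^i$ lines, and the $d\eta^3,d\eta^4,d\eta^5$ lines are correct.

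\textbf{The $d\eta^1,d\eta^2$ step has a sign error and fails as written.} Your identity $d\Theta=-\Theta\w\Theta$ with $\Theta_{ij}=\omega^i_j$ is correct, and it implies $\Theta_{21}=-\omega^1_2$. The $(3,1)$ and $(3,2)$ entries are therefore $d\omega^3_1=-\Theta_{32}\w\Theta_{21}=\omega^3_2\w\omega^1_2$ and $d\omega^3_2=-\Theta_{31}\w\Theta_{12}=-\omega^3_1\w\omega^1_2$. You wrote both with the opposite sign, effectively reading off $+\Theta\w\Theta$. Substituting \Eq{omega12_Dependence} into your formulas gives $+w_1\,\eta^1\w\eta^2$ and $+w_2\,\eta^1\w\eta^2$. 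So your claim that they ``produce'' $-w_1\,\eta^1\w\eta^2$ and $-w_2\,\eta^1\w\eta^2$ is just the target restated.

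Your proposed fix, calibrating signs on the spherical gauge, cannot settle the first equation. There $w_1=0$ and $\omega^3_1=-d\theta$ is closed, so both sides vanish whichever sign you use.

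The second equation does discriminate, provided you use $\omega^3_2=-\cG_2\cdot d\bm{b}=-\sin\theta\,d\phi$. The printed $-\sin\theta\,d\theta$ is a typo; it would make $\omega^3_1,\omega^3_2$ dependent and $\eta^1\w\eta^2=0$. With the correct form, $d\omega^3_2=-\cos\theta\,d\theta\w d\phi=-w_2\,\eta^1\w\eta^2$ with $w_2=\cot\theta$. Your formula instead gives $+\cos\theta\,d\theta\w d\phi$, so the example refutes it.

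Once you read off the entries of $-\Theta\w\Theta$ correctly, the cross terms give exactly $-w_1\,\eta^1\w\eta^2$ and $-w_2\,\eta^1\w\eta^2$, and the rest of your proof stands.
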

\begin{proof}
    The existence of $U$, gauge $\cG$, and linear independence of the coframe is established in \App{Convenient_Frame}. All that remains is to show is that the structure equations hold. The structure equations for $\omega^i$ and $\eta^1,\eta^2$ hold from \Eq{CartanCurvature} and the relation \Eq{omega12_Dependence}. The exterior derivatives of \Eq{muDef} give the structure equations
    \[
    \begin{aligned}
       d\eta^4 = d\mu^1& = \tfrac{1}{\Delta}d\psi\w [(F'L-G'K)\mu^1 + (G'F-F'G)\mu^2]\ \;,\\
       d\eta^5 = d\mu^2& = \tfrac{1}{\Delta}d\psi\w [(K'\,L-L'K)\mu^1 + (L'F-K'G)\mu^2] \;.
    \end{aligned}
    \]
    which become \Eq{fullStructure} on substituting \Eq{eta-define}. Finally, $d\eta^3 = d^2\psi = 0$, as desired.
\end{proof}

\begin{lemma}[Efficient EDS Form of Realizability] \label{lem:SphereEDS}
  In the coframe $\{\omega,\eta\}$ on $M_U$, the EDS $\cI^0$ is generated by
    \begin{equation}\label{eq:BetaOmega}
    \begin{aligned}
        \beta^1&=\eta^4 \w \eta^3 + \cB\, \omega^1\w \omega^2 \;,\\
        \beta^2&=\eta^5 \w \eta^3 + \cB\, \omega^3\w \eta^3 \;.
    \end{aligned}
    \end{equation}
    with exterior derivatives modulo $\beta^1,\beta^2$ given by  
    \begin{equation} \label{eq:betaDiffs}
    \begin{aligned}    
        d\beta^1&\equiv\cB_2\, \eta^5\w \omega^1\w \omega^2+\cB\, \omega^1\w \omega^3\w \eta^2 - \cB\, \omega^2\w \omega^3\w \eta^1 \mod \{\beta^1,\beta^2\}\; ,\\
        d\beta^2&\equiv\cB\cB_1\, \,\Omega+\cB\,\omega^1\w \eta^1\w d\psi + \cB\,\omega^2\w \eta^2 \w \eta^3 \mod \{\beta^1,\beta^2\}\; ,
    \end{aligned}
    \end{equation}
    where $\cB_1, \cB_2, \cB_3$ are the covariant derivatives of $\cB$ defined through 
    \[ d\cB = \cB_1 \mu^1 + \cB_2 \mu^2 + \cB_3 d\psi  = \cB_3 \eta^3 + \cB_1 \eta^4 + \cB_2 \eta^5. \]
\end{lemma}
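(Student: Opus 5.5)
The plan is a direct computation in the coframe of \Lem{SphereBundle}. There are two steps. First, rewrite the generators $\beta^1,\beta^2$ of \Eq{EDS_I} in terms of $\{\omega,\eta\}$. Second, differentiate using the structure equations \Eq{omegaStructure}--\Eq{fullStructure} and reduce modulo the algebraic ideal generated by $\beta^1,\beta^2$. No new idea is needed beyond careful bookkeeping of signs and of which terms can be traded for others using $\beta^1,\beta^2$.

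\textbf{Step 1: rewriting the generators.} Since the gauge $\cG=(\cG_1,\cG_2,\bm{b})$ is an oriented orthonormal frame, the Euclidean volume form is $\Omega=\omega^1\w\omega^2\w\omega^3$, with $\omega^3=\bm{b}^\flat$. Contracting with the third frame vector $\bm{b}$ gives $\iota_{\bm{b}}\Omega=\omega^1\w\omega^2$. By \Eq{muDef},
\[
F\,d\psi\w d\theta-G\,d\psi\w d\zeta=d\psi\w\mu^1=\eta^3\w\eta^4=-\eta^4\w\eta^3,
\]
and likewise $K\,d\psi\w d\theta-L\,d\psi\w d\zeta=-\eta^5\w\eta^3$. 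Finally $\bm{b}^\flat\w d\psi=\omega^3\w\eta^3$. Substituting these into \Eq{EDS_I} gives \Eq{BetaOmega} directly.

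\textbf{Step 2: $d\beta^1$.} Expand
\[
d\beta^1=d\eta^4\w\eta^3-\eta^4\w d\eta^3+d\cB\w\omega^1\w\omega^2+\cB\,(d\omega^1\w\omega^2-\omega^1\w d\omega^2).
\]
\begin{itemize}
\item By \Eq{fullStructure}, $d\eta^4$ is a multiple of $\eta^3$ wedged with other forms, and $d\eta^3=0$; so the first two terms vanish.
\item In $d\cB\w\omega^1\w\omega^2$, use $\beta^1$ to replace $\cB\,\omega^1\w\omega^2$ by $-\eta^4\w\eta^3$ modulo the ideal. Then the $\cB_3\eta^3$ and $\cB_1\eta^4$ contributions die by $\eta^3\w\eta^3=\eta^4\w\eta^4=0$, leaving only $\cB_2\,\eta^5\w\omega^1\w\omega^2$.
\item In the remaining bracket, the $w_1\eta^1+w_2\eta^2$ parts of $d\omega^1,d\omega^2$ are killed by a repeated $\omega^2$ or $\omega^1$. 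What survives is $\eta^1\w\omega^3\w\omega^2=-\omega^2\w\omega^3\w\eta^1$ and $-\omega^1\w\eta^2\w\omega^3=\omega^1\w\omega^3\w\eta^2$.
\end{itemize}
This yields the first line of \Eq{betaDiffs}.

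\textbf{Step 3: $d\beta^2$.} Expand
\[
d\beta^2=d\eta^5\w\eta^3+d\cB\w\omega^3\w\eta^3+\cB\,d\omega^3\w\eta^3.
\]
\begin{itemize}
\item The first term vanishes because $d\eta^5$ contains $\eta^3$.
\item In $d\cB\w\omega^3\w\eta^3$, the $\cB_3$ term dies. The $\cB_2\eta^5$ term dies after using $\beta^2$ to replace $\cB\,\omega^3\w\eta^3\equiv-\eta^5\w\eta^3$.
\item The surviving $\cB_1\,\eta^4\w\omega^3\w\eta^3=-\cB_1\,\eta^4\w\eta^3\w\omega^3$ is converted via $\beta^1$ ($\eta^4\w\eta^3\equiv-\cB\,\omega^1\w\omega^2$) into $\cB\cB_1\,\omega^1\w\omega^2\w\omega^3=\cB\cB_1\,\Omega$.
\item Lastly, $\cB\,d\omega^3\w\eta^3=\cB\,\omega^1\w\eta^1\w\eta^3+\cB\,\omega^2\w\eta^2\w\eta^3$ by \Eq{omegaStructure}, which is the second line of \Eq{betaDiffs} since $\eta^3=d\psi$.
\end{itemize}

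\textbf{Main obstacle.} The only delicate point is the reduction modulo $\{\beta^1,\beta^2\}$. One must make sure that each substitution multiplies a generator by a form, so that it stays inside the algebraic ideal. One must also track orientation signs: the identity $\iota_{\bm{b}}\Omega=\omega^1\w\omega^2$ relies on $\cG\in\operatorname{SO}(3)$, and the final $\cB_1$ term picks up sign flips from reordering. Careful sign checking at these steps is all that remains.
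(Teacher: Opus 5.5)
Your proposal is correct and follows the paper's proof essentially step for step. You rewrite the generators using \Eq{muDef} together with $\omega^3=\bm{b}^\flat$ and $\omega^1\w\omega^2=\iota_{\bm{b}}\Omega$, then differentiate with the structure equations of \Lem{SphereBundle} and discard multiples of $\beta^1,\beta^2$. The paper records those discarded multiples explicitly as $\tfrac{\cB_1}{\cB}\eta^4\w\beta^1+\tfrac{\cB_3}{\cB}\eta^3\w\beta^1$ and $-\cB_1\beta^1\w\omega^3-\cB_2\beta^2\w\omega^3$, which is exactly what your substitutions amount to. Your signs all check out, and the division by $\cB$ implicit in those substitutions is harmless since $\cB>0$.
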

\begin{proof}
    Using \Eq{muDef} in \Eq{EDS_I} then gives
    \begin{equation} \label{eq:BetaForms}
    \begin{aligned}
        \beta^1& = \mu^1\w  d\psi+\cB\,\iota_{\bm{b}}\Omega \;,\\
        \beta^2& = \mu^2\w d\psi+\cB\, \bm{b}^\flat \w d\psi \;.
    \end{aligned}
    \end{equation}
    Using the properties of the $\omega^i$ in \Eq{omegaProp} guaranteed by \Lem{fExists}, we have $\omega^3 = \bm{b}^\flat$, $\omega^1 \w \omega^2 = \iota_{\bm{b}} \Omega$, and $\Omega = \omega^1\w\omega^2\w\omega^3$. Using these relations between $\bm{b}$ and $\omega^i$, together with the definition $\eta = (\omega^3_1,\omega^3_2, d\psi, \mu^1,\mu^2)$ we obtain \Eq{BetaOmega}.

    The exterior derivatives of $\beta^1, \beta^2$ can be simplified using the structure equations in \Lem{SphereBundle}. A calculation gives
\begin{align*}
     d\beta^1 &= \cB_2\, \eta^5\w \omega^1\w \omega^2 +\cB\,\omega^1\w\omega^3\w\eta^2 - \cB\,\omega^2\w\omega^3\w\eta^1+ \frac{\cB_1}{\cB}\, \eta^4\w\beta^1 + \frac{\cB_3}{\cB}\, \eta^3\w\beta^1 \;, \\
    d\beta^2 &= \cB\cB_1\, \Omega + \cB\, \omega^1\w\eta^1 \w \eta^3 + \cB\, \omega^2 \w \eta^2 \w \eta^3 - \cB_1\, \beta^1\w\omega^3 - \cB_2\, \beta^2\w\omega^3 \;.
\end{align*}
The result follows by taking the above expression modulo $\beta^1,\beta^2$.
    
\end{proof}

\section{Realizability as a Pfaffian system}\label{sec:ProlongEDS}
The EDS $\cI^0$ of \Lem{SphereEDS} is not generated by 1-forms so the methods of \Sec{Pfaffian_Defined} and onward are not immediately applicable; however, one can always create a new EDS $\cI$ generated by 1-forms on a larger manifold so that integral manifolds of $\cI^0$ are contained in integral manifolds of $\cI$ via ``linearization". This is called \textit{prolongation} (see \App{CK-forms-ver} for further details in the language of differential forms and see \App{PDE_to_Pfaffian} in terms of the presentation from \Sec{Pfaffian_Defined}). In our case,  although \Sec{Cartan-Kahler} does not apply directly to $\cI^0$, it is still possible to determine involutivity of $\cI^0$ and it turns out that the EDS $\cI^0$ is not involutive\footnote
{We do not prove that in the current paper.};
thus, it must be prolonged. An important fact is that a prolongation of any involutive EDS is again involutive \cite{EDS-Book}, so one can always choose to work with a prolonged system from the start without loss of generality (but at the price of working on a higher dimensional space) since prolongation may be a necessary step anyway. As such, we will work with the prolonged system to determine involutivity which then yields a proof of \Th{FTSD}. 

First, we compute the integral elements of $\cI^0$ and then use them to construct a prolonged EDS, $\cI$, which will be shown to be on some subset $M_U^{(1)}\subset M^{(1)} = M\times \bR^8$.

Recall that solutions to \Eq{BoozerPDE} can be understood as graphs of $(\bm{B},\Phi)$ so that $(\bm{x},\bm{B}(\bm{x}),\Phi(\bm{x}))$ are the integral manifolds of $\cI^0$, which must be submanifolds of $M$. In a coordinate free form, this implies that the five $1$-forms
\begin{equation}\label{eq:etaDefine}
    \eta = (\omega^3_1,\omega^3_2,d\psi,\mu^1,\mu^2) \;,\\
\end{equation}
in \Lem{SphereBundle} must become linearly dependent on the $\omega^i$ on the graph of any solution $(\bm{B},\Phi)$ of \Eq{BoozerPDE}. This linear dependence defines the possible 3-dimensional integral elements of $\cI^0$ and indeed, we can instead work directly with the 1-forms that are dual to the integral elements of $\cI^0$. This requires working on a slightly larger space where the additional dimensions represent the coefficients of the $\eta$'s dependence on the $\omega^i$'s that are not determined by the vanishing of the 2-forms and 3-forms $\beta^1,\beta^2,d\beta^1,d\beta^2$ when evaluated on integral elements. 

\begin{lemma}[Pfaffian EDS]\label{lem:prolonged-EDS}
The 3-dimensional integral manifolds of the exterior differential system $\cI^0$ are in one-to-one correspondence with the integral manifolds of a linear Pfaffian system $\cI$ on the open subset $M_U^{(1)}\subset M^{(1)}=M\times \bR^8$ generated by the 1-forms 
\begin{equation}\label{eq:alphaDefine}
    \alpha^a = \eta^a - \sum_{i=1}^3 C_{i}^a\, \omega^i \;, \quad a = 1,\ldots, 5 \;,
\end{equation}
where the coordinates on $\bR^8$ are given by 
\begin{equation}\label{eq:REightCoord}
    p^\epsilon = (c_{12},c_{21},\kappa_1,\kappa_2,\kappa_3,\kappa_4,\psi_1, \psi_2) \;,
\end{equation}
and the matrix $C$ is given by 
\begin{equation}\label{eq:Cdefine}
C = \begin{pmatrix}
    \kappa_1 - \tfrac12 \cB_2 & c_{12} & \kappa_2 \psi_1 + \cB_1 \bar{\psi}_2 \\
    c_{21} & - \kappa_1 - \tfrac12 \cB_2 & \kappa_2 \psi_2 - \cB_1 \bar{\psi}_1  \\
    \psi_1 & \psi_2 & 0 \\
    \kappa_3 \psi_1 - \cB \bar{\psi}_2 & \kappa_3 \psi_2 + \cB\bar{\psi_1} & 0 \\
    \kappa_4 \psi_1 & \kappa_4 \psi_2 & -\cB 
\end{pmatrix},\qquad \bar{\psi}_i \equiv \frac{\psi_i}{\psi_1^2 + \psi_2^2}.
\end{equation}
Moreover, the existence of integral manifolds of $\cI^0$ is equivalent to the existence question of integral manifolds for $\cI$. 
\end{lemma}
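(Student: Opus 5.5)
The proof computes the 3-dimensional integral elements of $\cI^0$ over each point of $M_U$. On such an element the $\omega^i$ restrict to a basis, the independence condition $\Omega\neq 0$. So we write $\eta^a = C^a_i\,\omega^i$ with an unknown $5\times 3$ matrix $C$, and impose the vanishing of $\beta^1,\beta^2$ from \Eq{BetaOmega} and of $d\beta^1,d\beta^2$ from \Eq{betaDiffs}. The claim is then that the solution set is exactly the image of the map $\bR^8\supset$ (open set) $\to \bR^{5\times 3}$ given by \Eq{Cdefine}. Once that is shown, the Pfaffian system $\alpha^a=\eta^a - C^a_i\omega^i$ on $M^{(1)}_U$ is the standard prolongation: integral manifolds of $\cI$ project to integral manifolds of $\cI^0$ satisfying $\Omega\neq0$, and conversely every such manifold lifts uniquely by recording its tangent element.

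The computation proceeds in three steps.

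\textbf{Step 1: the 2-forms.} Vanishing of $\beta^2=(\eta^5+\cB\omega^3)\w\eta^3$ says that $\eta^5+\cB\omega^3$ is proportional to $\eta^3$, or that $\eta^3$ vanishes. Write $\eta^3=d\psi=\psi_1\omega^1+\psi_2\omega^2+\psi_3\omega^3$. I would first argue that $\psi_3=0$: the field is tangent to the flux surfaces, since $\bm{b}\cdot\nabla\psi=0$. This should follow by wedging $\beta^1$ with $\omega^3$, which gives $\eta^4\w\eta^3\w\omega^3 = -\cB\,\Omega\neq0$, and combining this with the $\beta^2$ relation. We must also have $(\psi_1,\psi_2)\neq0$, which defines the open subset $M^{(1)}_U$.

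With this, $\eta^5 = \kappa_4\,\eta^3 - \cB\omega^3$, which gives the last row of \Eq{Cdefine}. Next, $\beta^1=0$ reads $\eta^4\w\eta^3 = -\cB\,\omega^1\w\omega^2$. The $\omega^3$-component of $\eta^4$ must vanish, as it would otherwise produce $\omega^i\w\omega^3$ terms. The remaining part of $\eta^4$ is $\kappa_3\eta^3$ plus a particular solution, and $-\cB(\bar\psi_2\omega^1-\bar\psi_1\omega^2)$ works because $(\bar\psi_2\omega^1-\bar\psi_1\omega^2)\w(\psi_1\omega^1+\psi_2\omega^2)=\omega^1\w\omega^2$ up to sign. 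This gives rows 3 to 5.

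\textbf{Step 2: the 3-forms.} Substitute into $d\beta^1\equiv0$ and $d\beta^2\equiv0$; each is a multiple of $\Omega$ on the element, giving two scalar equations that are linear in the first two rows $(C^1_i, C^2_i)$, which are the $\omega^3_1,\omega^3_2$ components. Using $\eta^5\w\omega^1\w\omega^2=-\cB\,\Omega$, the first equation should reduce to the trace condition $C^1_1+C^2_2=-\cB_2$. The second should give one linear relation among $C^1_3$, $C^2_3$ and the $\psi_i$. Parametrizing the six entries of the first two rows subject to these two relations gives four parameters, $\kappa_1,c_{12},c_{21},\kappa_2$, with particular solution $-\tfrac12\cB_2$ on the diagonal and $\cB_1(\bar\psi_2,-\bar\psi_1)$ in the third column. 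The parameter count is $4+2+2=8$, in agreement with $\bR^8$.

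\textbf{Step 3: linearity.} It remains to check that $\cI$ is a linear Pfaffian system, i.e.\ that \Eq{linearity_condition} holds. The transverse directions are the $dp^\epsilon$, and $d\alpha^a\equiv -dC^a_i\w\omega^i$ modulo $\alpha$, using the structure equations \Eq{omegaStructure}--\Eq{fullStructure}. This contains no $dp\w dp$ terms, so the system is linear.

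The main obstacle is Step 2. There one must reduce $d\beta^1$ and $d\beta^2$ carefully on the candidate element, keeping track of the signs and of the $\bar\psi$ normalization, and confirm that the two conditions are independent and compatible. Compatibility holds because the $\omega^1\w\eta^1\w\eta^3$ and $\omega^2\w\eta^2\w\eta^3$ terms always involve rows 1 and 2 paired with $\psi_1,\psi_2\neq0$. I would check this first by direct substitution.
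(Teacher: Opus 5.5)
Your proposal follows the paper's proof: you write $\eta^a=C^a_i\omega^i$ on candidate integral elements, impose $\beta^1,\beta^2,d\beta^1,d\beta^2\equiv 0$, solve with eight free parameters, and invoke the standard prolongation correspondence. Your reductions $C^1_1+C^2_2=-\cB_2$ and $\psi_2C^1_3-\psi_1C^2_3=\cB_1$ agree with \Eq{C-eqs}, and you are more explicit than the paper about the independence condition $\Omega\neq 0$, the open condition $\psi_1^2+\psi_2^2\neq 0$, and linearity.

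One small fix is needed. $\psi_3=0$ does not follow from $\beta^1\w\omega^3$ together with $\beta^2$, since neither constrains $\psi_3$. It does follow from $0=\beta^1\w\eta^3=\cB\,\omega^1\w\omega^2\w\eta^3=\cB\,\psi_3\,\Omega$. The paper itself simply takes $C^3_3=0$ from tangency of $\bm{B}$ to the flux surfaces.
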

\begin{proof}
Since $M$ is an 8-dimensional manifold, then the candidate 3-dimensional integral elements of $\cI^0$ are dual to (annihilated by) the five 1-forms for the space \Eq{etaDefine},
\begin{equation*}
\begin{aligned}
    \alpha^1&=\omega^3_1-C^1_1\omega^1-C^1_2\omega^2-C^1_3\omega^3,\\
    \alpha^2&=\omega^3_2-C^2_1\omega^1-C^2_2\omega^2-C^2_3\omega^3,\\
    \alpha^3&=d\psi-C^3_1\omega^1-C^3_2\omega^2,\\
    \alpha^4&=\mu^1-C^4_1\omega^1-C^4_2\omega^2-C^4_3\omega^3,\\
    \alpha^5&=\mu^2-C^5_1\omega^1-C^5_2\omega^2-C^5_3\omega^3.
\end{aligned}
\end{equation*}
for some matrix of coefficients $C_i^a$.
Notice that $C^3_3=0$ has already been determined by our assumption that $\omega^3=\bm{b}^\flat$ (see \Eq{omegaProp}) and that any solution $\bm{B}$ will be tangent to level surfaces of $\psi$. 
Genuine integral elements of $\cI^0$ will be determined by the 2-forms $\beta^1$ and $\beta^2$ and their exterior derivatives. As such, in order for the $\alpha^i$'s to represent 3-dimensional integral elements for $\cI^0$ then we need 
\[
    \begin{matrix}
    \beta^1&\equiv 0\\
    \beta^2&\equiv 0\\
    d\beta^1&\equiv 0\\
    d\beta^2&\equiv 0
    \end{matrix} \quad \mod\{\alpha^i\} \;.
\]
To do so, we solve for the forms $\{\omega^3_1,\omega^3_2,\mu^1,\mu^2,d\psi\}$ in terms of the $\alpha^i$ so we can then express $\beta^1$ and $\beta^2$ and their exterior derivatives in terms of $\alpha^i$'s as well.
So for $\beta^1$ and $\beta^2$ this yields

\[
    \begin{aligned}
    \beta^1&=(\alpha^5+C^5_1\omega^1+C^5_2\omega^2+C^5_3\omega^3)\wedge(\alpha^3+C^3_1\omega^1+C^3_2\omega^2)+\cB\,\omega^1\wedge\omega^2,\\
    \beta^2&=(\alpha^4+C^4_1\omega^1+C^4_2\omega^2+C^4_3\omega^3+\cB\,\omega^3)\wedge(\alpha^3+C^3_1\omega^1+C^3_2\omega^2).
    \end{aligned}
\]
Since we need to find the possible 3-planes that will annihilate the $\beta^a$'s and their exterior derivatives, it is only necessary to expand the above wedge products modulo the $\alpha^i$'s. This is because the $\alpha^i$'s themselves will annihilate the possible 3-dimensional integral elements. Doing so, we have
\begin{equation}\label{eq:C-eqs}
\begin{aligned}
\beta^1&\equiv (C^4_1C^3_2-C^4_2C^3_1+\cB)\,\omega^1\wedge\omega^2+C^3_1C^4_3\,\omega^3\wedge\omega^1-C^3_2C^4_3\omega^2\wedge\omega^3 \mod \{\alpha^i\},\\
\beta^2&\equiv (C^5_1C^3_2-C^5_2C^3_1)\,\omega^1\wedge\omega^2+(C^5_3+\cB)\,C^3_1\,\omega^3\wedge\omega^1-(C^5_3+\cB)\,C^3_2\,\omega^2\wedge\omega^3\mod \{\alpha^i\},\\
d\beta^1&\equiv (\cB_2C^5_3-\cB(C^1_1+C^2_2))\,\Omega\mod\{\alpha^i,\beta^a\},\\
d\beta^2&\equiv (\cB_1\cB+\cB\,(C^3_1C^2_3-C^3_2C^1_3))\,\Omega\mod\{\alpha^i,\beta^a\}.
\end{aligned}
\end{equation}
We now have an algebraic system of equations that determine the necessary relations on the coefficients of the $\omega^i$'s in the $\alpha^i$'s so that they represent 3-dimensional integral elements.

The system \Eq{C-eqs} is underdetermined for the $C^a_i$, so the remaining $C^a_i$ may be parameterized by the 8 variables \Eq{REightCoord}.
Then the solution matrix of $C^a_i$'s to \Eq{C-eqs} is given by \Eq{Cdefine}.

In particular, \cite[Eq. 105, page 128]{EDS-Book}, rigorously demonstrates the claim that we have a one-to-one correspondence between the integral manifolds of the two EDS $\cI^0$ and $\cI$.  

\end{proof}

\section{The realizability theorem}\label{sec:Proof}
\subsection{Determining the tableau and apparent torsion}

We now show that the EDS $\cI$ passes Cartan's test, and is therefore involutive. Note that $\cI$ is a linear Pfaffian system differentially generated by $\{\alpha^a\}$. In consequence, the structure equations for $\alpha^a$ will be of the form 
\begin{equation}\label{eq:AlphaStructure}
\begin{aligned}
    d\alpha^a &= \sum_{i=1}^3 \Pi^a_i\w\omega^i + \sum_{j,k=1}^3 T^a_{jk} \omega^j\w\omega^k \mod \{\alpha^i\} \;, \\
    \Pi^a_i &= \sum_{\epsilon = 1}^8 A^a_{\epsilon i} \pi^{\epsilon} \;, \quad
    \pi^\epsilon \equiv dp^\epsilon \;.
\end{aligned}
\end{equation}
The first term in \Eq{AlphaStructure} gives the associated tableau while the second term is the apparent torsion. To use the Cartan-K\"ahler theorem, it must be shown that the tableau passes Cartan's test and that the apparent torsion is absorbable. 

We begin by computing the tableau. Differentiation of \Eq{alphaDefine} gives 
\begin{equation}\label{eq:TableauStep1}
    d\alpha^a = d\eta^a - \sum_{i=1}^3 \left(dC^a_i\w \omega^i - C^a_i d\omega^i \right) \;.
\end{equation}
We next look at the terms in \Eq{TableauStep1} to determine which contribute to the tableau, and which to the apparent torsion.

\begin{enumerate}[label={(\alph*)}]
    \item Consider first the $d\eta^a$ term in \Eq{TableauStep1}. Using the structure equations \Eq{fullStructure}, $d\eta^a$ has no dependence on $\pi^\epsilon$ or $\omega^i$. By \Eq{alphaDefine}, $\eta^a = \alpha^a + \sum_i C^a_{i} \omega^i$ so all the terms of the form $\eta^a \w \eta^b$ in the structure contribute only to the apparent torsion modulo $\{\alpha^i\}$.

    \item Next, consider the $\sum_i C^a_i d\omega^i$ term in \Eq{TableauStep1}. The structure equations for $\omega^i$ given in \Eq{omegaStructure} do not depend on $\pi^\epsilon$.
    Again, since $\eta^a = \alpha^a + \sum_{i=1}^3 C^a_{i} \omega^i$, all terms of the form $\eta^a \w \omega^i$ in $d\omega^i$ contribute only to the apparent torsion modulo $\{\alpha^i\}$. 

    \item Finally, consider the $\sum_i dC^a_i \w \omega^i$ term in \Eq{TableauStep1}. These contribute to both the tableau and the apparent torsion. Indeed, $C^a_i$ depends on the variables $p^\epsilon$ as well as $(\psi,\theta,\phi)$ through the functions $\cB, \cB_1, \cB_2$. Observe that, if $h = h(\psi,\theta,\phi)$ is any function of these coordinates and we denote 
    \[ dh = h_1 \mu^1 + h_2 \mu^2 + h_3 d\psi = h_1 \eta^4 + h_2 \eta^5 + h_3 \eta^3, \]
    then modulo $\{\alpha^i\}$ we must have 
    \[ dh = \left[( h_1 \kappa_3 + h_2 \kappa_4 + h_3)\psi_1 - \cB h_1 \bar{\psi}_2\right]\, \omega^1 + \left[( h_1 \kappa_3 + h_2 \kappa_4 + h_3)\psi_2 + \cB h_1 \bar{\psi}_1\right]\, \omega^2 - \cB h_2\, \omega^3.  \]
    Hence, the differentials with respect to $\cB, \cB_1,\cB_2$ will contribute only to the apparent torsion terms. 

    However, all differentials of the remaining variables $p^\epsilon$ contribute to the tableau (and in fact are the only terms that do). Denote by $d_\epsilon C^a_i$ the differential only in terms of the $p^\epsilon$ variables. Then, recalling $\pi^\epsilon \equiv dp^\epsilon$ we have that 
    \begin{equation} \label{eq:Tableau}
        d_\epsilon C^a_i = {\begin{pmatrix}
        \pi^3   & \pi^1     & \cB_1 d\bar{\psi}_2 + \psi_1 \pi^4 + \kappa_2 \pi^7 \\
        \pi^2   & -\pi^3    & -\cB_1 d\bar{\psi}_1 + \psi_2 \pi^4 + \kappa_2 \pi^8 \\
        \pi^7 & \pi^8 & 0 \\
        -\cB d\bar{\psi}_2 + \psi_1 \pi^5+\kappa_3 \pi^7 & \cB d\bar{\psi}_1 + \psi_2 \pi^5 + \kappa_3 \pi^8 & 0 \\
        \psi_1 \pi^6+\kappa^4 \pi^7 & \psi_2 \pi^6+\kappa_4 \pi^8 & 0 
        \end{pmatrix}}
    \end{equation} 
    where 
    \[ d\bar{\psi}_1 = -(\bar{\psi}_1^2 - \bar{\psi}_2^2)\, \pi^7 - 2\bar{\psi}_1 \bar{\psi}_2\, \pi^8,\qquad d\bar{\psi}_2 = -2\bar{\psi}_1\bar{\psi}_2\, \pi^7 + (\bar{\psi}_1^2 - \bar{\psi}_2^2)\, \pi^8.  \]
\end{enumerate}

In conclusion, the tableau is $A^a_{\epsilon i} \pi^\epsilon = -d_{\epsilon}C^a_{i}$ with $d_{\epsilon} C^a_{i}$ given by \Eq{Tableau}. The apparent torsion term is computed as $$ \sum_{i,j=1}^3 T^a_{ij} \omega^i\w\omega^j = d\eta - \sum_{i=1}^3 C^a_i d\omega^i - \sum_{i,j=1}^3 \mathcal{F}_{ij} \omega^i\w\omega^j \, \mod \{\alpha^i\}$$
with $\mathcal{F}_{ij}$ due to the exterior derivative of the functions $\cB,\cB_1,\cB_2$ in $C^a_i$.  

\subsection{The apparent torsion is absorbable}\label{sec:Absorb_Torsion}

We proceed by showing that the apparent torsion is absorbable. To do this, it must be shown that there exists a new coframing $\tilde \pi$ with 
\[
    \pi^{\epsilon} = \tilde{\pi}^{\epsilon} + \sum_{i=1}^3 R^{\epsilon}_i \omega^i
\]
such that the $\alpha$ structure equations \Eq{AlphaStructure} become
\[
    d\alpha^a = \sum_{\epsilon =1}^8 \sum_{i=1}^3 A^a_{\epsilon i} \tilde{\pi}^\epsilon \w \omega^i \mod \{\alpha^i\}.
\]
In other words, the apparent torsion vanishes in a particular choice of coframe. In terms of $R^{\epsilon}_i, A^a_{\epsilon i}, T^a_{ij}$ it is explicitly computed that
\[
    d\alpha^a = \sum_{\epsilon = 1}^8 \sum_{i=1}^3 A^{a}_{\epsilon i} \tilde{\pi}^\epsilon\w \omega^i + \sum_{i,j=1}^3\left(T^a_{ij} - \frac{1}{2}\sum_{\epsilon = 1}^8\left( A^a_{\epsilon i} R^\epsilon_j - A^a_{\epsilon j} R^\epsilon_i \right) \right) \omega^i\w\omega^j\;,
\] 
so that the torsion is absorbable if and only if there exists $R^\epsilon_i$ such that 
\begin{equation}
    \tfrac{1}{2}\sum_{\epsilon =1}^8 \left( A^a_{\epsilon i} R^\epsilon_j - A^a_{\epsilon j} R^\epsilon_i \right) = T^a_{ij}.
\end{equation}

Now, for the particular tableau for our prolonged system, consider the new frame $\tilde{\pi}^\epsilon$ given by taking $R$ as
\begin{equation}\label{eq:RSol}
    \begin{aligned}
        \tfrac12 R &= \begin{pmatrix}
        0  & 0         &  -T^1_{23} + \bar{\psi}_1\bar{\psi_2}\left(T^1_{13}+T^2_{23}\right) -  \left(\kappa_2 - 2 \cB_1 \bar{\psi}_1\bar{\psi}_2\right) T^3_{12} \\
        0           & 0  &  -T^2_{13} + \bar{\psi}_1\bar{\psi_2}\left(T^1_{13}+T^2_{23}\right) + \left(\kappa_2 + 2 \cB_1 \bar{\psi}_1\bar{\psi}_2\right) T^3_{12}  \\
        -T^2_{12}    & -T^1_{12}  & -T^1_{13} + \bar{\psi}_1^2 \left(T^1_{13}+T^2_{23}\right) +\cB_1\left(\bar{\psi}_1^2 - \bar{\psi}_2^2 \right) T^3_{12}   \\
        \bar{\psi}_1\left(T^1_{13}+T^2_{23}\right) & \bar{\psi}_2 \left(T^1_{13}+T^2_{23}\right) & 0 \\
        \bar{\psi}_2 \left(T^4_{12}-\kappa_3 T^3_{12}\right) & -\bar{\psi}_1 \left(T^4_{12}-\kappa_3 T^3_{12}\right) & -\tfrac12 R^5_3 \\
        \bar{\psi}_2 \left(T^5_{12} - \kappa_4 T^3_{12}\right)  & -\bar{\psi}_1 \left(T^5_{12} - \kappa_4 T^3_{12}\right) & -\tfrac12 R^6_3 \\
        0   &  -T^3_{12} &  -T^3_{13} \\
        0 & 0 &  -T^3_{23}
        \end{pmatrix}
        \\
        \tfrac12 R^5_3 &= \left(\bar{\psi}_1 T^4_{13} - \left(\kappa_3 \bar{\psi}_1 + \frac{\bar{\psi}_2}{\psi_1^2 + \psi_2^2} \cB \right)T^3_{13}\right) + \left(\bar{\psi}_2 T^4_{23} - \left(\kappa_3 \bar{\psi}_2 - \frac{\bar{\psi}_1}{\psi_1^2 + \psi_2^2} \cB \right) T^3_{23}\right)\\
        \tfrac12 R^6_3 &=  \bar{\psi}_1 \left( T^5_{13} - \kappa_4 T^{3}_{13} \right) + \bar{\psi}_2 \left( T^5_{23} - \kappa_4 T^3_{23}\right).
    \end{aligned}
\end{equation}
A rather lengthy calculation shows that for the structure equations \Eq{AlphaStructure}
all of the terms in the new apparent torsion vanish except for the possibly nonzero terms 
\[ 
\begin{aligned}
    \tilde{T}^4_{13} =-\tilde{T}^4_{31} &= -\bar{\psi}_2 [T]^4 \;, &
    \tilde{T}^4_{23} = -\tilde{T}^4_{32} &= \bar{\psi}_1 [T]^4 \;,\\
    \tilde{T}^5_{13} = -\tilde{T}^5_{31} &= -\bar{\psi}_2 [T]^5 \;, &
    \tilde{T}^5_{23} = -\tilde{T}^5_{32} &= \bar{\psi}_1 [T]^5 \;,
\end{aligned}
\]
with 
\[
\begin{aligned}
   \relax [T]^4 &= \left(\psi_1 T^4_{23} - \left(\kappa_3 \psi_1 - \cB \bar{\psi}_2\right) T^3_{23} \right) -\left(\psi_2 T^4_{13} - \left(\kappa_3 \psi_2 + \cB \bar{\psi}_1\right) T^3_{13}\right)   \\
    [T]^5 &=  \psi_1 \left( T^5_{23} - \kappa_4 T^3_{23} \right) - \psi_2\left(T^5_{13} - \kappa_4 T^3_{13}\right). 
\end{aligned}
\]
Hence, in order to show that the torsion is absorbable, it remains only to show that $[T]^4 = 0$ and $[T]^5 = 0$. The following lemma reveals that $[T]^4$ and $[T]^5$ arise from $d\beta^1,d\beta^2$. 

\begin{lemma}\label{lem:TsVanish}
    Let $T^a \equiv T^a_{ij} \omega^i\w\omega^j$ be the apparent torsion $2$-forms. It holds that 
    \begin{align*}
        [T]^4 \, \Omega &= T^4\w\eta^3 - T^3\w \eta^4  = d\beta^1 \mod \{\alpha^i\} \\
        [T]^5 \, \Omega &= (T^5 - \kappa_4 T^3) \w \eta^3  = d\beta^2 \mod \{\alpha^i\}.
    \end{align*}
    Hence $[T]^4 = [T]^5 = 0$. 
\end{lemma}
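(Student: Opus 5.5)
The plan is to exploit the fact that the $\alpha^a$ were constructed in \Lem{prolonged-EDS} precisely so that $\beta^1,\beta^2,d\beta^1,d\beta^2$ vanish modulo $\{\alpha^i\}$. The 2-forms $\beta^1,\beta^2$ therefore lie in the algebraic ideal generated by the $\alpha^i$ on $M^{(1)}_U$. We can write
\[
\beta^1 = \alpha^4\w\eta^3 + \gamma^1 \w \alpha^3 + \beta^1_0, \qquad \beta^2 = \alpha^5\w\eta^3+\gamma^2\w\alpha^3+\beta^2_0 ,
\]
where $\beta^a_0$ is the part of $\beta^a$ built purely from the $\omega^i$. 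Substituting $\eta^a=\alpha^a+C^a_i\omega^i$ into \Eq{BetaOmega} shows that $\beta^a_0$ vanishes identically by the first two lines of \Eq{C-eqs}, because $C$ is given by \Eq{Cdefine}. So $\beta^1,\beta^2$ are $\alpha$-linear expressions with explicitly known coefficients.

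Next we take the exterior derivative of these expressions. This gives, for instance,
\[
d\beta^1 \equiv d\alpha^4\w\eta^3 - \alpha^4\w d\eta^3 + (\text{terms containing }\alpha^3\text{ or }d\alpha^3) \mod\{\alpha^i\}.
\]
Modulo $\{\alpha^i\}$ only $d\alpha^4\w\eta^3$ and the $d\alpha^3$-terms survive, the latter entering as $-d\alpha^3\w\eta^4$ once we reorganize using $\eta^5\w\eta^3$ versus $\eta^4$. We then insert \Eq{AlphaStructure}: $d\alpha^a \equiv \Pi^a_i\w\omega^i + T^a \mod\{\alpha^i\}$.

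The crucial check is that the tableau contributions $\Pi^4_i\w\omega^i\w\eta^3 - \Pi^3_i\w\omega^i\w\eta^4$ cancel after setting $\eta^3\equiv\psi_1\omega^1+\psi_2\omega^2$ and $\eta^4 \equiv C^4_i\omega^i$. This cancellation should be automatic. These terms equal $d_\epsilon$ of the quadratic expression $C^4_1C^3_2-C^4_2C^3_1+\cB$ (and its companions) that \Eq{C-eqs} sets identically to zero on $M^{(1)}_U$. Hence their $p$-derivatives vanish, and we can verify this directly from \Eq{Tableau}.

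What remains is $T^4\w\eta^3 - T^3\w\eta^4$. Since $T^a$ is a 2-form in $\omega$, wedging it with $\eta^3,\eta^4$ reduced mod $\alpha$ gives a multiple of $\Omega$. Using $\eta^3\equiv \psi_1\omega^1+\psi_2\omega^2$ and $\eta^4\equiv C^4_i\omega^i$ with $C^4_3=0$, and reading off coefficients, this multiple should be exactly $[T]^4$. The same route applied to $\beta^2$, where $\eta^5$ has $C^5_3=-\cB$ and $C^5_{1,2}=\kappa_4\psi_{1,2}$, yields $(T^5-\kappa_4T^3)\w\eta^3=[T]^5\,\Omega$. Here the $-\cB\,\omega^3$ piece of $\eta^5$ combines with $\cB\omega^3\w\eta^3$ in $\beta^2$ and cancels.

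For the conclusion, we use the last two lines of \Eq{C-eqs}: $d\beta^a\equiv 0$ modulo $\{\alpha^i,\beta^b\}$, and $\beta^b\equiv0$ mod $\alpha$. Together these say $d\beta^a\equiv 0 \mod\{\alpha^i\}$ identically on $M^{(1)}_U$, since $C$ was chosen to satisfy these equations. Hence $[T]^4\Omega=[T]^5\Omega=0$, and as $\Omega\neq0$ on integral elements, $[T]^4=[T]^5=0$.

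The main obstacle is the bookkeeping in the middle step. We must make sure that every tableau term and every torsion contribution arising from $d\cB$ (the $\mathcal{F}_{ij}$) is accounted for consistently, so that the surviving expression is literally the combination defining $[T]^4$ and $[T]^5$. The first thing to try is to work with the quantity $\beta^1-\alpha^4\w\eta^3-\ldots$ as a single function-valued expression, differentiating it rather than tracking terms individually.
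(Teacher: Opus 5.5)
Your proposal is correct, but it reaches the key identity by a different route from the paper.

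\textbf{The paper's route.} The paper works from the explicit formula $T^a\equiv d\eta^a-\sum_i C^a_i\,d\omega^i-\mathcal{F}^a$. It writes out $T^3,T^4,T^5$, with $\mathcal{F}^4=-\bar\psi_2\,d\cB\wedge\omega^1+\bar\psi_1\,d\cB\wedge\omega^2$ and $\mathcal{F}^5=-d\cB\wedge\omega^3$. It then uses $d\eta^3=0$, $d\eta^4\wedge\eta^3=d\eta^5\wedge\eta^3=0$ and $\psi_1C^4_2-\psi_2C^4_1=\cB$. This lets it recognize the result as $d(\cB\,\omega^1\wedge\omega^2)=d\beta^1$ and $\cB\,d\omega^3\wedge\eta^3+d\cB\wedge\omega^3\wedge\eta^3=d\beta^2$.

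\textbf{Your route.} You differentiate the exact identities
$\beta^1=\alpha^4\wedge\eta^3+C^4_i\omega^i\wedge\alpha^3$ and $\beta^2=\alpha^5\wedge\eta^3+\kappa_4(\psi_1\omega^1+\psi_2\omega^2)\wedge\alpha^3$.
These hold identically on $M^{(1)}_U$ because the parametrization solves the first two lines of the algebraic conditions in \Lem{prolonged-EDS}. Differentiating gives $d\beta^1\equiv d\alpha^4\wedge\eta^3-d\alpha^3\wedge\eta^4$ and $d\beta^2\equiv(d\alpha^5-\kappa_4\,d\alpha^3)\wedge\eta^3$ modulo $\{\alpha^i\}$. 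The final coefficient extraction giving $[T]^4,[T]^5$ is then the same as in the paper.

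\textbf{Comparison.} Your argument never needs the coframe structure equations or the explicit $\mathcal{F}^a$. It also shows structurally why these particular combinations of $T^3,T^4,T^5$ matter. They are dictated by the $\alpha$-coefficients of $\beta^1,\beta^2$. The tableau contributes nothing to them for any $\pi$, so no choice of $R$ can change them. Hence the leftover torsion is exactly $d\beta$ restricted to the chosen integral elements, which vanishes by construction. The paper's computation is tied to the specific frame. In exchange it gives an independent concrete check and exhibits $d\beta^1=d(\cB\,\omega^1\wedge\omega^2)$ already on $M$.

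\textbf{Two small remarks.} First, the bookkeeping you flag as the main obstacle essentially disappears. $T^a$ is \emph{defined} by $d\alpha^a\equiv\Pi^a_i\wedge\omega^i+T^a$, so the $\mathcal{F}$ contributions are already inside it. The $\pi\wedge\omega\wedge\omega$ part must vanish because $d\beta^1$ is pulled back from $M$ and so has no $\pi$-component modulo $\alpha$. Your observation that $d_\epsilon(C^4_1C^3_2-C^4_2C^3_1+\cB)=0$ is the explicit form of this fact. Second, $\eta^5$ plays no role in $\beta^1$: the $d\alpha^3$-term enters through $\gamma^1=C^4_i\omega^i\equiv\eta^4$. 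The mix-up likely comes from the transposed indices in the paper's proof of \Lem{prolonged-EDS}.
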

\begin{proof}
    Recalling that 
    \begin{align*}
        (C^3_1, C^3_2, C^3_3) &= (\psi_1,\, \psi_2,\, 0), \\
        (C^4_1, C^4_2, C^4_3) &= (\kappa_3 \psi_1 -  \cB\bar{\psi}_2,\, \kappa_3 \psi_2 + \cB\bar{\psi}_1,\, 0) \\
        (C^5_1, C^5_2, C^5_3) &= (\kappa_4 \psi_1,\, \kappa_4 \psi_2,\, -\cB),
    \end{align*}
    a computation shows
    \begin{align*}
        T^4\w\eta^3 - T^3\w\eta^4 &= (- T^4_{13} C^3_2 + T^4_{23} C^3_1) - ( - T^3_{13} C^4_2 + T^3_{23} C^4_1)\, \Omega \mod \{\alpha^i\} \\
        &= [T]^4\, \Omega \mod \{\alpha^i\} \\
        (T^5-\kappa_4 T^3)\w\eta^3 &= -\left( (T^5_{13} - \kappa_4 T^3_{13}) C^3_2 + (T^5_{23} - \kappa_4 T^3_{23}) C^3_1 \right) \Omega \mod \{\alpha^i\} \\
        &= [T]^5 \, \Omega \mod \{\alpha^i\}.
    \end{align*}
    To show equivalence to $d\beta^i$ modulo $\{\alpha^i\}$, first we compute $T^3, T^4, T^5$. As outlined above, the torsion terms are computed as 
    \[ T^a = d\eta^a - \sum_{i=1}^3 C^a_i d\omega^i - \mathcal{F}^a, \]
    where $\mathcal{F}^a$ are the terms of $dC^a_i\w \omega^i$ that come from exterior derivatives of the functions $\cB,\cB_1,\cB_2$. Explicitly, we have 
    \begin{align*}
        T^3 &= d\eta^3 -\psi_1 d\omega^1 - \psi_2 d\omega^2 \mod \{\alpha^i\}\\
        T^4 &= d\eta^4 - (\kappa_3 \psi_1 -  \cB\bar{\psi}_2) d\omega^1 - (\kappa_3 \psi_2 + \cB\bar{\psi}_1) d\omega^2 + \bar{\psi}_2 d\cB \w \omega^1 - \bar{\psi}_1 d\cB \w \omega^2 \mod \{\alpha^i\} \\
        T^5 &= d\eta^5 - \kappa_4(\psi_1 d\omega^1 + \psi_2 d\omega^2) + \cB d\omega^3+ d\cB \w \omega^3 \mod \{\alpha^i\}.
    \end{align*}
    Now, from the structure equations \Eq{fullStructure} it holds that $d\eta^3 = 0$ and $d\eta^4\w\eta^3 = 0 = d\eta^5 \w\eta^3$. Therefore,
    \begin{align*}
        T^3\w\eta^4 &= -\psi_1 (\kappa_3 \psi_1 -  \cB\bar{\psi}_2) d\omega^1\w\omega^1 - \psi_1 (\kappa_3 \psi_2 + \cB\bar{\psi}_1) d\omega^1\w\omega^2 \\
        &\qquad - \psi_2 (\kappa_3 \psi_1 -  \cB\bar{\psi}_2) d\omega^2\w\omega^1 - \psi_2 (\kappa_3 \psi_2 + \cB\bar{\psi}_1)d\omega^2\w\omega^2 \mod \{\alpha^i\} \\
        T^4\w\eta^3 &= - \psi_1 (\kappa_3 \psi_1 -  \cB\bar{\psi}_2) d\omega^1\w\omega^1 - \psi_2 (\kappa_3 \psi_1 -  \cB\bar{\psi}_2) d\omega^1\w\omega^2 \\
            &\qquad -\psi_1 (\kappa_3 \psi_2 + \cB\bar{\psi}_1) d\omega^2\w\omega^1 - \psi_2(\kappa_3 \psi_2 + \cB\bar{\psi}_1) d\omega^2\w\omega^2 \\
            &\qquad + d\cB\w\omega^1\w\omega^2 \mod \{\alpha^i\}.
    \end{align*}
    Finally,
    \begin{align*}
        T^4\w\eta^3 - T^3\w\eta^4 &= \cB d\omega^1\w\omega^2 - \cB d\omega^2\w\omega^1 + d\cB\w\omega^1\w\omega^2 = d\beta^1 \mod \{\alpha^i\}
    \end{align*}
    as desired.

    For the term $[T]^5$ observe that $T^5 - \kappa_4 T^3 = d\eta^5 +\cB d\omega^3 + d\cB\w\omega^3 \mod \{\alpha^i\}$. Then
    \begin{align*}
        (T^5-\kappa_4 T^3)\w\eta^3 &=  \cB d\omega^3\w\eta^3 + d\cB\w\omega^3 \w\eta^3 = d\beta^2 \mod \{\alpha^i\}
    \end{align*}
    and the result follows.
\end{proof}

By \Lem{TsVanish} we can conclude that the choice of $R$ given in \Eq{RSol} completely absorbs all apparent torsion.

\subsection{The tableau is involutive}\label{sec:Involutive}

In this section it is demonstrated the tableau is involutive, recall \Step{Cartan} in \Sec{CKAlg}. The involutivity of the tableau together with the absorption of all apparent torsion will prove \Th{FTSD}. The tableau will be involutive provided the reduced Cartan characters $s_1, s_2, s_3$ satisfy Cartan's test $s_1 + 2 s_2 + 3 s_3 = r$ where $r$ is the \textit{degree of indeterminacy} \cite{Olver95}.

We begin by computing the Cartan characters of the tableau. The tableau written explicitly in the coframe $(\omega^i, \eta^a, \pi^\epsilon)$ is given by \Eq{Tableau}. Provided the coframe is \textit{generic}, the Cartan character $s_k$ is the number of 1-forms in the $k^{th}$ column of the tableau that are independent of 1-forms in the first $k-1$ columns. Unfortunately, the chosen coframe $(\omega^i, \eta^a,\pi^\epsilon)$ turns out not to be generic. For a generic coframe we take 
\[
    \omega^1 = \tilde{\omega}^1,\qquad 
    \omega^2 = \tilde{\omega}^2,\qquad 
    \omega^3 = \psi_1 \tilde{\omega}^1 + \psi_2 \tilde{\omega}^2 + \tilde{\omega}^3 \;.
\]
Further, assuming without loss of generality that $\psi_1 \neq 0$ on integral elements, the tableau can be brought into normal form by introducing the coordinates $\tp^\epsilon = \tp^\epsilon(c_{12},c_{21}, \kappa_1,\kappa_2,\kappa_3,\kappa_4,\psi_1,\psi_2)$ given explicitly by 
\[
    \begin{aligned}
        \tp^1 &= \kappa_1 + \psi_1 (\kappa_2 \psi_1 + \cB_1 \bar{\psi}_2) \;,
              & \qquad  \tp^5 &= \psi_1 (\kappa_4 - \cB) \;,\\
        \tp^2 &= c_{21} + \psi_1(\kappa_2 \psi_2 - \cB_1 \bar{\psi}_1)   \;,
              & \tp^6 &= c_{12} + \psi_2(\kappa_2 \psi_1 + \cB_1 \bar{\psi}_2) \;,\\
        \tp^3 &= \psi_1     \;,                                     
              & \tp^7 &= -\kappa_1 + \psi_2(\kappa_2 \psi_2 - \cB_1 \bar{\psi}_1) \;, \\
        \tp^4 &= \kappa_3 \psi_1 - \cB \bar{\psi}_2    \;,    
              &  \tp^8 &= \psi_2   \;.
    \end{aligned}
\]
In these new coordinates we have $\alpha^a = \eta^a - \sum_{i=1}^3 \tilde{C}_i^a \tilde{\omega}^i$ where $\tilde{C}$ is given by 
\[
    \tilde{C} = \begin{pmatrix}
        -\tfrac12 \cB_2 + \tp^1    & \tp^6 &\frac{\tp^3 (\tp^1 + \tp^7) + \cB_1 \tp^8}{(\tp^3)^2 + (\tp^8)^2} \\
        \tp^2   & -\tfrac12 \cB_2 +  \tp^7     &  \frac{-\cB_1\tp^3 + (\tp^1 + \tp^7) \tp^8}{(\tp^3)^2 + (\tp^8)^2} \\
        \tp^3   & \tp^8     & 0 \\
        \tp^4   & \frac{\cB + \tp^4\tp^8}{\tp^3} & 0\\
        \tp^5   & \frac{\tp^5 \tp^8}{\tp^3} & -\cB
    \end{pmatrix}.
\]
Then, setting $\tilde{\pi}^{\epsilon} = d\tp^\epsilon$ the tableau in the coframing $(\tilde{\omega}^i, \eta^i, \tilde{\pi}^i)$ is
\[
    \tA^a_{\epsilon i} \tilde{\pi}^\epsilon = -\begin{pmatrix}
        \tilde{\pi}^1 & \tilde{\pi}^6 & \bar{\psi}_1 (\tilde{\pi}^1  + \tilde{\pi}^7) + \tA^1_{33} \tilde{\pi}^3 + \tA^1_{83} \tilde{\pi}^8 \\
        \tilde{\pi}^2 & \tilde{\pi}^7 & \bar{\psi}_2 (\tilde{\pi}^2 + \tilde{\pi}^7) + \tA^2_{33} \tilde{\pi}^3 + \tA^2_{83} \tilde{\pi}^8 \\
        \tilde{\pi}^3 & \tilde{\pi}^8 & 0 \\
        \tilde{\pi}^4 & \frac{\tp^4}{\tp^3} \tilde{\pi}^8 + \frac{\tp^8}{\tp^3} \tilde{\pi}^4 - \frac{\cB+ \tp^4\tp^8}{(\tp^3)^2} \tilde{\pi}^3 & 0 \\
        \tilde{\pi}^5 & \frac{\tp^5}{\tp^3} \tilde{\pi}^8 + \frac{\tp^8}{\tp^3} \tilde{\pi}^5 - \frac{ \tp^5\tp^8}{(\tp^3)^2} \tilde{\pi}^3 & 0
    \end{pmatrix}
\] 
where
\[
    \tA^1_{33} = -\tA^2_{83} = \frac{\partial}{\partial \tp^3} \left(\frac{\tp^3 (\tp^1 + \tp^7) + \cB_1 \tp^8}{(\tp^3)^2 + (\tp^8)^2}\right), \qquad
    \tA^1_{83} = \tA^2_{33} = \frac{\partial}{\partial \tp^8}  \left(\frac{\tp^3 (\tp^1 + \tp^7) + \cB_1 \tp^8}{(\tp^3)^2 + (\tp^8)^2}\right).
\]
Using \Step{Cartan}, it is now immediate that the Cartan characters are $s_1 = 5, s_2 = 3, s_3 = 0$. 

Finally, we compute the degree of indeterminacy $r$. The degree of indeterminacy is the amount of freedom in the solution $R$ to the absorption equation 
$$ T^a_{ij} = \frac{1}{2}\sum_{\epsilon =1}^8 \left( A^a_{\epsilon i} R^\epsilon_j - A^a_{\epsilon j} R^\epsilon_i \right). $$
In other words, it is the dimension of the kernel of the operator $R \mapsto \tfrac{1}{2} \sum_{\epsilon=1}^8 \left( A^a_{\epsilon i} R^\epsilon_j - A^a_{\epsilon j} R^\epsilon_i \right)$. A direct computation shows the degree of indeterminacy is $r = 11$. Indeed, in the original coframing $(\omega^i,\eta^a,\pi^\epsilon)$ with the tableau given in \Eq{Tableau}, the elements $R_{\rm ker}$ in the kernel of the absorption equation are of the form 
\begin{equation}
    R_{\rm ker} = \begin{pmatrix}
        \sigma_1 & \sigma_4 & R^1_3 \\
        \sigma_2 & -\sigma_3 & R^2_3 \\
        \sigma_3 & \sigma_1 & R^3_3 \\
        R^4_1 & \psi_1\, \sigma_5  & \sigma_{11} \\
        R^5_2 & \psi_2\, \sigma_6 & 0 \\
        \psi_1\, \sigma_{7} & \psi_2\, \sigma_{7} & 0 \\
        \psi_1 \psi_2\, \sigma_8 & \psi_1 \psi_2\, \sigma_9 & 0 \\
        \psi_1\psi_2\, \sigma_9 & \psi_1 \psi_2\, \sigma_{10} & 0
    \end{pmatrix},\quad \sigma_1,\dots,\sigma_{11} \in \bR,
\end{equation}
where 
\[
\begin{aligned}
    R^4_1 &= -\psi_2
    \left( \sigma_5 + (\kappa_2 -2\cB_1 \bar{\psi}_1 \bar{\psi}_2)\, \sigma_8 + 2 \cB_1 (\bar{\psi}_1^2 - \bar{\psi}_2^2)\, \sigma_9 + (\kappa_2 + 2\cB_1 \bar{\psi}_1 \bar{\psi}_2)\,\sigma_{10}\right) \;, \\
    R^5_2 &= \psi_1 \left( \sigma_6 + \cB (\bar{\psi}_1^2 - \bar{\psi}_2^2)\, \sigma_8 + 4 \cB \bar{\psi}_1 \bar{\psi}_2\, \sigma_9 - \cB (\bar{\psi}_1^2 - \bar{\psi}_2^2)\, \sigma_{10}\right) \;, \\
    R^1_3 &= \psi_1 \left( \psi_1\, \sigma_5 + \psi_2(\kappa_2 -2\cB_1 \bar{\psi}_1 \bar{\psi}_2)\;\sigma_9 + \cB_1 \psi_2 (\bar{\psi}_1^2 - \bar{\psi}_2^2)\;\sigma_{10} \right) \;, \\
    R^2_3 &= -\psi_2 \left(  
        \psi_2\; \sigma_5
        + (\kappa_2 \psi_2 - \cB_1 \bar{\psi}_1)\; \sigma_8
        - (\kappa_2 \psi_1 + 2 \cB_1 \bar{\psi}_2^2 \psi_2)\;\sigma_9
        + \psi_2 (\kappa_2 + 2\cB_1 \bar{\psi}_1 \bar{\psi}_2)\; \sigma_{10}
    \right) \;, \\
    R^3_3 &= -\psi_1 \psi_2 \left( 
        \sigma_5 + \cB_1 (\bar{\psi}_1^2 - \bar{\psi}_2^2)\; \sigma_9
        + (\kappa_2 + 2\cB_1 \bar{\psi}_1 \bar{\psi}_2)\, \sigma_{10}
    \right) \;.
\end{aligned}
\]
It follows that $s_1 + 2 s_2 + 3s_3 = 5 + 2\times 3 + 3 \times 0 = 11 = r$. Hence, the tableau passes Cartan's test and it can be concluded that it is involutive. The involutivity of the tableau together with the fact that the apparent torsion is absorbable guarantee that the local analytic integral manifolds of $\cI$ exist. Thus we have shown that the realizability theorem, \Th{FTSD}, holds.

\section{Conclusions}
Our main result (\Th{FTSD}) says that, locally, a genuine magnetic field exists in Euclidean $\bR^3$ for any analytic choice of $\cB$ in Boozer coordinates. Moreover, our realizability theorem shows that two functions of one variable and three functions of two variables are needed to specify the Cauchy data for the realizability PDE system \Eq{BoozerPDE} in order to guarantee a unique analytic magnetic field with the given field strength and functions $F,G,K,L$. In this sense, \Th{FTSD} provides information on the amount of freedom available in the space of magnetic fields that have a prescribed field strength.

The information about the required Cauchy data can be used to inform optimization efforts for stellarator design. For example, if  an optimization is first performed to ensure a magnetic field matches some prescribed field strength and $F,G,K,L$, then one knows from \Th{FTSD} that there are at most only two functions of one variable and three functions of two variables of freedom left to optimize for other useful stellarator properties. If optimizing for another property requires more freedom than this, or lives outside the space of possible Cauchy data, then a trade-off must be made. 

There are three interesting questions that are raised by this paper and seem worth pursuing:
\begin{enumerate}
    \item[(1)] Do the local solutions extend globally into an entire toroidal plasma volume, such as a neighborhood of a magnetic axis? If not, what restrictions on the Cauchy data provide a global solution?
    \item[(2)] Can the method lead to an appropriate computational algorithm?
    \item[(3)] Do the magnetic fields realizing a particular magnitude $\cB$ satisfy force-balance relations, such as the MHS equation \Eq{MHS}? What is the relationship between the two free single-variable functions in \Th{FTSD} and the usual free functions encountered in the theory of MHS solutions with nested flux surfaces?  What if one also demands that $\bm{B}$ is quasi-symmetric?
\end{enumerate}

For the question of global existence, (1), there is at least one primary obstruction from the sphere bundle construction in \Sec{BoozerEDS}; however, this puts only a minimal restriction the direction of the magnetic field $B$, so it remains possible that magnetic fields can exist on the entire toroidal plasma volume. An investigation of such global results would require understanding how periodicity can be merged with the Cartan-K\"ahler theorem for EDS. Moreover, extra care needs to be taken near the axis because $\nabla \psi \neq 0$ is an assumption throughout the current paper. However, Boozer coordinates can be defined smoothly near the axis \cite{Burby21}.

To develop an algorithm, question (2), one can in principle recover the sequence of two Cauchy problems in a choice of local coordinates for $\bR^3$. To do so requires unpacking the tableau in \Sec{Proof} using some chosen local coordinates. While this is algebraically complex in general, it is worth exploring in future work. Regardless, the guarantee of existence of analytic solutions locally implies that power series methods can be naively applied without worry of convergence, at least locally.  

For question (3), the following result shows that not all magnetic fields that have Boozer coordinates satisfy the force-balance relation \Eq{MHS}.
\begin{lemma}[Categorization of $\bm{B}$ admitting Boozer Coordinates]
   Let $p$ be a function on any domain $Q\subset \bR^3$ such that $\nabla p \neq 0$ almost-everywhere. A non-vanishing, divergence-free magnetic field $\bm{B}$ admits Boozer coordinates $(\psi,\theta,\zeta)$ with $p = p(\psi)$ in the neighborhood of any regular toroidal surface of $p$ if and only if $(\nabla\times B)\cdot \nabla p = 0$ and $B\cdot\nabla p = 0$
\end{lemma}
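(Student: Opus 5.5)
The proof splits into the two directions of the equivalence.

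\emph{Forward direction.} Suppose $\bm{B}$ admits Boozer coordinates $(\psi,\theta,\zeta)$ with $p=p(\psi)$. From \Eq{BoozerC1}, $\bm{B}\cdot\nabla\psi=0$, and hence $\bm{B}\cdot\nabla p = p'(\psi)\,\bm{B}\cdot\nabla\psi=0$. The computation following \Def{Boozer} uses \Eq{BoozerC2} to show $J^\psi=\nabla\psi\cdot(\nabla\times\bm{B})=\nabla\cdot(\bm{B}\times\nabla\psi)=0$. Multiplying by $p'(\psi)$ then gives $(\nabla\times\bm{B})\cdot\nabla p=0$. This direction is immediate.

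\emph{Converse.} Assume $\bm{B}\cdot\nabla p=0$ and $\bm{J}\cdot\nabla p=0$, where $\bm{J}=\nabla\times\bm{B}$ and $\nabla\cdot\bm{B}=0$. Work near a regular toroidal level set of $p$, choose a flux label $\psi$ with $p=p(\psi)$, and let $\chi(\psi)$ be the toroidal flux. The plan has four steps.
\begin{enumerate}[label=(\roman*)]
\item Since $\bm{B}$ is divergence free and tangent to the tori, the standard Kruskal--Kulsrud/Hamada construction gives straight-field-line angles $(\theta_0,\zeta_0)$ with
\[
\bm{B}=\nabla\chi\times\nabla(\theta_0-\iotab\,\zeta_0),
\]
which is of the form \Eq{BoozerC1} with $F=\chi'$ and $G=\iotab\chi'$.
\item Consider the 2-form $\bm{B}^\flat\w d\psi$, which represents $\bm{B}\times\nabla\psi$. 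Its divergence is $\bm{J}\cdot\nabla\psi=0$, and it is tangent to the tori. Hence on each torus it is a closed 1-form restricted to the surface, namely $\iota^*\bm{B}^\flat$. Its closedness, $d(\iota^*\bm{B}^\flat)=\iota^*(\bm{J}\text{-flux})=0$, follows from $\bm{J}\cdot\nabla\psi=0$. Therefore
\[
\iota^*\bm{B}^\flat = -K(\psi)\,d\theta_0 + L(\psi)\,d\zeta_0 + d\nu
\]
for a single-valued $\nu(\psi,\theta_0,\zeta_0)$, with $K$ and $L$ equal to the poloidal and toroidal circulations (currents). Both are flux functions because the circulations are constant on each surface.
\item Define Boozer angles by $\theta=\theta_0+\iotab\,\lambda$ and $\zeta=\zeta_0+\lambda$, and solve for $\lambda$ so that the covariant angular part of $\bm{B}^\flat$ becomes $-K\,d\theta+L\,d\zeta$. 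This shift preserves the contravariant form from step (i) because $\theta-\iotab\zeta$ is unchanged. The required relation is $(L-\iotab K)\,d\lambda = d\nu$ on each surface, so $\lambda=\nu/(L-\iotab K)$.
\item Divide by $F$: $\Delta = FL-GK=\chi'(L-\iotab K)$, and by \Lem{PositiveJacobian} one has $|\bm{B}|^2=\rho\Delta$ with $\bm{B}\neq0$. So $\Delta$ does not vanish and the division is legitimate. This yields \Eq{BoozerC2}.
\end{enumerate}

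The main obstacle is steps (ii)--(iii): showing that the current condition $\bm{J}\cdot\nabla p=0$ is exactly what makes $\iota^*\bm{B}^\flat$ closed on every torus, and verifying that the non-degeneracy $L-\iotab K\neq0$ holds. For the latter, the natural approach is to argue via $|\bm{B}|^2=\bm{B}\cdot\bm{B}$, expressing it as $\rho\,\chi'(L-\iotab K)$ from the contravariant and covariant forms; this is positive for a non-vanishing field, so the condition holds without further assumptions.

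The hypothesis $\nabla p\neq0$ almost everywhere is used only to ensure that the regular toroidal level sets foliate a neighbourhood, so that $\psi$ can serve as a coordinate there. Smoothness of $\nu$ in $\psi$ follows from smooth dependence of the Hodge decomposition on parameters.
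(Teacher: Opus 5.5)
Your forward direction matches the paper's. For the converse, the paper constructs nothing: it notes that $(\bm{B},\Omega,dp)$ is a flux system with adapted 1-form $\bm{B}^\flat$ and cites Thm.~I.3, Thm.~V.2 and Prop.~V.3 of \cite{Perrella23}. You instead build the classical Boozer angles by hand: straight-field-line angles, then the splitting of the closed form $\iota^*\bm{B}^\flat$ on each torus, then a shift of both angles along $\bm{B}$. That is a legitimate and more transparent route, since it shows that $\bm{J}\cdot\nabla p=0$ is exactly what makes $\iota^*\bm{B}^\flat$ closed. As written, however, it breaks at the step you flagged.

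\textbf{The non-degeneracy argument is circular.} \Lem{PositiveJacobian} assumes $(\psi,\theta,\zeta)$ are already Boozer coordinates with positive Jacobian. Likewise, the identity $|\bm{B}|^2=\rho\,\chi'(L-\iotab K)$ holds only after the shift, once the angular covariant components are $-K,L$. Yet you need $L-\iotab K\neq0$ beforehand, to define $\lambda$ and hence the new angles. In the coordinates you actually have, $(\psi,\theta_0,\zeta_0)$ with Jacobian $\rho_0>0$, the correct pointwise identity is
\[
|\bm{B}|^2=B^{\theta_0}B_{\theta_0}+B^{\zeta_0}B_{\zeta_0}=\rho_0\,\chi'\bigl[(L-\iotab K)+(\partial_{\zeta_0}+\iotab\,\partial_{\theta_0})\nu\bigr].
\]
Its second term has no definite sign. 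The missing idea is to divide by $\rho_0$ and integrate over the torus. Because $\nu$ is single-valued, the derivative term drops out, giving
\[
\chi'(L-\iotab K)=(4\pi^2)^{-1}\int_{\bT^2}|\bm{B}|^2\rho_0^{-1}\,d\theta_0\,d\zeta_0>0.
\]

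\textbf{You also never check that $(\psi,\theta,\zeta)$ is a coordinate system.} With $\lambda=\nu/(L-\iotab K)$, the same identity reads $|\bm{B}|^2=\rho_0\Delta\,(1+\partial_{\zeta_0}\lambda+\iotab\,\partial_{\theta_0}\lambda)$. So the angular Jacobian $1+\partial_{\zeta_0}\lambda+\iotab\,\partial_{\theta_0}\lambda$ is positive. The angle map is therefore a local diffeomorphism of each torus homotopic to the identity, hence a diffeomorphism.

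Separately, writing $F=\chi'$, $G=\iotab\chi'$ tacitly assumes $\chi'\neq0$. Shifting $\theta=\theta_0+G\lambda$, $\zeta=\zeta_0+F\lambda$ with $\lambda=\nu/\Delta$ avoids that assumption. With these additions, your construction is a complete, self-contained replacement for the paper's citation.
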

\begin{proof}
    Suppose that $\bm{B}$ admits Boozer coordinates near every regular toroidal surface of $p$. Choose any regular toroidal surface and let $U$ be the neighborhood of this surface that admits Boozer coordinates. Then, in $U$, using \Eq{BoozerC1},
    \[
        B\cdot \nabla p = \rho(\psi)\left(F(\psi)\partial_\zeta p + G(\psi)\partial_\theta p\right) = 0 \;.
    \]
    Moreover, recalling that $J^\psi = 0$ in Boozer coordinates, implies that 
    $(\nabla\times B)\cdot \nabla p = J^\psi p'(\psi) = 0$.
    Hence, every regular surface of $p$ admits a neighborhood in which $\bm{B}$ behaves as desired. At points where $\nabla p = 0$ the result is automatic. Hence the forward implication holds on the entire domain.

    Now suppose that $(\nabla\times B)\cdot\nabla p = 0 = B\cdot \nabla p$. If $\Omega$ is the Euclidean volume form then, in the language of  \cite{Perrella23} this establishes $(B,\Omega, dp)$ as a \textit{flux system} with an adapted 1-form $B^\flat$. The results from \cite{Perrella23} namely Thm.~I.3 followed by Thm.~V.2 and Prop.~V.3, then establish Boozer coordinates near any regular toroidal surface in $Q$. 
\end{proof}

One may ask instead---or in addition to force-balance---that $\bm{B}$ be quasi-symmetric (QS). Indeed, quasi-symmetric configurations are often presented in terms of $\cB$, and our \Th{FTSD} says any such analytic QS choice of $\cB$ is locally realizable; however, it is possible that the actual flux surface geometries are still highly constrained to known cases. This is at the heart of the paper  \cite{burbyCharacterizationAdmissibleQuasisymmetries2025}

Finally, note that \Th{FTSD} holds for any choice of Riemannian metric $g$. Indeed, the formulation of the realizability problem in terms of the forms $\omega^i, \omega^i_j$ is done for any metric in \App{Convenient_Frame}. The only change in the structure equations \Eq{fullStructure} is to the derivatives $d\omega^i_j$ by the addition of curvature terms and that $\omega^1_2$ now also depends on $\omega^i$. However, the involutivity and absorbability shown in \Sec{Proof} does not depend on these added terms. Thus, the Cartan-K\"ahler theorem shows that \Th{FTSD} holds for any Riemannian metric.

\section*{Acknowledgements}
This work was supported by US Department of Energy Contract DE-FG05-80ET-53088 (JWB) and the Simons Foundation through Grant No. 601972 ``Hidden Symmetries and Fusion Energy" (TK and JDM). ChatGPT (version 5.5) was used to review a draft of Appendix C. This review led to a correction of a small error in the draft of a proof. 

\newpage

\bibliographystyle{plainnat}

\bibliography{cumulative_bib_file.bib}

@PREAMBLE{
 "\providecommand{\noopsort}[1]{}" 
 # "\providecommand{\singleletter}[1]{#1}%" 
}

@book{YangHyperbolicCK,
  title={Involutive hyperbolic differential systems},
  author={Yang, D.},
  year={1987},
  publisher={American Mathematical Society}
}

@book{EDS-Book,
  title={Exterior differential systems},
  author={Bryant, R.L. and Chern, S-S and Gardner, R.B. and Goldschmidt, H.L. and Griffiths, P.A.},
  volume={18},
  year={2013},
  publisher={Springer Science \& Business Media}
}

@book{Ivey16,
   author = {Ivey, T.A. and Landsberg, J.M.},
   title = {Cartan for Beginners: Differential Geometry Via Moving Frames and Exterior Differential Systems},
   publisher = {American Mathematical Society},
   address = {Providence},
   volume = {175},
   edition = {2nd},
   series = {Graduate Studies in Mathematics},
   url = {https://doi.org/10.1090/gsm/175},
   year = {2016}
}

@book{Courant89,
   author = {Courant, R. and Hilbert, D.},
   title = {Methods of Mathematical Physics},
   publisher = {John Wiley \& Sons},
   address = {New York},
   volume = {II},
   series = {Wiley Classics Library},
   year = {1989}
}

@article{Burby21,
   author = {Burby, J.W. and Duignan, N. and Meiss, J.D.},
   title = {Integrability, Normal Forms and Magnetic Axis Coordinates},
   journal = {J.  Math. Phys.},
   volume = {62},
   number = {12},
   pages = {122901},
   url = {https://doi.org/10.1063/5.0049361},
    year = {2021}
}

@article{Burby23,
   author = {Burby, J.W. and Duignan, N. and Meiss, J.D.},
   title = {Minimizing Separatrix Crossings through Isoprominence},
   journal = {Plasma Physics and Controlled Fusion},
   volume = {65},
   number = {4},
   pages = {045004},
   url = {https://doi.org/10.1088/1361-6587/acb968},
   year = {2023}
}

@article{Burby23b,

   author = {Burby, J.W. and MacKay, R.S. and Naik, S.},
   title = {Isodrastic Magnetic Fields for Suppressing Transitions in Guiding-Centre Motion},
   journal = {Nonlinearity},
   volume = {36},
   number = {11},
   pages = {5884},
   url = {https://doi.org/10.1088/1361-6544/acf26a},
   year = {2023}

}

@article{MacKay20a,
   author = {MacKay, R.S.},
   title = {Differential Forms for Plasma Physics},
   journal = {J. Plas. Phys.},
   volume = {86},
   number = {1},
   pages = {925860101},
   url = {https://doi.org/10.1017/S0022377819000928},
   year = {2020}
}

@BOOK{Northrop_1963,
   author       = {T.G. Northrop},
   lccn         = {lc63022462},
   series       = {Interscience tracts on physics and astronomy},
   year         = 1963,
   title        = {The Adiabatic Motion of Charged Particles},
   publisher    = {Interscience Publishers}
}

@article{Kruskal58,
   author = {Kruskal, M. and Kulsrud, R. M.},
   title = {Equilibrium of a Magnetically Confined Plasma in a Toroid},
   journal = {Phys. Fluids},
   volume = {1},
   pages = {265–274},
   url = {https://doi.org/10.1063/1.1705884},
    year = {1958}
}

@ARTICLE{Littlejohn_1981,
   author       = "R. G. Littlejohn",
   title        = "Hamiltonian formulation of guiding center motion",
   year         = "1981",
   journal      = "Phys.\ Fluids",
   volume       = "24",
   pages        = "1730",
   url          = "https://doi.org/10.1063/1.863594"
}

@ARTICLE{Littlejohn_1983,
   author       = "R. G. Littlejohn",
   title 				= "Variational principles of guiding centre motion",
   year         = "1983",
   journal      = "J. Plasma Phys.",
   volume       = "29",
   pages        = "111",
   url          ="https://doi.org/10.1017/S002237780000060X"
}

@book{Hazeltine03,
   author = {Hazeltine, R.D. and Meiss, J.D.},
   title = {Plasma Confinement},
   publisher = {Dover Publications},
   address = {Mineola, NY},
   edition = {2nd},
   url = {https://store.doverpublications.com/0486151034.html},
   year = {2003}
}

@book{Olver95,
   author = {Olver, P.J.},
   title = {Equivalence, Invariants, and Symmetry},
   publisher = {Cambridge Univ. Press},
   url = {https://doi.org/10.1017/CBO9780511609565},
   year = {1995}
}

@ARTICLE{Boozer_pf_1981,
   author       = "A.H. Boozer",
   title        = "Plasma equilibrium with rational magnetic surfaces",
   year         = "1981",
   journal      = "Phys. Fluids",
   volume       = "24",
   pages        = "1999--2003",
   number    ="11",
   url = "https://doi.org/10.1063/1.863297"
}

@ARTICLE{Hamada_1962,
   author       = "S. Hamada",
   title        = "Hydromagnetic equilibria and their proper coordinates",
   year         = "1962",
   journal      = "Nucl. Fusion",
   volume       = "2",
   pages        = "23",
   url =  "https://doi.org/10.1088/0029-5515/2/1-2/005",
   number    =""
}

@article{Perrella23,
   author = {Perrella, D. and Duignan, N. and Pfefferlé, D.},
   title = {Existence of Global Symmetries of Divergence-Free Fields with First Integrals},
   journal = {J. Math. Phys.},
   volume = {64},
   number = {5},
   pages = {052705},
   url = {https://doi.org/10.1063/5.0152213},
   year = {2023}
}

@article{Helander14,
   author = {Helander, P.},
   title = {Theory of Plasma Confinement in Non-Axisymmetric Magnetic Fields},
   journal = {Reports on Progress in Physics},
   volume = {77},
   number = {8},
   pages = {087001},
   url = {https://doi.org/10.1088/0034-4885/77/8/087001},
   year = {2014}
}

@book{Imbert25,
   author = {Imbert-Gérard, L-M and Paul, E. and Wright, A.},
   title = {An Introduction to Stellarators},
   publisher = {SIAM},
   series = {Other Titles in Applied Mathematics},
   url = {https://doi.org/10.1137/1.9781611978223},
   year = {2025}
}

@article{boozerTransportIsomorphicEquilibria1983,
  title = {Transport and Isomorphic Equilibria},
  author = {Boozer, A.H.},
  year = 1983,
  month = feb,
  journal = {Phys. Fluids},
  volume = {26},
  number = {2},
  pages = {496--499},
  issn = {0031-9171},
  url = {https://doi.org/10.1063/1.864166},
  urldate = {2026-08-25},
}

@inproceedings{goriQuasiisodynamicStellarators1997,
  title = {Quasi-Isodynamic Stellarators},
  booktitle = {Theory of {{Fusion Plasmas-Proceedings}} of the {{Joint Varenna-Laussane International Workshop}}},
  author = {Gori, S. and Lotz, W. and N{\"u}hrenberg, J.},
  year = 1997,
  pages = {335--342},
  urldate = {2023-11-02}
}

@article{caryOmnigenityQuasihelicityHelical1997,
  title = {Omnigenity and Quasihelicity in Helical Plasma Confinement Systems},
  author = {Cary, J.R. and Shasharina, S.G.},
  year = 1997,
  month = sep,
  journal = {Physics of Plasmas},
  volume = {4},
  number = {9},
  pages = {3323--3333},
  issn = {1070-664X},
  url = {https://doi.org/10.1063/1.872473},
  urldate = {2026-08-25},
  }

@article{landremanDirectConstructionOptimized2018,
  title = {Direct Construction of Optimized Stellarator Shapes. {{Part}} 1. {{Theory}} in Cylindrical Coordinates},
  author = {Landreman, M. and Sengupta, W.},
  year = 2018,
  month = dec,
  journal = {J. Plas. Phys.},
  volume = {84},
  number = {6},
  publisher = {Cambridge University Press},
  issn = {0022-3778, 1469-7807},
  url = {https://doi.org/10.1017/S0022377818001289},
  }

@article{landremanMagneticFieldsPrecise2022,
  title = {Magnetic {{Fields}} with {{Precise Quasisymmetry}} for {{Plasma Confinement}}},
  author = {Landreman, M. and Paul, E.},
  year = 2022,
  month = jan,
  journal = {Phys. Rev. Lett.},
  volume = {128},
  number = {3},
  pages = {035001},
  publisher = {Amer Physical Soc},
  address = {College Pk},
  issn = {0031-9007, 1079-7114},
  url = {https://doi.org/10.1103/PhysRevLett.128.035001},
  }

@article{goodmanConstructingPreciselyQuasiisodynamic2023,
  title = {Constructing Precisely Quasi-Isodynamic Magnetic Fields},
  author = {Goodman, A.G. and Mata, K. Camacho and Henneberg, S. A. and Jorge, R. and Landreman, M. and Plunk, G. G. and Smith, H. M. and Mackenbach, R. J. J. and Beidler, C. D. and Helander, P.},
  year = 2023,
  month = sep,
  journal = {J. Plas. Phys.},
  volume = {89},
  number = {5},
  pages = {905890504},
  publisher = {Cambridge Univ Press},
  address = {Cambridge},
  issn = {0022-3778, 1469-7807},
  url = {https://doi.org/10.1017/S002237782300065X},
  }

@article{dudtMagneticFieldsGeneral2024,
  title = {Magnetic Fields with General Omnigenity},
  author = {Dudt, D.W. and Goodman, A.G. and Conlin, R. and Panici, D. and Kolemen, E.},
  year = 2024,
  month = feb,
  journal = {J. Plas. Phys.},
  volume = {90},
  number = {1},
  pages = {905900120},
  issn = {0022-3778, 1469-7807},
  url = {https://doi.org/10.1017/S0022377824000151},
  }

@article{clellandBeltramiFieldsNonconstant2020,
  title = {Beltrami {{Fields}} with {{Nonconstant Proportionality Factor}}},
  author = {Clelland, J.N. and Klotz, T.},
  year = 2020,
  month = may,
  journal = {Archive for Rational Mechanics and Analysis},
  volume = {236},
  number = {2},
  pages = {767--800},
  issn = {0003-9527, 1432-0673},
  url = {https://doi.org/10.1007/s00205-019-01481-7},
}

@article{burbyCharacterizationAdmissibleQuasisymmetries2025,
  title = {Characterization of Admissible Quasisymmetries},
  author = {Burby, J.W. and Kallinikos, N. and MacKay, R.S. and Perrella, D. and Pfefferl{\'e}, D.},
  year = 2025,
  month = feb,
  journal = {J. Plas. Phys.},
  volume = {91},
  number = {1},
  pages = {E33},
  issn = {0022-3778, 1469-7807},
  url = {https://doi.org/10.1017/S0022377824001478},
  urldate = {2025-03-03},
}

@article{senguptaStellaratorEquilibriumAxisexpansion2024,
    author = {Sengupta, W. and Rodriguez, E. and Jorge, R. and Landreman, M. and Bhattacharjee, A.},
    title = {Stellarator equilibrium axis-expansion to all orders in distance from the axis for arbitrary plasma beta},
    volume = {90},
    issn = {0022-3778, 1469-7807},
    url = "https://doi.org/10.1017/S002237782400076X",
    number = {4},
    urldate = {2026-09-01},
    journal = {J. Plas. Phys.},
    month = aug,
    year = {2024},
    pages = {905900407},
}

@article{jorgeNearaxisExpansionStellarator2020,
    title = {Near-axis expansion of stellarator equilibrium at arbitrary order in the distance to the axis},
    volume = {86},
    copyright = {https://www.cambridge.org/core/terms},
    issn = {0022-3778, 1469-7807},
    url = {https://www.cambridge.org/core/product/identifier/S0022377820000033/type/journal_article},
    doi = {10.1017/S0022377820000033},
    number = {1},
    urldate = {2026-09-01},
    journal = {J. Plas. Phys.},
    author = {Jorge, R. and Sengupta, W. and Landreman, M.},
    month = feb,
    year = {2020},
    pages = {905860106},
}

@article{landremanMappingSpaceQuasisymmetric2022,
    title = {Mapping the space of quasisymmetric stellarators using optimized near-axis expansion},
    volume = {88},
    issn = {0022-3778, 1469-7807},
    url = {https://www.cambridge.org/core/product/identifier/S0022377822001258/type/journal_article},
    doi = {10.1017/S0022377822001258},
    number = {6},
    urldate = {2026-09-01},
    journal = {J. Plas. Phys.},
    author = {Landreman, M. },
    month = dec,
    year = {2022},
    pages = {905880616},
}

@article{garrenExistenceQuasihelicallySymmetric1991,
    title = {Existence of quasihelically symmetric stellarators},
    volume = {3},
    issn = {0899-8221},
    url = {https://pubs.aip.org/pfb/article/3/10/2822/841899/Existence-of-quasihelically-symmetric-stellarators},
    doi = {10.1063/1.859916},
    number = {10},
    urldate = {2025-07-02},
    journal = {Phys. Fluids B: Plasma Physics},
    author = {Garren, D.A. and Boozer, A.H.},
    month = oct,
    year = {1991},
    pages = {2822--2834},
}

\newpage
\appendix
\begin{center}
\Large{\textbf{Appendices}}
\end{center}

\section{Solution to an Example PDE} \label{app:Example}
To solve the example system \Eq{CK-Example}, we will establish a sequence of two Cauchy problems. The initial data can be specified by two arbitrary functions of one variable and two arbitrary functions of two variables.
Explicitly, the data is given first for $u^1$ and $u^2$ along the curve  $x^2 = x^3 =0$ as
\[
    u^1(x^1,0,0)=f^1(x^1) \;, \text{ and }u^2(x^1,0,0)=f^2(x^1) \;,
\]
and then for $u^3$ and $u^4$, on the surface $x^3 = 0$, as
\[
    u^3(x^1,x^2,0)=g^3(x^1,x^2)\;, \text{ and } u^4(x^1,x^2,0)=g^4(x^1,x^2),
\]
for arbitrary analytic functions $f^1$, $f^2$, $g^3$ and $g^4$. 

Then, to solve the example system \Eq{CK-Example}, we first seek to find the evolution along the $x^2$ direction for $u^1$ and $u^2$, setting $x^3 = 0$. Let 
\[
    u^1(x^1,x^2,0)=g^1(x^1,x^2)\text{ and }u^2(x^1,x^2,0)=g^2(x^1,x^2)
\]
be the solution to this Cauchy problem which can be written
\begin{equation}\label{eq:CK-x2-ev}
\begin{split}
    \partiald{g^1}{x^2}&=g^2-g^1+g^3g^4 \;,\\
    \partiald{g^2}{x^2}&=g^1-g^2 \;,
\end{split}
\quad \text{ with }\quad
\begin{split}
    g^1(x^1,0)&=f^1(x^1) \;,\\
    g^2(x^1,0)&=f^2(x^1) \;,
\end{split}
\end{equation}
i.e., the first two equations of \Eq{CK-Example} for $x^3=0$. Since $g^3$ and $g^4$ are given, this is a system of linear ODEs for $g^1$ and $g^2$, and the solution is easily found. We can take a simple linear combination of the the PDEs \Eq{CK-x2-ev} to obtain
\begin{align*}
   \frac{\partial}{\partial x^2}\left( g^1+g^2\right) &=g^3g^4 \;, \\
   \frac{\partial}{\partial x^2}\left( g^2-g^1\right) &=-2(g^2-g^1)-g^3g^4 \;.
\end{align*}
Using the initial conditions \Eq{CK-x2-ev}, the solution to these ODEs is:
\begin{align*}
    g^1 +g^2 &= f^1(x^1)+f^2(x^1)+I_1(x^1,x^2) \;, \\
    g^2 -g^1 &=e^{-2x^2} \left[f^2(x^1)-f^1(x^1) -I_2(x^1,x^2) \right] \;,
\end{align*}
where 
\begin{align*}
    I_1(x^1,x^2) &=\int_0^{x^2}g^3(x^1,\eta)g^4(x^1,\eta)\,d\eta \;, \\
    I_2(x^1,x^2) &=\int_0^{x^2}e^{2\eta}g^3(x^1,\eta)g^4(x^1,\eta)\,d\eta \;.
\end{align*}
Therefore the solution to \Eq{CK-x2-ev} is
\begin{align*}
    g^1 &=\tfrac{1}{2}\left(f^1+f^2-(f^2-f^1)e^{-2x^2}+I_1+e^{-2x^2}I_2\right) \;, \\
    g^2 &=\tfrac{1}{2}\left(f^1+f^2+(f^2-f^1)e^{-2x^2}+I_1-e^{-2x^2}I_2\right) \;.
\end{align*}

This solution, in conjunction with $g^3$ and $g^4$, specifies the Cauchy data for the evolution in $x^3$, from \Eq{CK-Example}, which becomes
\begin{equation}\label{eq:CK-x3-ev}
\begin{split}
    \partiald{u^1}{x^3}&=1 \;,\\
    \partiald{u^2}{x^3}&=1 \;,\\
    \partiald{u^3}{x^3}&=u^3 \;,\\
    \partiald{u^4}{x^3}&=-u^4 \;,
\end{split} 
\quad \text{ with }\quad
\begin{split}
    u^1(x^1,x^2,0)&=g^1(x^1,x^2) \;,\\
    u^2(x^1,x^2,0)&=g^2(x^1,x^2) \;,\\
    u^3(x^1,x^2,0)&=g^3(x^1,x^2) \;,\\
    u^4(x^1,x^2,0)&=g^4(x^1,x^2) \;.
\end{split}
\end{equation}
Consequently the initial data for \Eq{CK-x3-ev} is determined by the initial values $g^3$ and $g^4$ together with the solution to the Cauchy problem \Eq{CK-x2-ev}. This Cauchy problem is straightforward to solve with the (now known) initial data $g^1$ and $g^2$:
\begin{equation}\label{eq:CK-ex-sol}
\begin{split}
    u^1(\bm{x})&=x^3+g^1(x^1,x^2),\\
    u^2(\bm{x})&=x^3+g^2(x^1,x^2),
\end{split}
\qquad
\begin{split}
    u^3(\bm{x})&=g^3(x^1,x^2)e^{x^3},\\
    u^4(\bm{x})&=g^4(x^1,x^2)e^{-x^3}.
\end{split}
\end{equation}

Note that to guarantee that \Eq{CK-ex-sol} are full solutions to \Eq{CK-Example} we need to check for compatibility, i.e., that these solutions also give all the possible solutions to \Eq{CK-x2-ev}. While  this compatibility condition is not generally obvious, in this case, it is simple. First notice that, critically, $u^3u^4=g^3g^4$. Now since 
$\partial_{x^2}u^1=\partial_{x^2}(x^3+g^1)= \partial_{x^2}g^1$ and 
$\partial_{x_2}u^2=\partial_{x^2}(x^3+g^2)= \partial_{x^2}g^2$,
we reduce back to \Eq{CK-x2-ev}, and we are done.

\section{Cartan-K\"ahler using Differential Forms}\label{app:CK-forms-ver}
Consider a PDE system of $r$ equations with $\ell$ unknowns $\bm{u} = (u^1,\ldots,u^\ell)$ depending upon $n$ variables $\bm{x} = (x^1,\ldots,x^n)$, given by
\begin{equation}\label{eq:PDE-system}
    \cF^j(\bm{x},\bm{u}, D\bm{u}) = 0, \quad j = 1, \ldots, r
\end{equation}
Assume that on some open set in $\bR^{n\times\ell}$ we can solve this system to obtain a set of $r$ first order PDEs
\[
    \partiald{u^a}{x^j} = G^a_j(\bm{x},\bm{u},\widetilde{D\bm{u}}) \;,
\]
for $r$ entries of the Jacobian matrix $D\bm{u}$ and where $\widetilde{D\bm{u}}$ are the $n\ell-r$ remaining entries of the Jacobian.

The Cartan-K\"ahler algorithm determines existence of local analytic solutions to \Eq{PDE-system}, using the following five steps (the first four of where given in \Sec{Cartan-Kahler}):
\begin{enumerate}
\item[(1)] \textbf{Convert a PDE to an Exterior Differential System denoted $\cI$:}
For first order PDE systems we need only work on a manifold $(\bm{x},\bm{u},\bm{p}) \in  \bR^n \times \bR^\ell \times \bR^{n\ell} \equiv \widehat{M}$ (in particular a submanifold which we specify shortly).
Here the added coordinates represent the first derivatives $p^a_i=\partial_{x^i} u^a$ of the dependent variables. 
The EDS is generated by  the differential 1-forms
\[
    \alpha^a=du^a-p^a_1\,dx^1-\cdots-p^a_n\,dx^n, \quad a = 1,\ldots, \ell \;.
\]
In these equations we replace the the $p^a_i$'s that are part of the PDE system \Eq{PDE-system} by the expressions from this system. The remaining $p^a_i$ are place holders for derivatives not occurring in the PDE.
 The implication is that we are now working on a smaller manifold $M=\bR^n\times \bR^\ell\times \bR^{n\ell-r}\subset \widehat{M}$ where $r$ is the number of $p^a_i$ defined by the PDE system. That is, $M$ is defined exactly by the equations $\cF^j(\bm{x},\bm{u}, \bm{p})=0$. 
Solutions to the PDE system live on a submanifold $N$ on which 
\begin{equation}\label{eq:EDS-Forms-version}
    \alpha^a=0 \quad \mbox{and}  \quad  d\alpha^a=0 ,\quad \quad  a = 1, \ldots, \ell. 
\end{equation}
These equations can be thought of as a \textit{differential ideal}
\[
    \cI = \langle \alpha^1, \alpha^2, \ldots, \alpha^\ell \rangle .
\]
It is also helpful to speak only of the 1-form generators, which we denote as $I=\{\alpha^1,\ldots,\alpha^\ell\}$ using the notation $\cI=\langle I \rangle$. Here $\cI$ is an ideal in the sense that all linear combinations and wedge products of the forms in $\cI$ as well as all of their differentials must vanish on graphs of solutions.
The collection $\cI$, together with all of its exterior derivatives and wedge products, is called an \textit{exterior differential system} (EDS). 

Note that in some cases PDEs may be written using $k$-forms with $k>1$ (these might represent, e.g., curl operators), this does not harm the generality of our presentation since, at the cost of adding more derivative variables, we can always arrive at a system described by $1$-forms. This process is called \textit{prolongation}. For example, this is used in \Lem{prolonged-EDS} to obtain the EDS describing the realizability problem in \Sec{ProlongEDS}. 

\textbf{Example:}
Converting equations \Eq{CK-Example}, with $n=3$ and $\ell=4$, into an EDS gives \Eq{EDS-CK-Example}. Note that $p^1_{2}$, $p^1_{3}$, $p^2_{2}$, $p^2_{3}$, $p^3_3$ and $p^4_3$ have been replaced by the six expressions for $\partial_{x^i}u^a$ from the PDE system \Eq{CK-Example}. The matrix $\Gamma$ of \Sec{CK-Example} has entries given precisely by the coefficients of each $dx^i$ in \Eq{EDS-CK-Example}, with the zero entries arising from the extension to the full space $\mathbb{R}^3\times\mathbb{R}^{10}$. 

\item[(2)]\textbf{Differentiate to find Tableau and Torsion:}
Given the EDS $\cI$,  it is convenient to generalize the one forms $\{dx^1,\ldots,dx^n\}$ to a new linearly independent set that we call $\{\omega^1,\ldots,\omega^n\}$.

The point is that the general algorithm does not require that the $\omega^i$ represent the exterior derivatives of the independent variables exactly, and the freedom in the choice can be useful in finding a `natural' coordinate system for the PDE. This notation is used in the proof of \Th{FTSD} in \Sec{ProlongEDS}. 

Differentiation of the EDS \Eq{EDS-Forms-version} gives a system of 2-forms
\begin{equation}\label{eq:pre-struct-alg}
    d\alpha^a= \Pi^a_i\w  \omega^i
        + T^a_{jk}\,\omega^j\w  \omega^k +\gamma^a_b\w \alpha^b , \quad a = 1, \ldots, \ell \;\,
\end{equation}
where we use the summation convention to sum over $i$, $j$, $k$, and $b$.
Here $\Pi$ is an $\ell \times n$ matrix of 1-forms, called the \textit{tableau}, and the $\ell$ different $n \times n$ matrices with elements $T^a_{jk}$ are called the \textit{torsion}.
As in \Eq{AlphaStructure}, the entries of the tableau can be written as $\Pi^a_i=A^a_{\epsilon i}\pi^\epsilon$, where $\pi^\epsilon$ are $1$-forms that arise from the differentiation that are linearly independent from the $\omega^i$ and the $\alpha^a$, but depend upon the $p^\epsilon_i$, e.g., something like $\pi^1=dp^1_1+dp^1_2-3\omega^2$. In particular, the $\pi^\epsilon$ may have components in the $\omega^i$ directions when expressed in terms of $dp^a_i$'s.

We mention here that the $\Pi$ in this appendix captures all the same information as the $\Pi$ in \Sec{Linear_PDEs} and \Sec{Pfaffian_Tableau}, but the two notations are slightly different, which we discuss below in reference to the example.  

Since solutions \Eq{EDS-Forms-version} correspond to $\alpha^a = 0$, the terms $\gamma^a_b\w  \alpha^b$ in \Eq{pre-struct-alg} will actually contribute nothing to the analysis and can be safely ignored. In other words  we can rewrite \Eq{pre-struct-alg} as
\begin{equation}\label{eq:pre-struct-mod}
    d\alpha^a\equiv \Pi^a_i\w  \omega^i 
        + T^a_{jk}\,\omega^j\w  \omega^k \mod \alpha^1, \ldots, \alpha^\ell \;,
\end{equation}
using equivalence, $\equiv$, to denote setting the $\alpha^i$ to zero. We will often use ``mod $I$" or ``mod $\alpha$" to represent this operation. \\

\textbf{Example:} 
For the example of \Sec{CK-Example}, the exterior derivatives of \Eq{EDS-CK-Example} are
\begin{equation}\label{eq:EDS-CK-Example-pre-struct}
\begin{aligned}
    d\alpha^1&=-dp^1_1\w  dx^1-(du^2-du^1+u^4\,du^3+u^3\,du^4)\w  dx^2 \;,\\
    d\alpha^2&=-dp^2_1\w  dx^1-(du^1-du^2)\w  dx^2 \;,\\
    d\alpha^3&=-dp^3_1\w  dx^1-dp^3_2\w  dx^2-du^3\w  dx^3 \;,\\
    d\alpha^4&=-dp^4_1\w  dx^1-dp^4_2\w  dx^2+du^4\w  dx^3 \;.
\end{aligned}
\end{equation}
Using \Eq{EDS-CK-Example} to replace each $du^a$ with terms involving $\alpha^a$ and $dx^i$,
and then dropping the $\alpha^a$ terms with ``mod'' gives  
\begin{equation}\label{eq:EDS-CK-Example-struct}
\begin{split}
    d\alpha^1&\equiv -dp^1_1\w  dx^1-(p^3_1u^4+p^4_1u^3+p^2_1-p^1_1)\,dx^1\w  dx^2  \\
    d\alpha^2&\equiv-dp^2_1\w  dx^1-(p^1_1-p^2_1)\,dx^1\w  dx^2\\
    d\alpha^3&\equiv-dp^3_1\w  dx^1-dp^3_2\w  dx^2-p^3_1\,dx^1\w \,dx^3-p^3_2\,dx^2\w  dx^3\\
    d\alpha^4&\equiv-dp^4_1\w  dx^1-dp^4_2\w  dx^2+p^4_1\,dx^1\w  dx^3+p^4_2\,dx^2\w  dx^3
\end{split}
\begin{split}
    \quad \mod \alpha \;.
\end{split}
\end{equation}
Writing these equations in the matrix form \Eq{pre-struct-mod} then gives
\begin{equation}\label{eq:EDS-CK-Example-struct-matrix}
\begin{bmatrix}
    d\alpha^1\\ 
    d\alpha^2\\ 
    d\alpha^3\\
    d\alpha^4
\end{bmatrix} 
    \equiv \begin{bmatrix} 
        -dp^1_1 & 0 & 0 \\ 
        -dp^2_1 & 0 & 0 \\ 
        -dp^3_1 & -dp^3_2 & 0 \\ 
        -dp^4_1 & -dp^4_2 & 0 
    \end{bmatrix}\w  
    \begin{bmatrix} 
        \omega^1 \\ \omega^2 \\ \omega^3 
    \end{bmatrix}
    +\begin{bmatrix}
        -(p^3_1u^4+p^4_1u^3+p^2_1-p^1_1)\,\omega^1\w  \omega^2\\
        -(p^1_1-p^2_1)\,\omega^1\w  \omega^2\\
        -p^3_1\,\omega^1\w \,\omega^3-p^3_2\,\omega^2\w  \omega^3\\
        p^4_1\,\omega^1\w  \omega^3+p^4_2\,\omega^2\w  \omega^3 
    \end{bmatrix} \mod \alpha \;,
\end{equation}
where $\omega^i=dx^i$ and the `$\w $' in the first matrix multiplication means that the products on entries is replaced by the wedge product. The tableau $\Pi^a_i$ corresponds to the $4\times 3$ matrix of $1$-forms  whose entries are the $dp^a_i$. The torsion $T^a_{ij}$ corresponds to  the vector of $2$-forms in the $\omega^i$. Notice that in this example, the tableau matrices $A^a$ are very simple---each non-zero entry corresponding to a $\pi^\epsilon$ contains a single $dp^a_i$. 

In relation to \Sec{CK-Example}, notice that the $C^1,\ldots,C^4$ matrices are skew symmetric, and that their entries are captured by each of the entries of the $T^1_{ij}\omega^i\wedge\omega^j$ vectors  in \Eq{EDS-CK-Example-struct-matrix}, so that $C^1$ corresponds exactly to the functions $T^1_{ij}$, and similarly for $C^2$, etc. 

\item[(3)] \textbf{Eliminate Torsion:}
Since solutions to the PDE system \Eq{EDS-Forms-version} must have $d\alpha^a = 0$, the torsion functions $T^a_{ij}$ form an obstruction to existence of solutions: they capture conditions under which partial derivatives may fail to commute between the $\omega^i$ and $\omega^j$ directions, for $i\neq j$. 

As such, we must eliminate the torsion as much as possible. This can be done by \textit{absorbing} these terms into the tableau. To do this we can modify the linearly independent (in particular, non-zero) $\pi^\epsilon$ to give new tableau $1$-forms, i.e. $\pi^\epsilon \to \tilde{\pi}^\epsilon$.  
There are two possible outcomes:
\begin{enumerate}
        \item All of the $T^a_{jk}$ terms can be ``absorbed" into the tableau. In this case we call the torsion \textit{absorbable}. 
        \item After absorbing as much torsion as possible, there still remain some $T^a_{jk}$ terms. In this case the remaining terms impose new algebraic constraints or differential equations. One must then start over by modifying the EDS $\cI$ in step (2). This process continues as long as some of the torsion cannot be absorbed. 
\end{enumerate} 
If the torsion cannot be ultimately absorbed, then either the system has only the zero solution or else there are no solutions that satisfy all the additional equations. 

The goal in either case is to reduce \Eq{pre-struct-mod} to the form
\begin{equation}\label{eq:EDS-reduced}
         d\alpha \equiv \tilde{\Pi}\w  \omega \mod \alpha
\end{equation}
With the absorbed torsion one can move onto the next step with the modified tableau. 

There is an abstract linear algebra formulation of this process, which is partially presented \Sec{Cartan-Kahler}.
Let $J=\{\alpha^1,\ldots,\alpha^\ell,\omega^1,\ldots,\omega^n\}$ and $J/I=\{\omega^1,\ldots,\omega^n\}$. The tableau $\Pi$ can be understood pointwise as a map $\Pi:J^\perp \to I^*\otimes(J/I)$. Denote the vector subspace $A=\text{Im}(\Pi)$ (this $A$ is also referred to as the tableau) and define the skew-symmetrization map
$\sigma: I^* \otimes (J/I)\otimes(J/I)\to I^*\otimes((J/I\wedge(J/I))$ where $(J/I)\wedge(J/I)$ is the degree 2 exterior algebra of $J/I$, and is what captures the skew-symmetrization. Define 
\begin{equation}\label{eq:Spencer-Cohomology}
    H^{0,2}(A)=\frac{I^*\otimes((J/I)\wedge(J/I))}{\sigma(A\otimes (J/I))}.
\end{equation}
Since the torsion terms $T^a_{ij}\omega^i\wedge \omega^j$ can be interpreted as elements of $I^*\otimes ((J/I)\wedge(J/I))$ then they admit a representative $[T]\in H^{0,2}(A)$. In particular, torsion is absorbable if $H^{0,2}(A)=0$. The algebraic object $H^{0,2}(A)$ is a \textit{Spencer cohomology group}, which can encode another approach to involutivity \cite{EDS-Book}.

\textbf{Example:}
For the PDE system \Eq{CK-Example}, the equations \Eq{EDS-CK-Example-struct-matrix} appear to have non-trivial torsion terms. However, we can actually absorb all of this torsion into the tableau. Indeed, notice that we can re-write \Eq{EDS-CK-Example-struct-matrix} as 
\begin{equation}\label{eq:EDS-CK-Example-struct-matrix-no-tor}
    \begin{bmatrix}
        d\alpha^1\\ 
        d\alpha^2\\ 
        d\alpha^3\\
        d\alpha^4
    \end{bmatrix} = \begin{bmatrix} -dp^1_1+(p^3_1u^4+p^4_1u^3+p^2_1-p^1_1)\omega^2 & 0 & 0 \\ -dp^2_1+(p^1_1-p^2_1)\,\omega^2 & 0 & 0 \\ -dp^3_1+p^3_1\,\omega^3 & -dp^3_2+p^3_2\,\omega^3 & 0 \\ -dp^4_1-p^4_1\,\omega^3 & -dp^4_2-p^4_2\,\omega^3 & 0  \end{bmatrix}\w  \begin{bmatrix} \omega^1 \\ \omega^2 \\ \omega^3 \end{bmatrix} \;,
\end{equation}
which gives a new tableau $\tilde{\Pi}$ given by the above $4\times 3$ matrix of 1-forms. Comparing against the presentation in \Sec{CK-Example}, we see that the particular solution given to $\hat{\Sigma}p^\circ=C$ is captured by the coefficients of the $\omega^i$ modifications of the $dp^a_i$ entries of $\Pi$ that yield $\tilde{\Pi}$ e.g. the $v^5_2$ of equation 
\Eq{Sigma-hat_image_ex} is exactly the coefficient of $\omega^2$ in $\tilde{\Pi}^1_1$ of \Eq{EDS-CK-Example-struct-matrix-no-tor}. Generally, the explicit skew-symmetrization captured by a particular solution of $\hat{\Sigma}p^\circ=C$ precisely encodes the wedge product compatibility of the torsion terms $T^a_{ij}\omega^i\wedge \omega^j$ with the ``shape" of the matrix of 1-forms $\Pi$. 

For simplicity of notation, we will abuse notation by dropping the tilde so that $\tilde{\Pi} \to \Pi$.

\item[(4)] \textbf{Apply Cartan's Test to the Tableau:} 
The tableau is fundamentally a linear algebraic object that captures a pointwise linearization of the PDE. Cartan developed a test that can be used on $\Pi$ to determine if the PDE admits local analytic solutions via a sequence of Cauchy problems. Here is a \textit{quick version} of this test.
\begin{enumerate}
    \item For the tableau $\Pi$ in \Eq{EDS-reduced}, let $s_k$ be the number of linearly independent $1$-forms in column $k$ that are also linearly independent of the $1$-forms in the previous $k-1$ columns of $\Pi$. Each $s_k$ is called a \textit{Cartan Character}. \\
    
\textbf{Example:} For the example \Eq{EDS-CK-Example-struct-matrix-no-tor}, we see that the first column of $\Pi$ has all entries linearly independent and so $s_1=4$. Similarly the two entries in the second column of $\Pi$ are linearly independent from each other as well as the entries in the $1^{st}$ column and so $s_2=2$. Finally, we see that $s_3=0$ since the 3rd column of $\Pi$ has only zero entries. 

    \item Compute the degrees of freedom in the tableau, which we denote as $s^P.$ 
    To do this, we need to find the dimensions of the kernel of a skew-symmetrization induced by the wedge product that only incorporates the 1-forms $\omega^i$. That is, we
    create a matrix of $1$-forms $\Lambda=[\lambda^a_i]$ in the same ``shape" as $\Pi$ (i.e. in the same subspace of $s\times n$ matrices generated by $\Pi$) by modifying the  $\Pi^a_i$ that are linearly independent by  
    \[
        \Pi^a_i \mapsto \Pi^a_i+\lambda^a_i=\Pi^a_i+h^a_{ij}\omega^j. 
    \]
    The number, $s^P,$ of free variables $h^a_{ij}$ from solving the matrix equations
     \[
        (\Pi+\Lambda)\w  \omega \equiv \Pi\w  \omega \mod \alpha
    \]
    is called the \textit{dimension of the prolongation} or sometimes the degree of indeterminacy. Note that $s^P$ is exactly $\dim\left(\Pi^{(1)}\right)$ from \Sec{check_tableau_example}. \\
    
    \textbf{Example:}
    Let $\Lambda$  be the $4\times 3$ matrix with non-zero entries denoted by $\lambda^a_i$. 
    Then 
    \[
        \Pi+\Lambda =\begin{bmatrix} \lambda^1_1-dp^1_1+(p^3_1u^4+p^4_1u^3+p^2_1-p^1_1)\omega^2 & 0 & 0 \\ \lambda^2_1-dp^2_1+(p^1_1-p^2_1)\,\omega^2 & 0 & 0 \\ \lambda^3_1-dp^3_1+p^3_1\,\omega^3 & \lambda^3_2-dp^3_2+p^3_2\,\omega^3 & 0 \\ \lambda^4_1-dp^4_1-p^4_1\,\omega^3 & \lambda^4_2-dp^4_2-p^4_2\,\omega^3 & 0  \end{bmatrix}
    \]
    So to have $(\Pi+\Lambda)\w \omega = \Pi\w  \omega$ in this case, it must be that 
    \[
        \Lambda = \begin{bmatrix} h^1_{11}\omega^1 & 0 & 0\\ h^2_{11}\omega^1 & 0 & 0\\ h^3_{11}\omega^1+h^3_{12}\omega^2 & h^3_{21}\omega^1+h^3_{22}\omega^2 & 0 \\
        h^4_{11}\omega^1+h^4_{12}\omega^2 & h^4_{21}\omega^1+h^4_{22}\omega^2 & 0 
    \end{bmatrix}
    \]
    where $h^3_{12}=h^3_{21}$ and $h^4_{12}=h^4_{21}$ so that when we wedge with $\omega$ these terms cancel and leave $\Pi$ unchanged. As such, we count 8 free variables $h^a_{ij}$ and hence $s^P=8$.\\
    
    We can compare this situation to \Sec{check_tableau_example}. Indeed, the dimension of $\dim\ker \hat{\Sigma}$ is precisely $s^P$ and the calculation of $\Lambda$ is equivalent to the calculation of $\ker \hat{\Sigma}$. In fact, from an abuse of notation, the $\Pi$ of \Sec{check_tableau_example} is precisely the image of the matrix of 1-forms $\Pi$ presented in this appendix when thought of as the linear map $\Pi:\mathbb{R}^6 \to \mathbb{R}^{4\times 3}$ defined by how the 1-form entries act as linear functionals on $\mathbb{R}^6$. These 1-form entries are precisely the $\pi^a_i$'s mentioned in \Sec{Pfaffian_Tableau} and defined in the language of that section by \Eq{tableau_entries_general}. \\
    
    We also remark that in \Sec{Proof} both the torsion absorption step and the dimension of prolongation calculations are carried out by analyzing a matrix $R$ as a linear map which arises from exactly $\Lambda$ and is closely related to a representation of the map $\sigma$ for the Spencer cohomology of equation \Eq{Spencer-Cohomology} as well as the map $\hat{\Sigma}$ of \Sec{Pfaffian_Torsion}. 
    \item Finally, we check the inequality
    \begin{equation}\label{eq:Cartan-Test}
        s^P\leq s_1+2\,s_2+\cdots+n\,s_{n} .
    \end{equation}
    If \Eq{Cartan-Test} is actually an \underline{equality}, then we say the tableau is \textit{involutive} and we are guaranteed that the original PDE system has local analytic solutions and that initial data for each Cauchy problem depends on successive differences of the non-zero Cartan characters. 
    
    \textbf{Example:} We have that $s_1=4$, $s_2=2$, $s_3=0$, and $s^P=8$ and so equality of \Eq{Cartan-Test} is verified. Notice that to solve the last Cauchy problem in the sequence, one needs 4 functions of 2 variables, but 2 of those 4 are already determined by the solutions to the previous Cauchy problem and hence solutions to the whole PDE depends on initial data determined by $s_2=2$ functions of 2 variables, and $s_2-s_1=2$ functions of 1 variable.
\end{enumerate}
\item[(5)] \textbf{Add Equations if Step (4) Fails:} If the inequality \Eq{Cartan-Test} is strict, then the system is not involutive, and one must now add additional $1$-forms to $\cI$ on a larger manifold of dimension at least $s^P$ larger and start the entire process over again. This is called prolongation.\footnote
{On rare occasions, one gets a false negative for failure of equality for Cartan's test. This can be a subtle matter but is usually handled by modifying the $\omega^i$'s. A more reliable method, replacing step (4a), can be found in  \cite{Olver95}.}
The Cartan-Kuranishi theorem \cite{EDS-Book} guarantees that by adding a finite number of additional differential equations (a process called prolongation) we can either express the original PDE plus the additional derivative conditions into Cauchy-Kowalevsky form, or the PDE admits no non-trivial analytic solutions. This is a landmark result in exterior differential systems theory. 
\end{enumerate}

\section{An adapted frame for a sphere bundle}\label{app:Convenient_Frame}
Suppose that $Q$ is an oriented Riemannian 3-manifold with metric $g$, and let $\Omega$ denote its Riemannian volume form. The unit sphere bundle over $Q$ is the five dimensional fiber bundle with fibers
\[
    S_{\bm{x}} Q \equiv \{\bm{b}\in T_{\bm{x}}Q \, :\, |\bm{b}|^2 = 1 \} \;.
\]
We denote the projection onto the base space by $\pi_S: SQ \to Q$.
We will build a convenient coframe on a suitable open subset $U\subset SQ$ for which the forms $\beta^1,\beta^2$ defined in \Eq{BetaForms} have relatively simple descriptions. In particular, the coframe is given by $\{\bar\omega^1, \bar\omega^2, \bar\omega^3, \bar\omega^3_1, \bar\omega^3_2\}$ whose elements have the properties
\begin{equation}\label{eq:omegaProp}
    \bar\omega^3 = \bm{b}^\flat \mbox{ and } 
    \bar\omega^1 \w \bar\omega^2 = \iota_{\bm{b}} \Omega \;,
\end{equation}
so that technically, as forms on $SQ$, we have that
\[ 
    \bm{b}^\flat (\cdot) \equiv g|_{\bm{x}} (\bm{b}, d\pi_S(\cdot)), \quad 
    (\iota_{\bm{b}} \Omega)|_{(\bm{x},\bm{b})} (\cdot,\cdot) 
    \equiv  \pi_S^* \iota_{\bm{b}}\Omega \;.
\]
The remaining coframe elements will be chosen so that the exterior derivatives (that is, the structure equations) are also relatively simple, as shown in \Lem{fExists} below.

To construct the coframe, we will consider the bundle of orthonormal frame $FQ$ as a bundle over $SQ$. A local section of this bundle identifies an open subset of $SQ$ with a submanifold of $FQ$, allowing certain canonical forms on $FQ$ to be pulled back to $SQ$. 

Let $\langle \cdot, \cdot \rangle$ be the standard inner product on $\bR^3$. The fiber of $FQ$ above $\bm{x} \in Q$ is defined as 
\begin{align*}
    F_{\bm{x}} Q \equiv \{u:\bR^3 \to T_{\bm{x}} Q \,:\,  u \text{ linear, orientation preserving,}\\
    \text{ and }
    \forall X,Y \in \bR^3, \langle X,Y\rangle  = g|_{\bm{x}}(u(X),u(Y)) \} \;.
\end{align*}
That is, $F_{\bm{x}} Q$ is the space of linear isometries between $(\bR^3,\langle \cdot,\cdot\rangle)$ and $(T_{\bm{x}} Q, g|_{\bm{x}})$. Equivalently, $F_{\bm{x}} Q$ is the space of orthonormal frames at $\bm{x}$ with the isomorphism stemming from the identification $u_i \equiv u(e_i)$ for the standard basis $e_i$ of $\bR^3$. Taking the disjoint union 
\[
    FQ \equiv \bigsqcup_{\bm{x}\in Q} F_{\bm{x}} Q \;,
\]
we have the bundle of orthonormal frames with natural projection $\pi_F:FQ \to Q$. 

There is an intrinsically defined $\bR^3$-valued $1$-form on $FQ$ called the \textit{tautological $1$-form} $\omega$ (also known as the soldering or fundamental $1$-form) \cite{Ivey16}. This form is defined pointwise for $(\bm{x},u)\in FQ$ by
\[
    \omega|_{(\bm{x},u)}(X) = u^{-1} \left( d\pi_F|_{(\bm{x},u)} X\right) \;.
\]
It has the useful property that it generates the dual orthonormal coframe to any orthonormal frame $u = (u_1, u_2, u_3)$ at a point $\bm{x}\in Q$. 
Indeed, define $\omega_u(X) \equiv \omega|_{(\bm{x},u)}(\tilde{X})$ for any $\tilde{X} \in T_{(\bm{x},u)} FQ$ such that $d\pi_F|_{(x,u)} \tilde{X} = X$. Then
\[
    \omega_u^i(u_j) = \langle e_i, \omega_u(u_j) \rangle = \langle e_i, \omega|_{(x,u)} (\tilde{u}_j) \rangle = \langle e_i, u^{-1} u_j \rangle = \langle e_i,e_j \rangle = \delta_{ij} \;.
\]
Hence $\omega_u = (\omega_u^1,\omega_u^2, \omega_u^3)$ is the dual coframe to $u$. 

More importantly, $\omega$ satisfies Cartan's structure equations 
\begin{equation} \label{eq:CartanStructure_1}
\begin{aligned}
    d\omega^i &= - \sum_{j=1}^3 \omega^i_j\w \omega^j\;, \\
    d\omega^i_j &= - \sum_{k=1}^3 \omega^i_k \w \omega^k_j + K^i_j\;,
\end{aligned} 
\end{equation}
where $\omega^i_j$ are the \textit{Levi-Civita connection} 1-forms on $FQ$, and $K^i_j$ are the curvature 2-forms on $FQ$. For a flat metric, the curvature 2-forms vanish: $K^i_j = 0$. Moreover, due to the orthonormality of the frame bundle and metric compatibility, $\omega^i_j = -\omega^j_i$. Hence, there are three linearly independent connection $1$-forms, $\omega^3_2, \omega^3_1, \omega^1_2$. Finally, it can be shown that $\omega^i, \omega^i_j$ are independent so that they form a coframing of $FQ$. 

The coframe $\{\omega^i,\omega^i_j\}$ of $FQ$ can be used to produce a coframe on $SQ$ that retains the nice properties of Cartan's structure equations \Eq{CartanStructure_1}. Define 
\[
    \Psi: FQ \to  SQ,\qquad \Psi(x,u) = (x, u(e_3)) \;.
\]
Note that $\Psi$ is well-defined because $g(u(e_3),u(e_3)) = \langle e_3, e_3 \rangle = 1$, implying that $u(e_3) \in S_{\bm{x}} Q$. The map $\Psi$ gives $FQ$ the structure of a fiber bundle with base space $SQ$ and fibers 
\[
    \Psi^{-1}(\bm{x},\bm{b}) = \{u\in F_{\bm{x}}Q \,:\, u(e_3) = \bm{b} \} \cong {\rm SO}(2) \;.
\]
This is diffeomorphic to ${\rm SO}(2)$ because orthonormal frames with $u(e_3) = \bm{b}$ have a freedom given by an oriented rotation of the 2-plane orthogonal to $\bm{b}$. 
A choice of a smooth orthonormal oriented frame $\cG(\bm{x},\bm{b}) = (\cG_1(\bm{x},\bm{b}), \cG_2(\bm{x},\bm{b}), \bm{b})$ for all $(\bm{x},\bm{b}) \in U\subset SQ$ is a choice of gauge on the set $U$. The choice of gauge can be considered a local section $\cG:U\subset SQ\to FQ$ of $\Psi$. 

For any local section $\cG:U\subset SQ\to FQ$ of $\Psi$, define
\begin{equation}\label{eq:omegabarDef}
    \bar{\omega}^i \equiv \cG^* \omega^i,\qquad \bar{\omega}^i_j = \cG^*\omega^i_j \;.
\end{equation}
The connection 1-forms for a chosen gauge are explicitly given by 
\[ 
    \bar{\omega}^i_j(X) = g\left( \bar{\nabla}_X\cG_j,\cG_i\right) \;,
\]
where, if $(E_1,E_2,E_3)$ is any local orthonormal frame on $Q$, $\nabla$ is the Levi-Civita connection, and $ \cG_j=\sum_{a=1}^3 R^a_jE_a$, then \[ \bar{\nabla}_X\cG_j \equiv \sum_{a=1}^3 X(R^a_j)E_a + R^a_j\nabla_{d\pi_S(X)}E_a. \] 
This definition is independent of the choice of the frame $(E_1,E_2,E_3)$.

The following lemma shows that a choice of gauge gives the desired coframing of $SQ$.

\begin{lemma}\label{lem:fExists}
    For any local section $\cG:U\subset SQ\to FQ$ of $\Psi$, define $\bar{\omega}^i,\bar{\omega}^i_j$ as in \Eq{omegabarDef}.
    The following properties hold:
    \begin{enumerate}
        \item The 1-forms $\bar{\omega}^1, \bar{\omega}^2,\bar{\omega}^3$ satisfy \Eq{omegaProp}: $\bar\omega^3 = \bm{b}^\flat$ and $
        \bar\omega^1 \w \bar\omega^2 = \iota_{\bm{b}} \Omega.$
        \item The 1-forms $\bar{\omega}^i, \bar{\omega}^i_j$ have structure equations
        \begin{equation}\label{eq:CartanCurvature}
        \begin{aligned}
            d\bar\omega^i &= -\sum_{j=1}^3 \bar\omega^i_j\w \bar\omega^j \;, \\
            d\bar\omega^i_j &= - \sum_{k=1}^3 \bar\omega^i_k\w \bar\omega^k_j + \bar{K}^i_j \;,
        \end{aligned}
        \end{equation}
        where $\bar\omega^i_j = -\bar\omega^j_i$ and $\bar{K}^i_j \equiv \cG^* K^i_j$. In particular, $\bar{K}^i_j = 0$ if $g$ is flat. 
        \item The 1-forms $\{\bar{\omega}^1,\bar{\omega}^2,\bar{\omega}^3,\bar{\omega}^3_1,\bar{\omega}^3_2\}$ are a coframe of $U$. Moreover, if $g$ is flat on $\pi_S(U)$ then every $(\bm{x},\bm{b}) \in U$ has a neighborhood $\tilde{U}\subset U$ and a choice of gauge $\cG:\tilde{U} \to FQ$ for which 
        \begin{equation}
            \bar{\omega}^1_2 = w_1\, \bar\omega^3_1 + w_2\, \bar\omega^3_2,
        \end{equation}
        for some functions $w_1,w_2:\tilde{U} \to \bR$.
    \end{enumerate}
\end{lemma}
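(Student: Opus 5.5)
We need to prove Lemma fExists, which has three parts: (1) the properties \Eq{omegaProp}; (2) the pulled-back structure equations; (3) the coframe property, plus the existence of a gauge with $\bar\omega^1_2$ depending only on $\bar\omega^3_1,\bar\omega^3_2$ when $g$ is flat.

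For part (1), I would compute directly with the tautological form. For $X\in T_{(\bm{x},\bm{b})}U$, we have $\bar\omega(X)=\omega(d\cG X)=\cG(\bm{x},\bm{b})^{-1}(d\pi_S X)$, using $\pi_F\circ\cG=\pi_S$. The third component is then $\langle e_3,u^{-1}d\pi_S X\rangle = g(u(e_3),d\pi_S X)=g(\bm{b},d\pi_S X)$, since $u$ is an isometry. This is exactly $\bm{b}^\flat$ as defined on $SQ$.

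Similarly, $\bar\omega^1\w\bar\omega^2\w\bar\omega^3=\pi_S^*\Omega$, because $(\bar\omega^i)$ is the pullback of an oriented orthonormal coframe. Contracting with any lift of $\bm{b}$, and using $\bar\omega^1(\bm{b})=\bar\omega^2(\bm{b})=0$ and $\bar\omega^3(\bm{b})=1$, gives $\bar\omega^1\w\bar\omega^2=\iota_{\bm{b}}\Omega$ on horizontal vectors. Both sides annihilate the vertical (fiber) directions, so the identity holds everywhere.

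Part (2) is immediate, because pullback commutes with $d$ and $\w$. Applying $\cG^*$ to \Eq{CartanStructure_1} yields \Eq{CartanCurvature}, with $\bar K=\cG^*K$, which vanishes when $g$ is flat.

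For part (3), $U$ is five dimensional, so I only need to show that the five forms are pointwise independent. The horizontal forms $\bar\omega^1,\bar\omega^2,\bar\omega^3$ already span the pullback of $T^*Q$ via $\pi_S$. It therefore remains to show that $\bar\omega^3_1$ and $\bar\omega^3_2$ restricted to the fiber $S_{\bm{x}}Q$ are independent. For a vertical vector $X=\dot{\bm{b}}$ (tangent to the sphere, so $\dot{\bm{b}}\perp\bm{b}$) we get $\bar\omega^3_j(X)=g(\bar\nabla_X\cG_j,\bm{b})$. Differentiating $g(\cG_j,\bm{b})=0$ along $X$ gives $g(\bar\nabla_X\cG_j,\bm{b})=-g(\cG_j,\dot{\bm{b}})$. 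Hence $(\bar\omega^3_1,\bar\omega^3_2)(X)=-(g(\cG_1,\dot{\bm{b}}),g(\cG_2,\dot{\bm{b}}))$, which is an isomorphism from $T_{\bm{b}}S^2=\bm{b}^\perp$ onto $\bR^2$. This establishes the coframe claim.

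The main obstacle is the gauge normalization $\bar\omega^1_2\in\mathrm{span}\{\bar\omega^3_1,\bar\omega^3_2\}$, i.e.\ killing the $\bar\omega^i$ components. My plan is to work in flat Euclidean coordinates, where $\bar\omega^i_j$ is obtained from $\cG^Td\cG$, and to choose $\cG$ depending only on $\bm{b}$ (not on $\bm{x}$). Concretely, I would take the spherical-coordinate frame from the explicit example given before \Lem{SphereBundle}, defined near any $\bm{b}$ off the poles, and rotate the ambient axes to cover the poles.

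With this choice, $d\cG$ involves only the fiber differentials $d\bm{b}$. So every $\bar\omega^i_j$ annihilates horizontal lifts, and each $\bar\omega^i_j$ is a combination of the vertical forms only. By part (3), the vertical cotangent space is spanned by $\bar\omega^3_1$ and $\bar\omega^3_2$ modulo the $\bar\omega^i$. Since $\bar\omega^1_2$ has no $\bar\omega^i$ component, we get $\bar\omega^1_2=w_1\bar\omega^3_1+w_2\bar\omega^3_2$ on a neighbourhood $\tilde U$, as required. The explicit example already confirms this, with $w_1=0$ and $w_2=\cot\theta$.
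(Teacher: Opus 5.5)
Your proposal is correct and follows the paper's proof essentially step for step. Parts (1) and (2) come from $\bar\omega=\cG^{-1}d\pi_S$ and naturality of pullback. The coframe claim comes from $\ker d\pi_S$ being the common kernel of the $\bar\omega^i$, with $\bar\omega^3_1,\bar\omega^3_2$ detecting the variation of $\bm{b}$ along the fiber. The normalization of $\bar\omega^1_2$ comes from a gauge that depends only on $\bm{b}$ relative to a parallel frame, so that every $\bar\omega^i_j$ is vertical. Your rotated spherical-coordinate frame in Euclidean coordinates is just a concrete instance of the paper's gauge $\cG(\bm{x},\bm{b})=E(\bm{x})R(\bm{b})$.
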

\begin{proof}
    Property 1 follows from the definition of $\bar{\omega}^i$, the fact that $\pi_F\circ \cG = \pi_S$, that $\pi_F^*\Omega = \omega^1\w\omega^2\w\omega^3$, and that $\cG_3(\bm{x},\bm{b}) = \bm{b}$. 
    
    Property 2 holds by the properties of the pullback, in particular, $\cG^* \circ d = d\circ \cG^*$, and Cartan's structure equations for $\omega^i,\omega^i_j$ on $FQ$.

    The common kernel of $\bar{\omega}^1,\bar{\omega}^2,\bar{\omega}^3$ is $\ker d\pi_S$, since $\bar{\omega}_{(\bm{x},\bm{b})} = \cG(\bm{x},\bm{b})^{-1}\, d\pi_S$. On the two-dimensional $\ker d\pi_S$, the 1-forms $\bar{\omega}^3_1,\bar{\omega}^3_2$ describe the variation of $\bm{b}$ in the directions orthogonal to $\bm{b}$. Therefore the five forms are independent. Since $\dim(SQ)=5$, these forms give a local coframe, and the remaining connection form $\bar\omega^1_2$ can be written as a linear combination of all five. 
    
    If $g$ is flat near $\bm p$ on $Q$, then on a neighborhood $V$ of $\bm p$, there is a local parallel oriented orthonormal frame $E(\bm{x}):\bR^3 \to T_{\bm{x}}Q$, which locally identifies $SQ$ with $V\times S^2$. Choose a local map $R:W\subset S^2 \to \rm{SO}(3)$ with $R(\bm{b})e_3 = \bm{b}$, and define the gauge $\cG:\tilde{U} \equiv V\times W \to FQ$ by
    \[  \cG(\bm{x},\bm{b})=(E(\bm{x})R(\bm{b}) e_1, E(\bm{x}) R(\bm{b}) e_2, E(\bm{x})R(\bm{b}) e_3). \]
    Since $E$ is parallel, the definition of the Levi-Civita connection forms gives 
    $ \bar{\omega}^i_j(X) = \left\langle R(\bm{b})^{-1}dR(X) \, e_j , e_i \right\rangle. $
    Since $R$ depends only on $\bm{b}$, the forms $\bar{\omega}^i_j$ vanish on vectors tangent to $V$. In particular, $\bar{\omega}^1_2$ is vertical with respect to the projection $\tilde{U}=V\times W\to V$. Differentiating $R(\bm{b})e_3=\bm{b}$ gives 
    \[ d\bm{b} = dR\,e_3 = R(\bm{b})(R(\bm{b})^{-1}dR)e_3 = -\bar{\omega}^3_1R(\bm{b})e_1 -\bar{\omega}^3_2R(\bm{b})e_2. \]
    Hence $\bar{\omega}^3_1,\bar{\omega}^3_2$ form a coframe on the vertical tangent spaces. Since $\bar{\omega}^1_2$ is vertical, there exist smooth functions $w_1,w_2$ on $\widetilde{U}$ such that $\bar{\omega}^1_2 = w_1\bar{\omega}^3_1+w_2\bar{\omega}^3_2$.
\end{proof}

\section{Pfaffian systems from PDE and EDS\label{app:PDE_to_Pfaffian}}
Here we describe two important ways in which Pfaffian systems arise in practice. In the first case we are given a system of partial differential equations and construct a Pfaffian system whose associated PDE \Eq{general_pfaffian_pde} is equivalent to the original system. In the second case we are given a so-called exterior differential system and apply prolongation to find an equivalent Pfaffian system on a higher-dimensional space.

Consider a general first-order system of $N$ partial differential equations for an unknown field $\depvar:\bR^n\rightarrow\bR^d$, 
\begin{align}\label{eq:NPDEs}
    \cF^a(\bm{x},\depvar(\bm{x}),D\depvar(\bm{x})) = 0,\quad a = 1,\dots, N,
\end{align}
where each $\cF^a$ is a scalar function.
Introduce the jet bundle $\widetilde{M} = \bR^n\times\bR^d\times\bR^{d\times n}\ni (\bm{x},\depvar,\bm{p})$, whose points represent values of $\depvar$ and its first partial derivatives. Solutions to the PDEs \Eq{NPDEs}, define a submanifold $M\subset\widetilde{M}$. There is a natural Pfaffian system defined on $M$ given by pulling back the contact $1$-forms $\widetilde{\alpha}^a = d\depvar^a - p^a_idx^i$ from $\widetilde{M}$ to $M$. The PDE \Eq{general_pfaffian_pde} associated with this Pfaffian system is equivalent to the original PDE system \Eq{NPDEs}. We can represent this Pfaffian system in the form described in \Sec{Pfaffian_Defined} as follows. Suppose in a given basis $\bR^d\times\bR^{d\times n}\ni (\depvar,\bm{p})$ may be written $\bR^\ell\times \bR^{d+d\times n-\ell}\ni (\bm{y},\bm{z})$ and assume \Eq{NPDEs} determines $\bm{z}$ as a function of $\bm{x}$ and $\bm{y}$. Then $M$ has coordinates $(\bm{x},\bm{y})\in\bR^n\times\bR^\ell$ and pulling back the $\widetilde{\alpha}^a$ to $M$ expresses the Pfaffian system in the form $\alpha^a = (\alpha^a_{\bm{x}})^Td\bm{x} + (\alpha^a_{\bm{y}})^Td\bm{y}$, $a = 1,\dots, r$, with $r = d$. A similar construction leads to a Pfaffian system formulation for a system of partial differential equations of any order.

Now consider an exterior differential system (EDS) on a manifold $M$, i.e. a finitely-generated differential ideal $\mathcal{I}$ in $M$'s exterior algebra. Denote the generating differential forms $\lambda_k$, $k=1,\dots, g$. An $n$-dimensional integral manifold is a submanifold of $M$ on which the pullback of every $\lambda_k$ vanishes. An integral $n$-element (or just integral element when $n$ is understood) at $m\in M$ is an $n$-dimensional subspace of $T_mM$ on which the pullback of every $\lambda_k(m)$ vanishes. The collection of all integral elements at $m$ is denoted $\cV_m$. The space of all integral elements over $M$, $\cV = \cup_{m\in M}\cV_m$, is generally a closed subset of the Grassman bundle $\text{Gr}(TM)$ over $M$. Suppose that $\cV_0 \subset\cV$ is a smooth submanifold of $\text{Gr}(TM)$. Pulling back the contact $1$-forms on $\text{Gr}(TM)$ to $\cV_0$ defines a Pfaffian system on $\cV_0$ called the $\cV_0$-prolongation of $\mathcal{I}$. The integral $n$-manifolds of the first prolongation are always in one-to-one correspondence with integral $n$-manifolds of the original EDS whose integral elements lie in $\cV_0$. 

\end{document}